\DeclareMathOperator*{\argmin}{arg\,min}
\DeclareMathOperator*{\diag}{diag}
\DeclareMathOperator*{\col}{col}
\DeclareMathOperator*{\row}{row}
\DeclareMathOperator*{\vect}{vec}
\DeclareMathOperator*{\st}{s.t.}
\let\Algorithm\algorithm
\renewcommand\algorithm[1][]{\Algorithm[#1]\setstretch{1}}
\newcommand{\LINEFOR}[2]{%
    \STATE\algorithmicfor\ {#1}\ \algorithmicdo\ {#2} \algorithmicend\ \algorithmicfor%
}
\DeclareMathOperator{\Tr}{Tr}
\DeclareMathOperator{\rank}{rank}
\tikzstyle{data44}=[rectangle split,rectangle split parts=2,draw,text centered]
\DeclareMathAlphabet{\mathcal}{OMS}{cmsy}{m}{n}
\tikzset{
  BarreStyle/.style =   {opacity=.3,line width=14 mm,color=#1},
  node style ge/.style={},
  node style sp/.style={},
  yl/.style={},
  arrow style mul/.style={},
}
\newtheorem{approximation}{Approximation}
\newtheorem{proposition}{Proposition}
\newtheorem{corollary}{Corollary}
\newtheorem{remark}{Remark}
\newcommand{\hlr}[1]{\textcolor{black}{#1}}
\begin{document}
%
\title{Dispatch of Virtual Inertia and Damping: Numerical Method with SDP and ADMM}

\author{Tong Han,~\IEEEmembership{}
        David J. Hill~\IEEEmembership{}
\thanks{This work was supported by the Research Grants Council of the Hong Kong Special Administrative Region through the General Research Fund under Project No. 17209419.}
\thanks{T. Han and D. J. Hill are with the Department of Electrical and Electronic Engineering, University of Hong Kong, Hong Kong (e-mail: hantong@eee.hku.hk, dhill@eee.hku.hk).}
\vspace{-20pt}
}

\markboth{In Press, International Journal of Electrical Power \& Energy Systems}%
{Shell \MakeLowercase{\textit{et al.}}: Bare Demo of IEEEtran.cls for IEEE Journals}

\maketitle

\begin{abstract}
Power grids are evolving toward 100\% renewable energy interfaced by inverters. Virtual inertia and damping provided by inverters are essential to synchronism and frequency stability of future power grids. This paper numerically addresses the problem of dispatch of virtual inertia and damping (DID) among inverters in the transmission network. The DID problem is first formulated as a nonlinear program (NLP) by the Radua collocation method which is flexible to handle various types of disturbances and bounds constraints. Since the NLP of DID is highly non-convex, semi-definite programming (SDP) relaxation for the NLP is further derived to tackle the non-convexity, followed by its sparsity being exploited hierarchically based on chordality of graphs to seek enhancement of computational efficiency. Considering high dimension and inexactness of the SDP relaxation, a feasibility-embedded distributed approach is finally proposed under the framework of alternating direction method of multipliers (ADMM), which achieves parallel computing and solution feasibility regarding the original NLP. Numerical simulations carried out for five test power systems demonstrate the proposed method and necessity of DID.

\end{abstract}

\begin{IEEEkeywords}
inverter, virtual inertia, damping, SDP relaxation, sparsity, distributed optimization, ADMM
\end{IEEEkeywords}

\IEEEpeerreviewmaketitle

\section*{\hlr{Nomenclature}}

\subsection{Scalers}
\addcontentsline{toc}{section}{Nomenclature}
\begin{IEEEdescription}[\IEEEusemathlabelsep\IEEEsetlabelwidth{$V_1,V$}]
\item[$\!\!\!\!\!\!$$V_i$] Voltage magnitude of bus $i$.
\item[$\!\!\!\!\!\!$$p_i$] Mechanical power input for $i \in \mathcal{N}_s$, power setpoint for $i \in \mathcal{N}_v$, negative of load power independent of frequency for $i \in \mathcal{N}_l$ and 0 for $i \in \mathcal{N}_o$.
\item[$\!\!\!\!\!\!$$b_{j_1, j_2}$] Susceptance of branch $(j_1, j_2)$.
\item[$\!\!\!\!\!\!$$m_i$, $d_i$] Inertia/damping coefficient of generator $i$.
\item[$\!\!\!\!\!\!$$d_{li}$] Frequency coefficient of load $i$.
\item[$\!\!\!\!\!\!$$\underline{b}_i$] Equivalent short-circuit susceptance when ignoring the short-circuit resistance.
\item[$\!\!\!\!\!\!$$J_k^i$] Component of objective function $J$ corresponding to disturbance $k$ and time element $i$. 
\item[$\!\!\!\!\!\!$$\hat{J}$] Approximation of $J$ in (P1).
\item[$\!\!\!\!\!\!$$\beta_{cf}$]  A proper large number to guarantee positive definiteness of $A_{adj} \!+\! \beta_{cf} I$.  
\item[$\!\!\!\!\!\!$$\tilde{\rho} > 0$ ] The penalty parameter.
\item[$\!\!\!\!\!\!$$\sigma_{\mathcal{C}_{ej}}^1$] The 1-th singular value of matrix $\hat{\bm{Z}}_{\mathcal{C}_{ej}}^{(\kappa + 1)}  + \frac{1}{\tilde{\rho}}  \hat{\bm{\Lambda}}_{\mathcal{C}_{ej}}^{(\kappa)}$.  
\item[$\!\!\!\!\!\!$$r^{\kappa}$, $s^{\kappa}$] Primal and dual residuals at iteration $\kappa$, respectively
\item[$\!\!\!\!\!\!$$\epsilon^{\mathrm{abs}}, \epsilon^{\mathrm{rel}}\!$] Absolute tolerance and relative tolerance.
\item[$\!\!\!\!\!\!$$\varphi_s, \tilde{\varphi}_s$ ]  Slack variables. 
\item[$\!\!\!\!\!\!$$\bm{Z}_{j,(1,2)}^{\text{md}}$] The entry in the 1-th row and 2-th column of $\bm{Z}_{j}^{\text{md}}$.  
\end{IEEEdescription}

\subsection{Vectors}
\addcontentsline{toc}{section}{Nomenclature}
\begin{IEEEdescription}[\IEEEusemathlabelsep\IEEEsetlabelwidth{$V_1,V$}]
\item[$\!\!\!\!\!\!$$\theta$] Phase angle of all buses.
\item[$\!\!\!\!\!\!$$\omega$] Angular frequency of all generators.
\item[$\!\!\!\!\!\!$$\omega_{t_0}$, $\theta_{t_0}$] Initial values of $\omega^k$ and $\theta^k$.
\item[$\!\!\!\!\!\!$$\underline{\omega}^k$, $\overline{\omega}^k$] Lower/upper frequency bound.
\item[$\!\!\!\!\!\!$$\overline{\delta}$] Upper bound of angle differences.
\item[$\!\!\!\!\!\!$$p_k$] $\text{col}(p_i^k) \in \mathbb{R}^{n_a}$ with $p_i^k$ being $p_i$ for disturbance $k$.
\item[$\!\!\!\!\!\!$$\underline{p}_g$, $\overline{p}_g$] Lower/upper active power limit of generators.
\item[$\!\!\!\!\!\!$$\underline{m}$, $\overline{m}$] Lower/upper bound of inertia coefficients.
\item[$\!\!\!\!\!\!$$\underline{d}$, $\overline{d}$] Lower/upper bound of damping coefficients.
\item[$\!\!\!\!\!\!$$\bm{\theta}_i^{kT}\!\!$, $\!\bm{\omega}_i^{kT}$] $(\bm{\theta}_i^{k})^T$, $(\bm{\omega}_i^{k})^T$.
\item[$\!\!\!\!\!\!$$\underline{\omega}^k_{ir}$, $\overline{\omega}^k_{ir}$] $\underline{\omega}^k(t_{i-1} + \tau_r h_i^k)$, $\overline{\omega}^k(t_{i-1} + \tau_r h_i^k)$.
\item[$\!\!\!\!\!\!$$p^k_{ir}$, $\tilde{p}_{ir}^k$] $p^k(t_{i-1} + \tau_r h_i^k)$, $\tilde{p}^k(t_{i-1} + \tau_r h_i^k)$.
\item[$\!\!\!\!\!\!$$\bm{\theta}$, $\bm{\omega}$] $\col(\bm{\theta}_i^k|_{k \in \mathcal{D}, i \in \mathbb{T}^k})$, $\col(\bm{\omega}_i^k|_{k \in \mathcal{D}, i \in \mathbb{T}^k})$.
\item[$\!\!\!\!\!\!$$\theta_{0r}^k$, $\omega_{0r}^k$] Constants equal to $\theta_{t_0}$ and $\omega_{t_0}$.
\item[$\!\!\!\!\!\!$$\bm{x}$] $\text{col}(M \mathbbm{1}, D \mathbbm{1},\cdots\!, \bm{l}_{\text{d} (i-1)}^k, \bm{\theta}_i^k, \bm{\omega}_i^k, \bm{l}_{\text{m} i}^k, \bm{l}_{\text{d} i}^k, \bm{\theta}_{i+1}^k, $ $\cdots |_{k \in\! \mathcal{D}, i\in \mathbb{T}^k} )$.
\item[$\!\!\!\!\!\!${$[\bm{x}]_i^k$}] Sub-vector of $\bm{x}$ related to disturbance $k$ and time element $i$, i.e., $[\bm{x}]_i^k \!=\! \text{col}(\!M \mathbbm{1}, D \mathbbm{1}, \bm{\theta}_i^k, \bm{\omega}_i^k, \bm{l}_{\text{m} i}^k, \bm{l}_{\text{d} i}^k)$. 
\item[$\!\!\!\!\!\!$$\bm{\alpha}, \bm{\beta}, \bm{\varsigma}$] $\col(\bm{\alpha}_{(r,\imath)i}^k)$, $\col(\bm{\beta}_{(r,\imath)i}^k)$, $\col(\bm{\varsigma}_{(r,\imath)i}^k)$, with $k \in \mathcal{D}$, $i \in \mathbbm{T}^k$, $r \in \{0,...,n_c\}$, $\imath \in \mathcal{B}$, $\bm{\alpha}_{(r,\imath)i}^k \in \mathbb{R}^4$, $\bm{\beta}_{(r,\imath)i}^k \in \mathbb{R}^3$ and $\bm{\varsigma}_{(r,\imath)i}^k \in \mathbb{R}^3$. 
\item[$\!\!\!\!\!\!$$\diag(\!\zeta\!)_i$] Arbitrary disjoint sub-vector of $\diag(\!\zeta\!)$ satisfying $[\diag(\zeta)_1^T\!,\! \diag(\zeta)_2^T,\!...,\!\diag(\zeta)_{n_\mathrm{d}}^T] \!\!\!=\!\! \diag(\zeta)^T$.
\item[$\!\!\!\!\!\!${$\!\mathrm{upper}(\!\zeta\!)_i$}] Analogous to $\diag(\zeta)_i$.
\item[$\!\!\!\!\!\!$$u_{\mathcal{C}_{\!ej}}^1\!, {v_{\mathcal{C}_{\!ej}}^{1}}\!$] The left and right singular vectors corresponding to $\sigma_{\mathcal{C}_{ej}}^1$, respectively.   
\end{IEEEdescription}

\subsection{Matrices}
\addcontentsline{toc}{section}{Nomenclature}
\begin{IEEEdescription}[\IEEEusemathlabelsep\IEEEsetlabelwidth{$V_1,V$}]
\item[$\!\!\!\!\!\!$$M\!$, $\!\!D\!$, $\!\!D_l$] $\text{diag}(m_i|_{i \in \mathcal{N}_g})$, $\text{diag}(d_i|_{i \in \mathcal{N}_g})$, $\text{diag}(d_{li}|_{i \in \mathcal{N}_l})$.
\item[$\!\!\!\!\!\!$$B$] $\text{diag}(V_{j_1}V_{j_2}b_{j_1, j_2}|_{(j_1, j_2) \in \mathcal{B}})$.
\item[$\!\!\!\!\!\!$$E_g$, $E_l$] Incidence matrices showing the relationship between $\mathcal{N}_g$ and $\mathcal{N}$, and $\mathcal{N}_l$ and $\mathcal{N}$.
\item[$\!\!\!\!\!\!$$E_o$, $E_n$] Incidence matrices showing the relationship between $\mathcal{N}_o$ and $\mathcal{N}$, and $\mathcal{N}$ and $\mathcal{B}$.
\item[$\!\!\!\!\!\!$$\Omega_i^k$, $\Theta_i^k$] Collocation coefficient matrices for profile of $\omega^k$ and $\theta^k$ at time element $i$.
\item[$\!\!\!\!\!\!$$E_{gl}$,$\check{M}$,$\check{D}$] $\text{col}(E_g, E_l)$, $\text{diag}(M,...,M)$, $\text{diag}(D,...,D)$.
\item[$\!\!\!\!\!\!$$B_{ir}^k$] $B^k(t_{i-1} + \tau_r h_i^k)$.
\item[$\!\!\!\!\!\!$$\bm{\Lambda}_{0i}^k$, $\bm{\Lambda}_{1i}^k$] $\diag\left( \cos ( \bm{A}_{1i}^k \bm{\theta}_0^k) \right)$, $\sin ( \bm{A}_{1i}^k \bm{\theta}_0^k) - \bm{\Lambda}_{0i}^k \bm{\theta}_0^k$.
\item[$\!\!\!\!\!\!$$\bm{\ell}_{\omega j}( \tau_r)$] Equal to $\bm{\ell}_{\omega}( \tau_r) )$ with only the row corresponding to $d_j$ remained and others replaced by 0.
\item[$\!\!\!\!\!\!$$\bm{O}_{(r,j)}^1$] Matrix in $\mathbb{R}^{n_g \times n_g (n_c + 1)}$ with the element corresponding to $d_j$ and angular speed of generator $j$ in $\bm{\omega}_{ir}^k$ being 1 and other being 0.
\item[$\!\!\!\!\!\!${$[\bm{X}]_i^k$}] Principal submatrix of $\bm{X}$ related to disturbance $k$ and time element $i$, given as $[\bm{x}]_i^{k} [\bm{x}]_i^{kT}$.
\item[$\!\!\!\!\!\!$$\tilde{\bm{Q}}_i^{k}$]  Sub-matrix of $\bm{Q}_i^{k}$ by removing the last 2 block rows.
\item[$\!\!\!\!\!\!$$\tilde{\bm{P}}_1 $, $\bm{A}_{8i}^k$]  $\frac{1}{2}[[O, I],[I, O]]$, $\diag(\col(1,0,-1))$.
\item[$\!\!\!\!\!\!$$\tilde{\bm{A}}_i^k, \tilde{\bm{b}}_i^k$]  Sub-matrices of ${\bm{A}}_i^k$ and ${\bm{b}}_i^k$, by removing the 3th and 4th block rows, respectively.
\item[$\!\!\!\!\!\!$$\bm{P}_{(\tilde{r}, \imath)i}^k$] With the same block structure as $\bm{P}_{(\hbar, r,j) i}^k$, which in block 3-tuple form, is given by $(\bm{\theta}_i^k, \bm{\theta}_i^k,  \frac{1}{2}\vartheta \bm{A}_{1i}^{k T} O_{(r,j)}^1 \bm{A}_{1i}^{k})$.
\item[$\!\!\!\!\!\!$$O_{(\tilde{r},\imath)}^1$]  Matrix in $\mathbb{R}^{n_b(n_c + 1) \times n_b(n_c + 1)}$ with diagonal elements corresponding to the $r$-th time element and branch $\imath$ being 1 and others being 0.
\item[$\!\!\!\!\!\!$$\bm{A}_{9i}^k$, $\!\!\bm{A}_{10i}^k$]  $\diag(\col(\pi \vartheta, \frac{\sin \theta_b}{\theta_b}, \pi \vartheta)) \bm{A}_{1i}$, $\diag(\theta_{cb}^T)$.
\item[$\!\!\!\!\!\!$$\tilde{\bm{P}}_2$] $[[I, O],[O, O]]$.
\item[$\!\!\!\!\!\!$$\mathcal{S}_{\mathcal{C}}(\bm{Z})$]  Principal submatrix of $\bm{Z}$ defined by the index set $\mathcal{C}$ or a new defined matrix for the principal submatrix. 
\item[$\!\!\!\!\!\!$$A_{adj}$]  Adjacent matrix of $\mathcal{G}_n$. 

\item[$\!\!\!\!\!\!$$\bm{P}_{\mathcal{C}_{ej}}$] Matrix satisfying $ \sum_{ \mathcal{C}_{ej} \!\in\! \mathcal{K}_i^k \cup \tilde{\mathcal{K}}_i^k } \!\!\Tr(\!\bm{P}_{\!\mathcal{C}_{ej}} \! \hat{\bm{Z}}_{\!\mathcal{C}_{ej}}\!) \!\!\!=\!\!\! \Tr(\!\bm{P}_{0i}^k [\!\bm{X}\!]_i^k\!)$.

\item[$\!\!\!\!\!\!$$\bm{P}_{\mathcal{C}_{ej}}^{\gamma}$] Matrix to make constraints (\ref{eq-sparsity-7:2}) equivalent to (\ref{eq-sdp-1:2}), (\ref{eq-sdp-1:4}), (\ref{eq-sdp-4:2}) to the first equation of (\ref{eq-sdp-4:4}) and (\ref{eq-sdp-4:5-1}) to (\ref{eq-sdp-4:5-4}). 
\item[$\!\!\!\!\!\!$$\mathcal{O}_s$, $\!\mathcal{O}'_s$] Properly sized matrix vectors consisting of zero matrices. 
\item[$\!\!\!\!\!\!$$\mathcal{Z}_\text{a}$] The matrix vector consisting of all auxiliary matrix variables. 
\item[$\!\!\!\!\!\!$$\hat{\mathcal{Z}}_s$] $ [\hat{\bm{Z}}_{\mathcal{C}_{ej}}]^T$ with $\mathcal{C}_{ej} \in \mathcal{K}_i^k$ and $(k,i) \in \Xi_s $ 
\end{IEEEdescription}

\subsection{Sets}
\addcontentsline{toc}{section}{Nomenclature}
\begin{IEEEdescription}[\IEEEusemathlabelsep\IEEEsetlabelwidth{$V_1,V$}]
\item[$\!\!\!\!\!\!$$\mathcal{B}$] Terminal bus pairs of all branches, and $|\mathcal{B}| = n_b$.

\item[$\!\!\!\!\!\!$$\mathbb{S}^{|\mathcal{V}|}_{\mathcal{E}}$]  The set of symmetric $|\mathcal{V}| \times |\mathcal{V}|$ matrices with only entries in $\mathcal{E}$ specified.
    
\item[$\!\!\!\!\!\!$$\mathcal{I}(\cdot, \cdot)$] The set of row (or column) index of $\bm{Z}$ corresponding to that in parentheses. For example, $\mathcal{I}(\bm{\theta}_i^k, \cdot)$ denotes the set of row index of $\bm{Z}$ corresponding to $\bm{\theta}_i^k$, $\mathcal{I}(\bm{\theta}_i^k, j)$ denotes the set of row index of $\bm{Z}$ corresponding to $\bm{\theta}_i^k$ and bus (or generator) $j$ or collocation point $j$, $\mathcal{I}(-1)$ represents the index of the last row of $\bm{Z}$, and  $\mathcal{I}(k, i)$ represents the set of row index of $\bm{Z}$ only corresponding to disturbance $k$ and time element $i$,i.e., $\mathcal{I}(k, i) = \mathcal{I}(\bm{\theta}_i^k, \cdot) \cup \mathcal{I}(\bm{\omega}_i^k, \cdot) \cup \mathcal{I}(\bm{l}_{\text{m}i}^k, \cdot) \cup \mathcal{I}(\bm{l}_{\text{d}i}^k, \cdot)$.

\item[$\!\!\!\!\!\!$$\text{ch}( \mathcal{C}_{ej} )$] Children of clique $\mathcal{C}_{ej}$ in clique tree $\mathcal{T}$.
\item[$\!\!\!\!\!\!$$\mathcal{K}^j$] $\{\mathcal{C}_{ej}|\mathcal{C}_{ej} \!\in\! \bigcup_{k \in \mathcal{D}, i\in \mathbbm{T}^k} \mathcal{K}_{e1}^{ki}\} $ with $j \!\in\! \mathcal{N}_g$.
\item[$\!\!\!\!\!\!$$I_j, I'_j$] Set of indices corresponding to the intersection between cliques $\mathcal{C}_{ej}$ and $\mathcal{C}'_{ej}$, for $\hat{\bm{Z}}_{\mathcal{C}_{ej}}$ and $\hat{\bm{Z}}_{\mathcal{C}'_{ej}}$, respectively.
\item[$\!\!\!\!\!\!$$I_\omega$, $I'_\omega$] Sets of indices consisting of $-1$ and the index corresponding to $\omega_{i0}^k$ of generator $j$, $-1$ and the index corresponding to $\omega_{i-1,n_c}^k$ of generator $j$. 
\item[$\!\!\!\!\!\!$$I_\theta$, $\!I'_\theta$] Sets of indices consisting of $-1$ and the index corresponding to $\theta_{i0}^k$ of buses $\mathcal{C}_{nj}$, $-1$ and the index corresponding to $\!\theta_{i-1,n_c}^k\!\!$ of buses $\!\mathcal{C}_{nj}$. 
\item[$\!\!\!\!\!\!\!$$\max^\circ\!\!$, $\!\min^{\!\circ}$] Maximum/minimum values of each continuous part of an integer set, e.g., $\text{max}^\circ \{1,2,4,5,9\} \!\!=\!\! \{2,5,9\}$ and $\text{min}^\circ \{1,2,4,5,9\} \!\!=\!\! \{1,5,9\}$. 
\item[$\!\!\!\!\!\!$$\Im(\Xi_s)$] An arbitrary subset of $\Xi_s$ with one element.
\item[$\!\!\!\!\!\!$$\tilde{\mathcal{M}}$, $\underline{\mathcal{M}}$]$ \bigcup_{s \in \mathbbm{P}} \tilde{\mathcal{M}}_s$, $\bigcup_{s \in \mathbb{P}} \underline{\mathcal{M}}_s$.
\item[$\!\!\!\!\!\!$$\mathbb{Y}_s$] The feasible region of $\mathcal{Y}_s$, i.e., $\{ \mathcal{Y}_s | \rank( \mathcal{Y}_s)  \leq 1 \}$.
\item[$\!\!\!\!\!\!$$\mathbb{Z}_{s}$]  The feasible region defined by inner constraints with $(k,i) \!\in\! \Xi_s$ and coupling constraints $(k,i) \!\in\! \Xi_s$ satisfying $(k,i-1) \in \Xi_s$ or $i=1$.
\item[$\!\!\!\!\!\!$$\mathbb{R}_\text{a}$] The feasible region of $\mathcal{Z}_\text{a}$. 
\end{IEEEdescription}

\vspace{10pt}

\section{Introduction}

\IEEEPARstart{F}{uture} power grids will feature high penetration of renewable energy to mitigate global climate change. Inverter-interfaced generation will increasingly substitute for conventional synchronous generators, which however, can cause critical synchronism and frequency stability problems due to low, time-varying and heterogeneous system inertia \cite{4-999, 4-999-90}. Inertia emulation and droop control have been proposed as effective remedies to improve the synchronism and frequency performance of future power grids \cite{4-999-128, 4-999-182}.

Differently to synchronous generators whose inertia and damping are inherent physical properties, virtual inertia and damping of inverters are control parameters of control loops and thus are tunable. Following conventional power grids, inertia and damping coefficients would be set as constants. However, due to high heterogeneity in operating conditions of future power grids causing by wide variations on both generation and demand sides, a fixed setting of virtual inertia and damping is potentially unable to guarantee synchronism and frequency stability under all operating conditions. More importantly, system performances with the fixed setting would probably be far from the optimal with inertia and damping being different for each operating condition.


Naturally inspired by economic dispatch in power grids, we define the concept of \textit{dispatch of virtual inertia and damping} (DID): short-term determination of the optimal distribution of virtual inertia and damping among a number of inverters, to improve the synchronism and frequency performance, at the lowest control efforts, subject to operational constraints. Compared to economic dispatch with the aim to determine the steady-state optimal active power generation in terms of economic (generation cost), the aim of DID is to determine the optimal active power generation during dynamic processes mainly regarding security (synchronism and frequency stability). It is pointed out that some novel control mechanisms for the fast dynamics have been also developed as an improved alternative to the virtual inertia solution, such as the dynam-i-c droop control \cite{4-630, 4-994}. Nevertheless, while there is still substantial legacy generation and controls, the DID process will be useful for years to come.

Although the DID problem has not been explicitly proposed previously, researchers have explored some analogues of DID recently \cite{4-521, 4-520,  4-138, 4-239, 4-522,  4-236, 4-999-208,  4-237, 4-519}. These studies, known as placement or design of virtual inertia and primary control, mainly focus on optimization of virtual inertia and damping from a planning perspective. Whether from the operating or planning perspective, essentially the problem is the same, that is, an optimal control problem for time-invariant system parameters, i.e., inertia coefficients and damping coefficients. Five aspects are of concern, namely disturbances, system dynamic models, performance metrics, constraints and solution approaches. For the first aspect, power-step disturbances \cite{4-520, 4-138, 4-239, 4-522, 4-236, 4-237}, power-impulse disturbances \cite{4-138, 4-999-208} and unit-variance stochastic white noise power disturbances \cite{4-999-208} were considered to model faults such as generator outages, fluctuations in renewable generation and load steps. For system dynamic models, all existing work applied linear dynamic system models by linearization \cite{ 4-520, 4-138, 4-236, 4-237} or utilizing DC power flow \cite{4-522, 4-999-208}. Inverter dynamics were modelled with different fidelity, from the system-level \cite{4-520, 4-239, 4-999-208, 4-237} to the device-level \cite{4-138, 4-522,4-236}. Dependency of loads on frequency was considered in most work \cite{4-239, 4-522, 4-236, 4-999-208, 4-237}. For measuring system performance, different metrics or their combination were used, including the damping ratio \cite{4-236, 4-237}, the frequency overshoot \cite{4-236, 4-237}, the frequency nadir \cite{4-138, 4-522}, the Rate of Change of Frequency (RoCoF) \cite{4-138, 4-522, 4-236}, and $\mathcal{H}_2$ norms of linear systems with time-integrated quadratic forms in the voltage angle, frequency deviation, RoCoF or control efforts \cite{4-520, 4-138, 4-522, 4-999-208}. Regarding constraints, approximate inverter power limits \cite{4-138,4-522, 4-236}, inertia and damping coefficient bounds \cite{4-138, 4-239, 4-522, 4-999-208, 4-237} and constraints on some of the performance metrics \cite{4-236, 4-237} were considered. The above four aspects determine the problem formulation which is generally a non-convex and large-scale optimization model. Gradient-based optimization methods \cite{4-138, 4-239, 4-522, 4-999-208} and sequential linear programming methods \cite{4-236, 4-237} were applied to find a local optimum. Analytical solutions of the optimization model were also derived under restrictive assumptions \cite{4-999-208, 4-520}.

Existing studies are incomplete in two aspects. Firstly, the linear power flow model and time-invariant bound constraints were adopted in most previous formulations. Therefore, these formulations are incapable of handling some practical issues, e.g., time-varying frequency bounds and system performance under large disturbances. Secondly, most solution approaches neglected the non-convexity of optimization models, which probably produce inferior optimal solutions regarding global optimality. In light of the above incompleteness, this paper addresses the DID problem from a numerical perspective, aiming to develop a flexible DID formulation and a computationally efficient solution method to obtain near-globally optimal solutions. Specifically, the main contributions of this paper are fourfold: \textit{(\romannumeral1)} a nonlinear programming (NLP) formulation of DID is developed by the Radua collocation method, where various forms of disturbances and constraints can be easily taken into account; \textit{(\romannumeral2)} semi-definite programming (SDP) relaxation for the NLP of DID is derived to address non-convexity of the NLP model; \textit{(\romannumeral3)} to improve computational efficiency, sparsity in the SDP relaxation is exploited hierarchically based on the clique decomposition approach and the proposed proposition regarding chordality of graphs; and \textit{(\romannumeral4)} under the framework of alternating direction method of multipliers (ADMM), a feasibility-embedded distributed approach is proposed to solve the SDP relaxation for DID in parallel and with solution feasibility to the original NLP being guaranteed. 

\begin{figure}[h]
    \centering
    \includegraphics[scale=1.12]{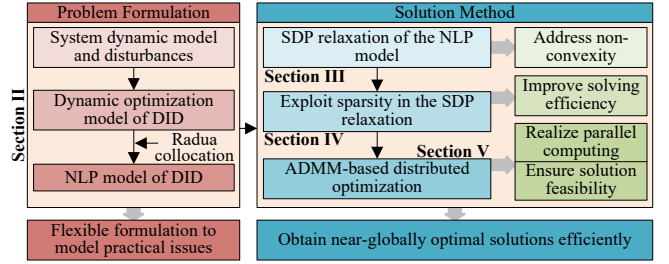}
    \caption{\hlr{The organization and purpose of Section \ref{section-1} to Section \ref{section-4}}}
    \label{fig-2-1-1}
\end{figure}

\hlr{The rest of this paper contains Section$\!$ \ref{section-1} to Section$\!$ \ref{section-4} $\!$organized as shown in Fig.$\!$ \ref{fig-2-1-1} which also gives the purpose of each section, Section$\!$ \ref{section-5} giving the results of DID of five test systems with our proposed model and solution approach, and Section$\!$ \ref{section-6} $\!$making a conclusion and a prospect for future works.}

\section{Problem Formulation}\label{section-1}

In this section, system dynamic models and types of disturbances adopted in the DID problem are first introduced. Then, the dynamic optimization model of the DID problem is formulated, which is further transcribed into a tractable finite-dimensional NLP by the Radua collocation method.

\subsection{System Dynamic Model and Disturbances}\label{sec-6-dynamic-model}

Consider a transmission network that consists of $n_g$ generation buses including $n_s$ synchronous generators and $n_v$ grid-forming inverters, $n_l$ load buses and $n_o$ buses with neither generator nor load. Denote by $\mathcal{N}_g$, $\mathcal{N}_s$, $\mathcal{N}_v$, $\mathcal{N}_l$ and $\mathcal{N}_o$ the sets of these five types of buses, respectively. The set of all the buses is denoted by $\mathcal{N}$ with $|\mathcal{N}|= n_a$. We further divide $\mathcal{N}_v$ into $\mathcal{N}_{v_o}$, $\mathcal{N}_{v_d}$, $\mathcal{N}_{v_m}\!$ and $\!\mathcal{N}_{v_{dm}}$, which contain $n_{v_o}$ inverters with fixed damping and inertia, $n_{v_d}$ inverters with adjustable damping but fixed inertia, $n_{v_m}$ inverters with adjustable inertia but fixed damping, and $n_{v_{dm}}$ inverters with adjustable damping and inertia, respectively.

\textit{Dynamic Model}. The structure-preserving model is widely employed for dynamic analysis of transmission networks with only synchronous generators and frequency-dependent loads \cite{4-999-102}. By further embedding grid-forming inverter dynamics which are essentially the same as that of synchronous generators in the context of this work \cite{4-138}, and algebraic equations for buses in $\mathcal{N}_o$, system dynamics can be modelled by the matrix-form semi-explicit DAEs as follows:
\begin{equation}\label{eq-model-1}
    \left\{
    \begin{aligned}
        & E_g \dot{\theta} = \omega \\
        & \dot{\omega} \!=\! - {M}^{-1} D \omega \!+\! {M}^{-1} E_g \tilde{p} -  {M}^{-1} E_g  E_{n} B \sin(E_{n}^T \theta)  \\
        & E_l \dot{\theta} = D_l^{-1} E_l \tilde{p} - D_l^{-1} E_l  E_{n} B \sin(E_{n}^T \theta)\\
        & 0 = E_o \tilde{p} -  E_o  E_{n} B \sin(E_{n}^T \theta)
    \end{aligned}
    \right.
\end{equation}
where $\tilde{p} \!=\! \text{col}( p_i \!-\! V_i^2 \underline{b}_i) \!\in\! \mathbb{R}^{n_a}$ with $i \!\in\! \mathcal{N}$ and $V_i$ assumed to be constant in the study of angle stability and frequency stability.

\textit{Disturbances}. Four typical types of disturbances, including the power-step disturbance, power-ramp disturbance, power-fluctuation disturbance and three-phase short circuit are considered for DID. The disturbance set is denoted by $\mathcal{D} = \mathcal{D}_1 \cup  \mathcal{D}_2 \cup \mathcal{D}_3 \cup \mathcal{D}_4$, with $\mathcal{D}_1$ to $\mathcal{D}_4$ representing the sets of each type of disturbances, respectively. 


\subsection{DO Formulation of DID}\label{sec-6-DID-formulation}
Considering the disturbances $\mathcal{D}$ and system dynamics over finite-time horizon $t \!\!\in\!\! [t_0^+, t_f]$, the DID problem can be formulated as the dynamic optimization (DO) model as follows:
\begin{subequations}\label{eq-DID-model-1}
	\begin{align}
        (\text{P}1)~{\min_{M, D, \theta^k, \omega^k}} ~&  J  = \sum_{k \in \mathcal{D}} J^k \label{eq-DID-model-1:1}\\ 
	\st_{\forall k \in \mathcal{D}}
    ~& \text{Eq.} (\ref{eq-model-1}) \text{~for~} M, D, \theta^k, \omega^k, B^k, \tilde{p}^k
    \label{eq-DID-model-1:2}\\
	~& \omega^k(t_0) = \omega_{t_0} \label{eq-DID-model-1:3}  \\
    ~& E_{gl} \theta^k(t_0) = E_{gl} \theta_{t_0} \label{eq-DID-model-1:4} \\
    ~& \underline{\omega}^k \leq \omega^k  \leq \overline{\omega}^k   \label{eq-DID-model-1:5} \\
    ~& - \overline{\delta} \leq E_n^T \theta^k  \leq \overline{\delta}   \label{eq-DID-model-1:5-1} \\
    ~& \underline{p}_g \leq E_g p^k - M \dot{\omega}^k - D \omega^k \leq \overline{p}_g  \label{eq-DID-model-1:6}\\
    ~& \underline{m} \leq M\mathbbm{1} \leq \overline{m} \label{eq-DID-model-1:7}\\
    ~& \underline{d} \leq D\mathbbm{1} \leq \overline{d} \label{eq-DID-model-1:8}
	\end{align}
\end{subequations}
with
\begin{equation}
    J^k =  \int_{t_0}^{t_f} \left(L_p(\theta^k, \omega^k)  + L_e(M, D, \theta^k, \omega^k)  \right) \text{d} t 
\end{equation}
\vspace{-5pt}
\begin{equation}
    L_p(\theta^k, \omega^k) = \theta^{k T} W_1^k  \theta^{k} + \omega^{k T} W_2^k  \omega^{k} + \dot{\omega}^{k T} W_3^k \dot{\omega}^{k}
\end{equation}
\begin{equation}
    L_e(M\!,\! D\!,\! \theta^k\!,\! \omega^k) \!\!=\!\! (\!M \dot{\omega}^k \!\!+\!\! D \omega^k \!)^{\!T} \!W_{\!4}^{\!k} \!(\!M\! \dot{\omega}^k \!\!+\!\! D \omega^k ) \!+\! \dot{\theta}^{k T} \!W_{\!5}^{\!k} \dot{\theta}^{k}
\end{equation}
where $J$ is an integral cost function consisting of system performance term $L_p$ and control effort term $L_e$ for each disturbance; $L_p$ and $L_e$ are both in squared $\mathcal{H}_2$-norm forms, in which $L_p$ measures phase cohesiveness, frequency boundedness and frequency oscillation, and $L_e$ measures control efforts of generators and loads; (\ref{eq-DID-model-1:2}) are system DAE constraints, (\ref{eq-DID-model-1:3}) and (\ref{eq-DID-model-1:4}) are initial value constraints for differential variables; (\ref{eq-DID-model-1:5}) are system frequency constraints; (\ref{eq-DID-model-1:5-1}) are transient stability constraints; (\ref{eq-DID-model-1:6}) are generator power constraints; (\ref{eq-DID-model-1:7}) and (\ref{eq-DID-model-1:8}) are bound constraints for inertia and damping coefficients, respectively. In addition, superscript $k$ indicates that parameters or variables correspond to disturbance $k$; $W_1^k = \rho_k E_n^T W_1 E_n$, $W_2^k = \rho_k W_2$, $W_3^k = \rho_k W_3$, $W_4^k = \rho_k W_4$, $W_5^k = \rho_k E_l^T D_l W_5 D_l E_l $, with $\rho_k $ being the weight coefficient of disturbance $k$ and $W_1$ to $W_5$ being diagonal weight matrices; $\underline{\omega}^k$ and $\overline{\omega}^k$ are determined by grid codes, which generally disturbance type-dependent and piecewise-constant \cite{4-243}; $\underline{m}_i = \overline{m}_i$ for $i \in \mathcal{N}_{v_o} \cup \mathcal{N}_{v_d}$ and $\underline{d}_i = \overline{d}_i$ for $i \in \mathcal{N}_{v_o} \cup \mathcal{N}_{v_m}$. Note that $\theta^k$, $\omega^k$, $B^k$, $\tilde{p}^k$, $p^k$, $\underline{\omega}^k$ and $\overline{\omega}^k$ are all time-variant variables or parameters.

\hlr{
\begin{remark}
    Since this work focuses more on solution approach for the DID model, we adopt simple system dynamic models and ignore some practical issues in the DID model. For example, the governor dynamics of synchronous generators are ignored, and limits on the maximal value of the rate of change of frequency and total energy released by inverters in response to a disturbance are not considered. However, the proposed approach in the following is still applicable to the case where these practical issues are considered since they generally cause no new non-convexity and nonlinearity.
\end{remark}
}

\subsection{NLP Formulation of DID} \label{sec-nlp-1}
The DO formulation of DID, being intractable to be solved directly, is further transcribed into a finite-dimensional NLP by collocation methods. Seeing that slow dynamics of synchronous generators and fast dynamics of inverters with small inertia coefficients can constitute stiff DAEs, Radua collocation is employed in our work \cite{4-241}. This method corresponds to the fully implicit Runge-Kutta method and has similar high-order accuracy and stability properties. For stiff power grids, Radua method is L-stable. Furthermore, in Radua collocation, endpoints are collection points, which allows constraints to be set easily at the end of each element \cite{4-241}.

Specifically, for each disturbance $k \in \mathcal{D}$, the time interval $[t_0, t_f]$ is first divided into $n_t^k$ finite elements of length $h_i^k$, denoted as $\mathbb{T}^k = \{1,...,n_t^k\}$, such that $\sum_{i=1}^{n_t^k} h_i^k =t_f - t_0$. Then the solution of DAEs can be approximated by Lagrange polynomials over each element. For any given $k \in \mathcal{D}$ and $i \in \mathbb{T}^k$, we have
\begin{equation}\label{eq-collocation-1}
	\left.
        \begin{aligned}
            & t = t_{i-1} + h_i^k \tau\\
            & \hat{\omega}^k (t) = {\Omega}_i^k \ell(\tau) = {\ell}_{\omega}(\tau) \bm{\omega}_i^k  \\
            & \hat{\theta}^k (t) = {\Theta}_i^k \ell(\tau) = {\ell}_{\theta}(\tau) \bm{\theta}_i^k
        \end{aligned}
    \right\}~~~~ t \in [t_{i-1}, t_i], \tau \in [0,1]
\end{equation}
where $\hat{\omega}^k(t) \in \mathbb{R}^{n_g}$ and $\hat{\theta}^k(t) \in \mathbb{R}^{n_a}$ are vectors of piecewise polynomial with $(n_c + 1)$ degree, approximating $\omega^k(t)$ and $\theta^k(t)$, respectively; $\Omega_i^k \!=\! \text{row} (\omega_{ij}^{k}) \!\in\! \mathbb{R}^{n_g \times (n_c \!+\! 1)}$ and $\Theta_i^k \!=\! \text{row} (\theta_{ij}^{k}) \!\in\! \mathbb{R}^{n_a \times (n_c + 1)}$, with $j \in \{0,..., n_c\}$; $\bm{\omega}_i^k = \col(\omega_{ij}^k) \in \mathbb{R}^{n_g(n_c + 1)}$ and $\bm{\theta}_i^k = \col(\Theta_{ij}^k) \!\in\! \mathbb{R}^{n_a(n_c + 1)}$ with $j \in \{0,...,n_c\}$, are vectorization of collocation coefficient matrices; $\ell(\tau) \!=\! \text{col}(\ell_j(\tau)) \in \mathbb{R}^{n_c + 1} $ with $j \in \{0,...,n_c\}$ and $\ell_j(\tau)$ being the Lagrange polynomial with order $(n_c + 1)$, written as
\begin{equation}\label{eq-collocation-3}
    \ell_j(\tau) =  \prod_{r=0, r\ne j}^{n_c} \frac{\tau - \tau_r}{\tau_j - \tau_r}
\end{equation}
with $\tau_j$ representing location of collocation points within each time element, $\tau_0 \!\!=\! 0$, $\tau_{n_c} \!\!=\!\! 1$, and $\tau_j \!<\! \tau_{j+1}$, $j \!\in\! \{0,...,n_c - 1\}$; ${\ell}_{\omega}(\tau) \!\!=\! \ell(\tau)^T \!\otimes\! I_{n_g} $ and ${\ell}_{\theta}(\tau) \!=\! \ell(\tau)^T \!\otimes\! I_{n_a} $. For the convenience of NLP formulation, we mainly used the second expression of $\hat{\omega}^k$ and $\hat{\theta}^k$. Additionally, we will use the derivative of $\ell_j(\tau)$, and in matrix form, we have $\dot{{\ell}}_{\omega}(\tau) \!=\! \dot{\ell}(\tau)^T \!\otimes\! I_{n_g} $ and $\dot{{\ell}}_{\theta}(\tau) \!=\! \dot{\ell}(\tau)^T \!\otimes\! I_{n_a} $, with $\dot{\ell}(\tau) \!=\!  \frac{\text{d} \ell(\tau)}{\text{d} \tau} $, $ \left[\!\frac{\text{d} \ell(\tau)}{\text{d} \tau}\!\right]_j \!\!=\!\!  \frac{\text{d} \ell_j(\tau)}{\text{d} \tau} $. 

With $\omega^k$ and $\theta^k$ replaced by $\hat{\omega}^k$ and $\hat{\theta}^k$ respectively, $J_i^k$ can be approximated by $\hat{J}_i^k$ given as
\begin{equation}\label{eq-collocation-3-1}
    \!\!\!\!\!\begin{aligned}
        \hat{J}_i^k &\!\!=\! \int_{t_{i-1}}^{t_{i-1} + h_i^k} 
        L_p(\hat{\theta}^k, \hat{\omega}^k) + L_e(M, D, \hat{\theta}^k, \hat{\omega}^k) \text{d} t \\
        = & h_i^k \!\!\! \int_0^1 \!\!\!
        L_p({\ell}_{\theta}(\!\tau\!) \bm{\theta}_i^k\!, {\ell}_{\omega}(\!\tau\!) \bm{\omega}_i^k) \!\!+\!\! L_e(M\!,\! D\!,\! {\ell}_{\theta}(\!\tau\!) \bm{\theta}_i^k\!,\! {\ell}_{\omega}(\!\tau\!) \bm{\omega}_i^k) \text{d} \tau
    \end{aligned}
\end{equation}
Since $L_p$ and $L_e$ are both in the form of polynomial w.r.t time $t$ or $\tau$, definite integral of them can be computed analytically and thus, $\hat{J}_i^k$ can be explicitly formulated as a polynomial function in terms of $M$, $D$, $\bm{\theta}_i^k$ and $\bm{\omega}_i^k$, given by
\begin{equation}\label{eq-collocation-3-2}
    \begin{aligned}
        \hat{J}_i^k\!(\!M\!,\! D\!,\! \bm{\theta}_i^k\!,\! \bm{\omega}_i^k\!)\! =& \bm{\theta}_{i}^{k T} \! \bm{W}_{\!1i}^{\!k} \bm{\theta}_{i}^k 
        \!\!+\!\! \bm{\omega}_{i}^{\!k T} \! \bm{W}_{2i}^k \bm{\omega}_{i}^k 
        \!\!+\!\! \bm{\omega}_{i}^{\!k T} \!\! \check{M} \! \bm{W}_{\!3i}^{\!k} \check{M} \! \bm{\omega}_{i}^k  \\
        &\!+\! \bm{\omega}_{i}^{k T} \check{M}  \bm{W}_{4i}^k \check{D} \bm{\omega}_{i}^k + \bm{\omega}_{i}^{k T} \check{D}  \bm{W}_{5i}^k \check{D} \bm{\omega}_{i}^k 
    \end{aligned}
\end{equation}
where matrices $\bm{W}_{1i}^k$ to $\bm{W}_{5i}^k$ are given in Table \ref{tb-appendix-1}.

Lagrange polynomial $\ell_j$ satisfies $\ell_j(\tau_r) = \delta_{jr}$ for $\forall j,r \in \{0,...,n_c\}$, with $\delta_{jr}$ being the Kronecker delta, and thus $\hat{\omega}^k$ and $\hat{\theta}^k$ have the property that $\hat{\omega}^k(t_{i-1} + \tau_j h_i^k) = \omega_{ij}^k$ and $\hat{\theta}^k(t_{i-1} + \tau_j h_i^k) = \theta_{ij}^k$, respectively. With this property, substituting the polynomial into DAE constraints (\ref{eq-DID-model-1:2}) and enforcing the resulting algebraic equations at the interpolation points $\tau_r$ lead to the collocation equations for DAE constraints as follows:
\begin{equation}\label{eq-collocation-4}
	\!\!\!\!\! \left\{
		\begin{aligned}
            & E_g \dot{\ell}_{\theta}(\tau_r) \bm{\theta}_i^k  = h_i^k \omega_{ir}^k ~~~ \forall r \in \{1,...,n_c\} \\
            & \dot{\ell}_{\omega}(\tau_r) \bm{\omega}_i^k  \!\!=\!\! - h_i^k {M}^{-1} D \omega_{ir}^k + h_i^k {M}^{-1} E_g \tilde{p}_{ir}^k \\
            &~~~~~~~~- h_i^k {M}^{-\!1} \!E_g  E_{n} B^k_{ir} \sin(E_{n}^T \theta_{ir}^k )  ~~ \forall r \!\!\in\!\! \{\!1,...,n_c\!\} \\
            & E_l \dot{\ell}_{\theta} (\tau_r) \bm{\theta}_i^k  \!\!=\!\! h_i^k D_l^{-1} E_l \tilde{p}_{ir}^k \\
            &~~~~~~~~~- h_i^k D_l^{-1} E_l  E_{n} B^k_{ir} \sin(E_{n}^T \theta_{ir}^k) ~~ \forall r \!\!\in\!\! \{\!1,...,n_c\!\} \\
            & 0 = E_o \tilde{p}_{ir}^k -  E_o  E_{n} B_{ir}^k \sin(E_{n}^T \theta_{ir}^k) ~~ \forall r \in \{0,...,n_c\}
		\end{aligned}
	\right.
\end{equation} 

Additionally, $\omega_{i0}^k$ and $E_{gl} \theta_{i0}^k$ are determined by initial conditions (\ref{eq-DID-model-1:3}) and (\ref{eq-DID-model-1:4}), or enforced by the continuity of the differential variable profiles across element boundaries as follows:
\begin{equation}\label{eq-collocation-5} 
    \!\!\!\!\!\!\!\!\! \left[\!\!
    \begin{aligned}
        &\omega_{i0}^k\\
        &E_{\!gl} \theta_{i0}^k
    \end{aligned}
    \!\right]    \!\!\!=\!\! 
	\left\{\!\!\!\!
		\begin{aligned}
			&     \left[
                \begin{aligned}
                    &\omega_{t_0}\\
                    &E_{gl} \theta_{t_0}
                \end{aligned}
                \right]   i=1 \\
            & \left[
                \begin{aligned}
                &\hat{\omega}^k(t_{i-2} \!+\! \tau_{n_c} h_{i-1}^k) \!=\! \omega_{(i-1)n_c}^k \\
                &E_{gl} \hat{\theta}^k\!(\!t_{i-\!2} \!+\!\! \tau_{\!n_c} \! h_{i\!-\!1}^k)  \!\!=\!\! E_{gl} \theta_{(i\!-\!1)n_c}^k
            \end{aligned}
            \right] i \!\!\in\!\! \{\!2,\!...,\!n_t^k\!\}
		\end{aligned}
	\right. \!\!\!\!\!\!\!
\end{equation}

Enforcing path constraints (\ref{eq-DID-model-1:5}) to (\ref{eq-DID-model-1:6}) at all collocation points gives
\begin{equation}\label{eq-collocation-7}
    \underline{\omega}^k_{ir} \leq \omega^k_{ir}  \leq \overline{\omega}^k_{ir} ~~~~ \forall r \in \{1,...,n_c\}
\end{equation}
\begin{equation}\label{eq-collocation-7-1}
    - \overline{\delta}\leq \theta^k_{ir}  \leq \overline{\delta} ~~~~ \forall r \in \{1,...,n_c\}
\end{equation}
\begin{equation}\label{eq-collocation-8}
    \underline{p}_g \!\leq\! E_g p^k_{ir} \!-\! \frac{1}{h_i^k} M \dot{\ell}_{\theta}(\tau_r) \bm{\omega}_i^k \!-\! D \omega^k_{ir} \!\leq\! \overline{p}_g ~~~~ \forall r \!\!\in\! \{1,...,n_c\}
\end{equation} 

Finally, by combining (\ref{eq-DID-model-1:7}), (\ref{eq-DID-model-1:8}) and (\ref{eq-collocation-3-2}) to (\ref{eq-collocation-8}), we can derive the NLP formulation of DID given by
\begin{subequations}\label{eq-collocation-9}
	\begin{align}
		(\text{P}2) & {\min_{M, D, \bm{\omega}, \bm{\theta}}}  \hat{J} (M\!, D\!, \bm{\theta}\!, \bm{\omega})
         \!=\!\!\!\!\!\!\!  \sum_{k\in\mathcal{D}, i\in\mathbb{T}^k} \!\!\! \hat{J}_i^k(M, \!D, \! \bm{\theta}_i^k, \bm{\omega}_i^k)  \label{eq-collocation-9:1}\\ 
	~\st & _{\forall k \in \mathcal{D}, i \in \mathbb{T}^k} \nonumber\\
    ~&  \bm{C}_{1i}^k \sin ( \bm{A}_{1i}^k \bm{\theta}_i^k)  \!\!+\!\!  \check{M} \bm{H}_{1i}^k \bm{\omega}_i^k \!\!+\!\!  \check{D} \bm{H}_{2i}^k \bm{\omega}_i^k \!\!+\!\! \bm{b}_{1i}^k  \!=\! \bm{0}   \label{eq-collocation-9:2} \\
    ~& \bm{C}_{2i}^k \sin ( \bm{A}_{1i}^k \bm{\theta}_i^k)  \!+\! \bm{A}_{2i}^k \bm{\theta}_i^k \!+\! \bm{A}_{3i}^k \bm{\omega}_i^k + \bm{b}_{2i}^k  = \bm{0}   \label{eq-collocation-9:2-1} \\
    ~&  \underline{\bm{c}}_{1i}^k 
     \leq \check{M} \bm{H}_{1i}^k \bm{\omega}_i^k \!+\!  \check{D} \bm{H}_{2i}^k \bm{\omega}_i^k  \leq \overline{\bm{c}}_{1i}^k\label{eq-collocation-9:5} \\
     ~&  \underline{\bm{c}}_{2i}^k 
     \!\leq\! \bm{A}_{4i}^k \bm{\omega}_i^k \!+\! \bm{A}_{5i}^k \bm{\theta}_i^k \!+\!  \bm{A}_{6i}^k M \mathbbm{1} \!+\! \bm{A}_{7i}^k D \mathbbm{1} \! \leq \! \overline{\bm{c}}_{2i}^k\label{eq-collocation-9:5-1} \\
     ~&  \bm{L}_1 \bm{\theta}_i^k - \bm{L}_2 \bm{\theta}_{i-1}^k = \bm{0} \label{eq-collocation-9:3} \\
     ~&  \bm{L}_3 \bm{\omega}_i^k - \bm{L}_4 \bm{\omega}_{i-1}^k = \bm{0} \label{eq-collocation-9:4} 
 	\end{align}
\end{subequations}
where matrices $\bm{C}_{1i}^k$,  $\bm{C}_{2i}^k$, $\bm{A}_{1i}^k$ to $\bm{A}_{7i}^k$, $\bm{H}_{1i}^k$, $\bm{H}_{2i}^k$, $\bm{L}_1$ to $\bm{L}_4$, $\bm{b}_{1i}^k$, $\bm{b}_{2i}^k$,$\underline{\bm{c}}_{1i}^k$, $\overline{\bm{c}}_{2i}^k$, $\underline{\bm{c}}_{2i}^k$  and  $\overline{\bm{c}}_{2i}^k$ are given in Table \ref{tb-appendix-1}.

\begin{table}[h]\centering
	\caption{\hlr{Matrices in the NLP formulation of DID}}
    \label{tb-appendix-1}
    {\footnotesize{
    \begin{tabular*}{\hsize}{@{\extracolsep{\fill}} cccccc}
        \toprule
        $\bm{A}_{2i}^k$  & $\bm{A}_{3i}^k$ & $\bm{A}_{4i}^k$  & $\bm{A}_{5i}^k$  & $\bm{A}_{6i}^k$ & $\bm{A}_{7i}^k$ \\
        \midrule
        $\left[\!\!\!\! \begin{array}{ccc}
            \col(E_g \dot{\ell}_\theta (\tau_{r_1})) \\
             \col(D_l E_l  \dot{\ell}_\theta(\tau_{r_1})) \\
            O \\
            \end{array} 
        \!\!\!\!\right]\!\!\!\!\!\!$
        &
        $\left[\!\!\! \begin{array}{ccc}
            - E_2   h_i^k \\
             O \\
            O \\
            \end{array} 
        \!\!\!\right]\!\!\!\!\!\!$
        &
        $\left[\!\! \begin{array}{ccc}
            E_2\\
            O\\
            O\\
            O
            \end{array} 
        \!\!\right]\!\!\!\!\!\!$
        &
        $\left[\!\! \begin{array}{ccc}
            O\\
            E_3\\
            O\\
            O
            \end{array} 
        \!\!\right]\!\!\!\!\!\!$
        &
        $\left[\!\! \begin{array}{ccc}
            O\\
            O\\
            I_{n_g} \\
            O
            \end{array} 
        \!\!\right]\!\!\!\!\!\!$
        &
        $\left[\!\! \begin{array}{ccc}
            O\\
            O\\
            O\\
            I_{n_g}
            \end{array} 
        \!\!\right]$
\end{tabular*}
\begin{tabular*}{\hsize}{@{\extracolsep{\fill}} cccc}
    \toprule
    $\bm{A}_{1i}^k$ & $\bm{H}_{1i}^k$ & $\bm{H}_{2i}^k$ & $\bm{C}_{1i}^k$   \\
    \midrule
    $ \diag(E_{n}^T)$ 
    &
    $\col(\dot{\ell}_\omega(\tau_{r_1}))$
    &
    $ h_i^k E_2 $
    &
    $ h_i^k \diag( E_g  E_{n} B^k_{ir_1})  E_1 $   
\end{tabular*}
\begin{tabular*}{\hsize}{@{\extracolsep{\fill}} cccc}
    \toprule
    $\bm{C}_{2i}^k$ &  $\bm{b}_{2i}^k$ & $\underline{\bm{c}}_{2i}^k$ &  $\overline{\bm{c}}_{2i}^k$ \\
    \midrule
    $\left[\!\!\!\! \begin{array}{ccc}
        O\\
        h_i^k \diag( E_l  E_{n} B^k_{ir_1}) E_1\\
        -  \diag( E_o  E_{n} B^k_{ir_0})   E_1 \\
        \end{array} 
    \!\!\!\!\right] \!\!\!\!\!\!\!\!\!$    
    &
    $\!\left[\!\!\!\! \begin{array}{c}
        \bm{0} \\
        \!-\! \col(h_i^k E_l \tilde{p}_{ir_1}^k) \\
        \col(E_o \tilde{p}_{ir_0}^k) \\
        \end{array} 
    \!\!\!\!\!\right] \!\!\!\!\!\!\!\!$
    &
    $\!\!\left[ \!\!\!\!\begin{array}{c}
        \col(\underline{\omega}_{i r_1}^k)\\
        \col(-\overline{\delta} )\\
        \underline{m}\\
        \underline{d}
        \end{array} \!\!
    \!\!\!\right]\!\!\!\!\!\!\!\!$
    &
    $\!\!\left[ \!\!\!\!\begin{array}{c}
        \col(\overline{\omega}_{i r_1}^k)\\
        \col(\overline{\delta} )\\
        \overline{m}\\
        \overline{d}
        \end{array} \!\!\!\!
    \right]$
\end{tabular*}
\begin{tabular*}{\hsize}{@{\extracolsep{\fill}} cccc}
    \toprule
    $\bm{b}_{1i}^k$ & $\underline{\bm{c}}_{1i}^k$ &  $\overline{\bm{c}}_{1i}^k$  \\
    \midrule
    $\!\!\!\! - \col( h_i^k E_g \tilde{p}_{ir_1}^k)\!\!\!\! $ 
    &
    $ h_i^k \col(E_g p_{i r_1}^k \!\!\!-\!\! \overline{p}_g)  $
    &
    $\!\!\! h_i^k \col(E_g p_{i r_1}^k \!\!-\!\! \underline{p}_g ) \!\!\!\!\!\!$
\end{tabular*}
\begin{tabular*}{\hsize}{@{\extracolsep{\fill}} ccc}
    \toprule
    $\bm{L}_1$  &  $\bm{L}_2$ &   $\bm{L}_3$ \\
    \midrule
    $E_{gl}\! \row(\!I_{n_a}\!,\! O_{n_a \!\times\! n_a n_c}\!)\!\!\!\!\!\!$ 
    &
    $E_{gl} \!\row\!(\!O_{n_a \!\times\! n_a n_c}\!,\! I_{n_a}\!)\!\!\!\!\!\!$
    &
    $\row\!(\!I_{n_g}\!,\! O_{n_g \!\times\! n_g n_c}\!)\!\!$
\end{tabular*}
\begin{tabular*}{\hsize}{@{\extracolsep{\fill}} ccc}
    \toprule
    $\bm{L}_4$ & $E_1$  &  $E_2$\\
    \midrule
    $\row\!(\!O_{n_g \times n_g n_c}\!,\! I_{n_g}\!)\!\!\!\!$
    &
    $\row\!(\!O_{n_b n_c \times n_b}\!,\! I_{n_b n_c}\!)\!\!\!\!$ 
    &
    $\row\!(\!O_{n_g n_c \!\times n_g }\!,\! I_{n_g n_c} \!)\!\!$
\end{tabular*}   
\begin{tabular*}{\hsize}{@{\extracolsep{\fill}} ccc}
    \toprule
    $E_3$ & $r_0$  &  $r_1$\\
    \midrule
    $\row(O_{n_b n_c \times n_a }, \diag(E_n^T))$
    &
    $\{0,...,n_c\}$
    &
    $\{1,...,n_c\}$
\end{tabular*}  
\begin{tabular*}{\hsize}{@{\extracolsep{\fill}} cc}
    \toprule
    $\bm{W}_{1i}^k$  &  $\bm{W}_{2i}^k$ \\
    \midrule
    $h_i^k (S_1 \otimes W_2^k) + \frac{1}{h_i^k} (S_2 \otimes W_3^k)$
    &
    $\frac{1}{h_i^k} (S_2 \otimes W_4^k)$ 
\end{tabular*} 
\begin{tabular*}{\hsize}{@{\extracolsep{\fill}} ccc}
    \toprule
    $\bm{W}_{3i}^k$ &  $\bm{W}_{4i}^k$ &  $\bm{W}_{5i}^k$  \\
    \midrule
    $h_i^k (S_1 \otimes W_1^k) +  \frac{1}{h_i^k} (S_2 \otimes W_5^k)$ 
    &
    $\!\!2 (S_3 \otimes W_4^k)$
    &
    $\!\!\!h_i^k (S_1 \otimes W_4^k)$
\end{tabular*} 
\begin{tabular*}{\hsize}{@{\extracolsep{\fill}} ccc}
    \toprule
    $S_1$ &  $S_2$ & $S_3$  \\
    \midrule
    $\int_{0}^{1} \ell(\tau) \ell(\tau)^T  \text{d}\tau$
    &
    $\int_{0}^{1} \dot{\ell}(\tau) \dot{\ell}(\tau)^T \text{d}\tau$
    &
    $\int_{0}^{1} \dot{\ell}(\tau) \ell(\tau)^T  \text{d}\tau$
    \\
    \bottomrule
\end{tabular*} 
    }}
\end{table}

\section{SDP Relaxation for the NLP of DID}\label{section-2}

The NLP of DID is non-convex due to not only the quadratic interval constraints and equality constraints that contain sine and quadratic terms, but also the non-convex polynomial objective function. Solving the NLP directly by local optimizaition approach will very likey produce inferior solutions in terms of global optimality. Therefore, SDP-based convex relaxation of the NLP of DID is derived in this section, aiming to enhance global optimality of final solutions.

\subsection{SDP Relaxation of DID with Linearized Power Flow}

We first consider a simpler case where power flow terms in (\ref{eq-model-1}) are linearized. This case is valid when $\mathcal{D}$ contains no large disturbances. For the cost function, we define the lifting as follows:
\begin{equation}\label{eq-lifting-1}
    [\bm{l}_{\text{m} i}^k, \bm{l}_{\text{d} i}^k ] =  [\check{M} \bm{\omega}_i^k, \check{D} \bm{\omega}_i^k] ~~~ \forall k \in \mathcal{D}, i \in \mathbb{T}^k.
\end{equation}
which replaces products with novel linear variables. Then problem (P2) with linearization and lifting (\ref{eq-lifting-1}) can be equivalently formulated as the following QCQP problem:
\begin{subequations}\label{eq-qcqp-1}
	\begin{align}
        (\text{P}3)~{\min_{\bm{x}}} ~~~~~ & 
        \sum_{k \in \mathcal{D}, i \in \mathbb{T}^k} [\bm{x}]_i^{kT} \bm{P}_{0i}^k [\bm{x}]_i^k \label{eq-qcqp-1:1}\\
        \mathop{\mathrm{s.t.}}\limits_{\forall k \in \mathcal{D}, i \in \mathbb{T}^k}
    ~& \text{col}( [\bm{x}]_i^{kT} \bm{P}_{(\hbar, r,j) i}^k [\bm{x}]_i^k ) \!\!+\!\! \bm{Q}_i^{k}  [\bm{x}]_i^k \!\!+\!\! \tilde{\bm{b}}_{1i}^k  \!=\! \bm{0}  \label{eq-qcqp-1:2} \\
    ~& \bm{A}_i^k [\bm{x}]_i^k \leq \bm{b}_i^k \label{eq-qcqp-1:4}\\
    ~&  \bm{L}_{13} [\bm{x}]_i^k - \bm{L}_{14} [\bm{x}]_{i-1}^k    = \bm{0} \label{eq-qcqp-1:5} 
 	\end{align}
\end{subequations}
Here for $(\hbar, r,j)$, $\hbar \!\in\! \{1, 2, 3\}$, $r \in \{1,...,n_c\}$ for $\hbar=1$ and $\{0,...,n_c\}$ for $\hbar=2 ~\text{or} ~3$,  and $j \in \mathcal{N}_g$; (\ref{eq-qcqp-1:2}) corresponds to lifting equalities (\ref{eq-lifting-1}) and (\ref{eq-collocation-9:2}) in (P2), (\ref{eq-qcqp-1:4}) corresponds to (\ref {eq-collocation-9:2-1}) to (\ref{eq-collocation-9:5-1}) in (P2), and (\ref{eq-qcqp-1:5}) corresponds to (\ref{eq-collocation-9:3}) and (\ref{eq-collocation-9:4}) in (P2). Note that (\ref{eq-collocation-9:5-1}) in (P2) is in quadratic form but transformed into linear form in (P3). Coefficient matrices in (P3) are given in Table \ref{tb-appendix-2}. By further introducing a matrix variable $\bm{X} \!=\! \bm{x} \bm{x}^T$, (P3) is reformulated as
\begin{subequations}\label{eq-sdp-1}
	\begin{align}
        (\text{P}4)~{\min_{ \bm{Z} }} ~~~~& 
        \sum_{k \in \mathcal{D}, i \in \mathbb{T}^k} \Tr(\bm{P}_{0i}^k [\bm{X}]_i^k) \label{eq-sdp-1:1}\\
    \st_{\forall k \in \mathcal{D}, i \in \mathbb{T}^k}
    ~& \text{col}( \Tr(\bm{P}_{(\hbar, r,j) i}^k  [\bm{X}]_i^k) ) \!+\! \bm{Q}_i^{k}  [\bm{x}]_i^k \!+\! \tilde{\bm{b}}_{1i}^k  \!\!=\!\! \bm{0}  \label{eq-sdp-1:2} \\
    ~& \bm{A}_i^k [\bm{x}]_i^k \leq \bm{b}_i^k, \bm{L}_{13} [\bm{x}]_i^k - \bm{L}_{14} [\bm{x}]_{i - 1}^k    = \bm{0} \label{eq-sdp-1:4}\\
    ~& \bm{Z} = \left[\!\! \begin{array}{cc}
        \bm{X}   \!\!\!&\!\! \bm{x} \\
        \bm{x}^T \!\!\!&\!\! 1
    \end{array} \!\!\right] \succeq 0 \label{eq-sdp-1:6}\\
    ~&  \rank(\bm{Z}) = 1 \label{eq-sdp-1:7}
 	\end{align}
\end{subequations}
Dropping the non-convex rank constraint (\ref{eq-sdp-1:7}) gives the convex SDP as follows:
\begin{equation}\label{eq-sdp-3}
	(\text{P}5)~ \text{(\ref{eq-sdp-1:1})} \sim \text{(\ref{eq-sdp-1:6})}
\end{equation}
We call (P5) a SDP relaxation of (P3) since the feasible region of (P3) or (P4) is a subset of the feasible region of (P5). If an optimal point $\bm{Z}^*$ of (P5) satisfies $\text{rank}(\bm{Z}^*) = 1$, i.e., $\bm{X}^* \!=\!  \bm{x}^* \bm{x}^{*T}$, then $\bm{x}^*$ is also the global optimum of (P3). 

\begin{table}[h]
	\caption{\hlr{Matrices in the QCQP formulation of DID}}
    \label{tb-appendix-2}
    {\footnotesize{
\begin{tabular*}{\hsize}{@{\extracolsep{\fill}} cc}
	\toprule
	 $\bm{P}_{0i}^k$ & $\bm{b}_i^k$ \\
	\midrule
    $\left[ \begin{array}{cccccc}
        \!\! \bm{O} &\!\!\! \bm{O} &\!\! \bm{O}       \!\!\!\!& \bm{O}  &\!\!\!\!\!\! \bm{O}        \!\!\!\!&\!\!\!\! \bm{O} \\ \!\! 
        \!\! \bm{O} \!\!&\!\!\! \bm{O} &\!\! \bm{O}       \!\!\!\!& \bm{O}  &\!\!\!\!\!\! \bm{O}        \!\!\!\!&\!\!\!\! \bm{O} \\ \!\! 
        \!\! \bm{O} \!\!&\!\!\! \bm{O} &\!\! \bm{W}_{1i}^k\!\!\!\!& \bm{O} &\!\!\!\!\!\! \bm{O}  \!\!\!\!&\!\!\!\! \bm{O} \\ \!\! 
        \!\! \bm{O} \!\!&\!\!\! \bm{O} &\!\! \bm{O}       \!\!\!\!& \bm{W}_{2i}^k &\!\!\!\!\!\! \bm{O}  \!\!\!\!&\!\!\!\! \bm{O}  \\ \!\! 
        \!\! \bm{O} \!\!&\!\!\! \bm{O} &\!\! \bm{O}       \!\!\!\!& \bm{O} &\!\!\!\!\!\! \bm{W}_{3i}^k   \!\!\!\!&\!\!\!\!  \frac{1}{2} \bm{W}_{4i}^k \\ \!\! 
        \!\! \bm{O} \!\!&\!\!\! \bm{O} & \bm{O} & \bm{O} &\!\!\!\!  \frac{1}{2} \bm{W}_{4i}^k & \bm{W}_{5i}^k 
        \end{array} 
    \!\!\right]$\!\!\!\!
    &\!\!\!\!\!\!\!\!\!\!\!
    $\!\!\left[ \!\!\begin{array}{c}
        \!\! - (\bm{b}_{2i}^k + \bm{C}_{2i}^k \bm{\Lambda}_{1i}^k) \\
        \!\!   \bm{b}_{2i}^k + \bm{C}_{2i}^k \bm{\Lambda}_{1i}^k \\
        \!\! \overline{\bm{c}}_{1i}^k + \bm{b}_{1i}^k + \bm{C}_{1i}^k \bm{\Lambda}_{1i}^k  \\
        \!\! - (\underline{\bm{c}}_{1i}^k + \bm{b}_{1i}^k + \bm{C}_{1i}^k \bm{\Lambda}_{1i}^k) \\
        \!\! \overline{\bm{c}}_{2i}^k \\
        \!\! - \underline{\bm{c}}_{2i}^k
        \end{array} \!\!\!\!
    \right]$\!\!\!\!\!
    \end{tabular*}
    \begin{tabular*}{\hsize}{@{\extracolsep{\fill}} cc}
        \toprule
         $\bm{Q}_i^k$ & $\tilde{\bm{b}}_{1i}^k$\\
        \midrule  \!\!
        $\left[ \!\!\begin{array}{cccccc}
            \!\! \bm{O} \!\!\!&\!\! \bm{O} \!\!&\!\! \bm{C}_{1i}^k \bm{\Lambda}_{0i}^k \!&\!\!\!\! \bm{O} \!\!\!\!&\!\!\!\! \bm{O} \!\!\!\!&\!\!\!\! \bm{O}  \\\!\!
            \!\! \bm{O} \!\!\!\!&\!\!\!\! \bm{O} \!\!\!\!&\!\! \bm{O}  \!\!\!\!&\!\!\!\! \bm{O} \!&\!\!\!\! -I_{n_g(n_c + 1)}  \!\!\!\!&\!\!\!\! \bm{O} \\ \!\! 
            \!\! \bm{O} \!\!\!\!&\!\!\!\! \bm{O} \!\!\!\!&\!\! \bm{O}  \!\!\!\!&\!\!\!\! \bm{O} \!\!\!\!&\!\!\!\! \bm{O} \!\!\!\!&\!\!\!\! -I_{n_g(n_c + 1)}
            \end{array} \!\!\!\!\!\!
        \right]$\!\!\!\!\!\!
        &\!\!
        $\left[ \!\! \begin{array}{c}
            \bm{b}_{1i}^k \!+\! \bm{C}_{1i}^k \bm{\Lambda}_{1i}^k\\
            \bm{O} \\
            \bm{O}
            \end{array} \!\!
        \right]$
    \end{tabular*}
    \begin{tabular*}{\hsize}{@{\extracolsep{\fill}} ccc}
        \toprule
         $\bm{A}_{i}^k$  & $\bm{L}_{13}^T$ & $\bm{L}_{14}^T$  \\
        \midrule  \!\!
        $\left[ \begin{array}{cccccc}
            \!\! \bm{O} \!\!&\!\! \bm{O} \!\!&\!\!\!\!\!\! \bm{C}_{2i}^k \bm{\Lambda}_{0i}^k + \bm{A}_{2i}^k   &\!\!\!\! \bm{A}_{3i}^k  &\!\!\!\! \bm{O} \!\!\!\!& \bm{O} \\ \!\! 
            \!\! \bm{O} \!\!&\!\! \bm{O} \!\!&\!\!\!\!\!\! -(\bm{C}_{2i}^k \bm{\Lambda}_{0i}^k + \bm{A}_{2i}^k) &\!\!\!\!  - \bm{A}_{3i}^k &\!\!\!\! \bm{O} \!\!\!\!& \bm{O} \\ \!\! 
            \!\! \bm{O} \!\!&\!\! \bm{O} \!\!&\!\!\!\!\!\! -\bm{C}_{1i}^k \bm{\Lambda}_{0i}^k &\!\!\!\! \bm{O} &\!\!\!\! \bm{O}  \!\!\!\!& \bm{O} \\ \!\! 
            \!\! \bm{O} \!\!&\!\! \bm{O} \!\!&\!\!\!\!\!\! \bm{C}_{1i}^k \bm{\Lambda}_{0i}^k  &\!\!\!\! \bm{O} &\!\!\!\! \bm{O}  \!\!\!\!& \bm{O}  \\ \!\! 
            \!\! \bm{A}_{6i}^k \!\!&\!\! \bm{A}_{7i}^k \!\!&\!\!\!\!\!\! \bm{A}_{5i}^k &\!\!\!\! \bm{A}_{4i}^k &\!\!\!\! \bm{O}  \!\!\!\!& \bm{O} \\ \!\! 
            \!\! -\bm{A}_{6i}^k \!\!&\!\! -\bm{A}_{7i}^k \!\!&\!\!\!\!\!\! -\bm{A}_{5i}^k &\!\!\!\! -\bm{A}_{4i}^k &\!\!\!\! \bm{O}  \!\!\!\!& \bm{O} 
            \end{array} 
        \!\!\!\!\right]$\!\!\!\!\!\!\!
        &\!\!\!\!
        $\left[ \!\!\!\begin{array}{cc}
            \!\! \bm{O}  &\!\!\!\! \bm{O} \\
            \!\! \bm{O}  &\!\!\!\! \bm{O} \\
            \!\! \bm{L}_1  &\!\!\!\! \bm{O} \\
            \!\! \bm{O}  &\!\!\!\! \bm{L}_3 \\
            \!\! \bm{O}  &\!\!\!\! \bm{O} \\
            \!\! \bm{O}  &\!\!\!\! \bm{O}
            \end{array} \!\!
        \!\!\!\right]$\!\!\!\!
        &\!\!\!\!\!\!
        $\left[\!\!\!\begin{array}{cc}
            \!\! \bm{O}  &\!\!\!\! \bm{O} \\
            \!\! \bm{O}  & \!\!\!\!\bm{O} \\
            \!\! \bm{L}_2  &\!\!\!\! \bm{O} \\
            \!\! \bm{O}  &\!\!\!\! \bm{L}_4 \\
            \!\! \bm{O}  &\!\!\!\! \bm{O} \\
            \!\! \bm{O}  &\!\!\!\! \bm{O}
            \end{array} \!\!
        \!\!\!\right]$\!\!
    \end{tabular*}
    \begin{tabular*}{\hsize}{@{\extracolsep{\fill}} c}
        \toprule
         $\bm{P}_{(1,r,j) i}^k = ( M \mathbbm{1}, \bm{\omega}_i^k,  \frac{1}{2}\bm{\ell}_{\omega j}( \tau_r) )$, $( D \mathbbm{1}, \bm{\omega}_i^k,   \frac{1}{2}E_{(r,j)} )$\\
         \midrule
         $\bm{P}_{(2,r,j) i}^k = ( M \mathbbm{1}, \bm{\omega}_i^k,   \frac{1}{2} \bm{O}_{(r,j)}^1 )$,
         $\bm{P}_{(3,r,j) i}^k = ( D \mathbbm{1}, \bm{\omega}_i^k,  \frac{1}{2} \bm{O}_{(r,j)}^1 )$ \\
        \bottomrule
    \end{tabular*}
    }}
    \small{Note: Since $\bm{P}_{(\hbar, r, j) i }^k$ are highly sparse and symmetric matrices, they are given in block 3-tuple form and only with lower triangular portions of matrix.}
\end{table}

\subsection{SDP Relaxation of DID with Nonlinear Power Flow}
When disturbance set $\mathcal{D}$ contains large disturbances, linearization of power flow equations can lead to unacceptable approximation errors. Hence fidelity of nonlinearity for disturbances in $\mathcal{D}_4$ should be reserved. We first propose the following quadratic approximation of sine function in domain $[-\theta_c,  \theta_c]$ with $\theta_c \!\in\! [\!\frac{\pi}{2}, \pi]$. Numerical analysis for this approximation can be found in Appendix\ref{appendix_3}.

\begin{approximation}\label{approx-2}
    $\sin \theta \mapsto {\beta}^T {\varsigma}$
    with
    \begin{equation}\label{eq-sin-2}\nonumber
        \begin{aligned}
        ~&\left[\!\!\! \begin{array}{c}
             \vartheta \theta^2 
            + \pi \vartheta \theta 
            + \frac{\pi^2}{4} \vartheta  - 1  \\
            \frac{\sin \theta_b}{\theta_b} \theta  \\
            - \vartheta \theta^2 
            + \pi \vartheta \theta 
            - \frac{\pi^2}{4} \vartheta + 1 \\
            \end{array} \!\!\! \right] \!-\!  {\varsigma} \!=\! {0}~ \text{with}~ \vartheta \!=\! \frac{1 - \sin\theta_b}{(\theta_b - \frac{\pi}{2})^2}, \\
        ~& \theta \!-\! \theta_{cb}^T {\alpha}  \!=\! 0,\!  \mathbbm{1}^T {\beta} \!=\! 1,\! \mathbbm{1}^T {\alpha} \!=\! 1,\! ({\beta}-\mathbbm{1})^2 \!=\! 0,\! {\alpha} \!-\! E {\beta} \!\leq\! 0,\!  {\alpha} \!\geq\! 0  \\
        \end{aligned}
    \end{equation}
    where $\theta \!\!\in\!\! [-\theta_c,  \theta_c]$ with $\theta_c \!\!\in\!\! [\frac{\pi}{2}, \pi]$, ${\beta} \!\!\in\!\! \mathbb{R}^3$, ${\varsigma} \!\!\in\!\! \mathbb{R}^3$, ${\alpha} \!\!\in\!\! \mathbb{R}^4$; $\theta_b \!\!\in\!\! [0, \frac{\pi}{2}]$ and $\theta_c$ are known parameters; $\theta_{cb} \!=\! [-\theta_c, -\theta_b, \theta_b, \theta_c]^T $; and $E = [[1,0,0],[1,1,0],[0,1,1],[0,0,1]]$.
\end{approximation}

Then with the assumption that $\overline{\delta} \in [\frac{\pi}{2}, \pi]$, (P2) with lifting (\ref{eq-lifting-1}), linearization for $k \in \mathcal{D}_1 \cup \mathcal{D}_2 \cup \mathcal{D}_3$ and Approximation \ref{approx-2} for $k \in \mathcal{D}_4$ can also be formulated as a QCQP that we call (P6) hereafter. Variables corresponding to $\alpha$, $\beta$ and $\varsigma$ in Approximation \ref{approx-2}, i.e., $\bm{\alpha}$, $\bm{\beta}$ and $\bm{\varsigma}$, are introduced for (P6). Clearly, $\bm{\alpha}$ only appears in linear form, and thus by introducing $\bm{Y} = \col(\bm{\beta}, \bm{\varsigma}) \cdot \col(\bm{\beta}, \bm{\varsigma})^T$, we obtain the formulation equivalent to (P6) as follows:


\begin{subequations}\label{eq-sdp-4p}
	\begin{align}
        (\text{P}7)~{\min_{\bm{Z}, \tilde{\bm{Z}}}} ~~~~~ & 
        \sum_{k \in \mathcal{D}, i \in \mathbb{T}^k} \Tr(\bm{P}_{0i}^k [\bm{X}]_i^k) \label{eq-sdp-4p:1}\\
    \mathrm{s.t.}
    ~&  \text{(\ref{eq-sdp-1:2})}, \text{(\ref{eq-sdp-1:4})} ~~{\forall k \!\in\! \mathcal{D}_1 \!\cup\! \mathcal{D}_2 \!\cup\! \mathcal{D}_3 , i \!\in\! \mathbb{T}^k} \label{eq-sdp-4p:2}\\
    ~& \text{(\ref{eq-sdp-4:2})} \sim \text{(\ref{eq-sdp-4:5-4})} ~~~{\forall k \!\in\! \mathcal{D}_4, i \!\in\! \mathbb{T}^k} \label{eq-sdp-4p:3}\\
    ~& \tilde{\bm{Z}} \! = \! \left[ \!\!
        \begin{array}{c@{}c@{}}
            \!\!\bm{Y}   \!\!&\!\!\!\!  \begin{array}{c}
                        \bm{\beta} \\
                        \bm{\varsigma} \\
                       \end{array} \\
            \!\! \begin{array}{cc}
                \bm{\beta}^T \!\!&\!\! \bm{\varsigma}^T
            \end{array}  \!\!&\!\!\!\! 1\\
        \end{array}
        \!\! \right] \!\! \succeq 0, \bm{Z} \succeq 0 \label{eq-sdp-4p:4}\\ 
    ~&  \rank( \tilde{\bm{Z}} ) = 1, \rank( \bm{Z} ) = 1 \label{eq-sdp-4p:5}
 	\end{align}
\end{subequations}
with (\ref{eq-sdp-4:2}) $\sim$ (\ref{eq-sdp-4:5-4}) written as follows:
\begin{subequations}\label{eq-sdp-4}
	\begin{align}
    ~& \text{col}(\! \Tr(\bm{P}_{(\!1, \tilde{r},j\!) i}^k  \![\!\bm{X}\!]_i^k)\! ) \!+\! \bm{C}_{1i}^k \! \col(\Tr(\! \tilde{\bm{P}}_1\! [\bm{Y}]_{(\!r,\imath\!)i}^k )\!)  \!\!+\!  \bm{b}_{1i}^k  \!=\! \bm{0}  \label{eq-sdp-4:2} \\
    ~& \text{col}( \Tr(\bm{P}_{(\hbar, r,j) i}^k  [\bm{X}]_i^k) ) + \tilde{\bm{Q}}_i^{k} [\bm{x}]_i^k = \bm{0}  \label{eq-sdp-4:3} \\
    ~&  \underline{\bm{c}}_{1i}^k \leq   \text{col}( \Tr(\bm{P}_{(1, \tilde{r},j) i}^k  [\bm{X}]_i^k) )  \leq \overline{\bm{c}}_{1i}^k \label{eq-sdp-4:3-1} \\
    ~& \tilde{\bm{A}}_i^k [\bm{x}]_i^k \leq \tilde{\bm{b}}_i^k, \bm{L}_{13} [\bm{x}]_i^k - \bm{L}_{14} [\bm{x}]_{i - 1}^k = \bm{0} \label{eq-sdp-4:4}\\
    ~& \bm{A}_{8i}^k \!\col(\! \Tr(\!\bm{P}_{(\!\tilde{r}, \imath\!)i}^k [\!\bm{X}\!]_i^k ) \!\!+\!\!  \bm{A}_{9i}^k \bm{\theta}_i^k \!\!-\!\! [\bm{\varsigma}]_i^k \!\!+\!\! \bm{A}_{8i}^k \mathbbm{1} \!(\!\frac{\pi^2\!}{4} \vartheta  \!\!-\!\! 1) \!\!=\!\! \bm{0} \label{eq-sdp-4:5-1}\\
    ~& \bm{A}_{1i}^k \bm{\theta}_i^k - \bm{A}_{10i}^k [\bm{\alpha}]_i^k = \bm{0}, \text{col}( \Tr( \tilde{\bm{P}}_2 [\bm{Y}]_{(r,\imath)i}^k  )) \!\!=\!\! 1  \label{eq-sdp-4:5-2}\\
    ~& \text{col}(\mathbbm{1}^T [\bm{\alpha}]_{(r,\imath)i}^k) \!\!=\!\! 1, \text{col}(\mathbbm{1}^T [\bm{\beta}]_{(r,\imath)i}^k) \!\!=\!\! 1 \label{eq-sdp-4:5-22}\\
    ~& \text{col}([\bm{\alpha}]_{(r,\imath)i}^k - E [\bm{\beta}]_{(r,\imath)i}^k) \leq 0, \text{col}([\bm{\alpha}]_{(r,\imath)i}^k) \geq 0 \label{eq-sdp-4:5-4}
 	\end{align}
\end{subequations}
where $\hbar \in {2,3}$, $\tilde{r} \in \{1,...,n_c\}$, $r \in \{0,...,n_c\}$, $\imath \in \mathcal{B}$ and $j \in \mathcal{N}_g$. Analogously to (P4), dropping rank constraints in (P7) gives the following SDP:
\begin{equation}\label{eq-sdp-5}
	(\text{P}8)~ \text{(\ref{eq-sdp-4p:1}) - (\ref{eq-sdp-4p:4})}
\end{equation}
Since (P5) can be regarded as a special case of (P8), only (P8) is considered thereafter.

\section{Exploiting Sparsity in SDP Relaxation}\label{section-3}

Common approaches for solving SDPs can only handle multiple small PSD matrices efficiently\cite{4-495, 4-496}. In SDP relaxation for DID, the large size of $\bm{Z}$ and $\tilde{\bm{Z}}$ significantly affects computational efficiency of solution approaches for SDPs. Hence this section further exploits sparsity in (P8) to decompose large-dimensional PSD matrix constraints into smaller ones which can be handled much more efficiently.

\subsection{Clique Decomposition Approach}

We will use some basic concepts in graph theory, including maximal clique, chordality, chordal extension and clique tree. Reader should refer to \cite{4-498} for more details of these concepts. The symmetric matrix $\bm{Z}$ can be associated with a graph $\mathcal{G}(\mathcal{V}, \mathcal{E})$ with $\mathcal{V}$ being the row (or column) index set of $\bm{Z}$ and $\mathcal{E} \!\!=\! \{(i,j)| i\!\neq\! j, [\bm{Z}]_{ij}~\text{features in the objective function or}$ $\text{constraints}\}$. Here $\mathcal{E}$ is called the \textit{aggregate sparsity pattern} of SDP in terms of $\bm{Z}$. A matrix $\bm{Z}' \in \mathbb{S}^{|\mathcal{V}|}_{\mathcal{E}}$ is called a symmetric partially specified matrix. A matrix $\bm{Z} \in \mathbb{S}^{|\mathcal{V}|}$ is called a positive semi-definite completion of  $\mathbb{S}^{|\mathcal{V}|}_{\mathcal{E}}$ if $[\bm{Z}]_{ij} = [\bm{Z}']_{ij}$ for all $(i,j) \in \mathcal{E}$ and $\bm{Z} \succeq 0$.

Then the clique decomposition approach to decompose $\bm{Z} \!\!\succeq\!\! 0$ can be summarized as the following steps: \textit{(\romannumeral1)} compute a chordal extension $\mathcal{F} \!\supseteq\! \mathcal{E}$ if $\mathcal{G}$ is not chordal; \textit{(\romannumeral2)} identify the set of maximal cliques $\mathcal{K}\!\!=\!\! \{\mathcal{C}_1,...,\mathcal{C}_{n_{mc}}\}$ of $\mathcal{G}(\mathcal{V}, \mathcal{E})$ or $\mathcal{G}(\mathcal{V}, \mathcal{F})$ if $\mathcal{G}$ is not chordal; \textit{(\romannumeral3)} compute the clique tree $\mathcal{T}(\mathcal{K}, \mathcal{L})$; and \textit{(\romannumeral4)} $\bm{Z} \!\succeq\! 0$ is decomposed into $\mathcal{S}_{\mathcal{C}_i}(\bm{Z}) \!\succeq\! 0, \forall i \!\in\!\! \{1,2,...,n_{mc}\}$ with equality constraints introduced to equate the overlapping entries in maximal cliques \cite{4-495}. 
Considering the complexity of $\tilde{\bm{Z}}$ and $\bm{Z}$, we propose Proposition \ref{prop-chordality-1} (see Appendix\ref{appendix-ppa}) so as to explore sparsity of SDP relaxation in a hierarchical way.

\begin{remark}\label{remark-prop-1}
    Taking $\bm{Z}$ for example, it can be associated with graphs $\mathcal{G}(\mathcal{V}, \mathcal{E})$ and $\bar{\mathcal{G}}(\bar{\mathcal{V}}, \bar{\mathcal{E}})$, at the element and block levels, respectively. Nodes in $\mathcal{V}$ and $\bar{\mathcal{V}}$ correspond to elements and blocks of $\bm{Z}$, respectively. Thus by Proposition \ref{prop-chordality-1}, steps (\romannumeral1) and (\romannumeral2) of the clique decomposition approach can be conducted hierarchically.
\end{remark}

\subsection{Aggregate Sparsity Pattern and Decomposition of SDP}

The results of chordality and maximal cliques of the associated graph of $\bm{Z}$ is given by Proposition \ref{proposition-9} (see Appendix\ref{appendix-ppa}). Denote by $\mathcal{T}(\mathcal{K}, \mathcal{L})$ the clique tree for $\mathcal{G}$. Then $\mathcal{T}$ only needs to satisfy the two properties as follows: 

\textit{(\romannumeral1)} $\forall j \!\!\in\!\! \mathcal{N}_g$, the induced graph by nodes $\{\mathcal{C}_{ej}|\mathcal{C}_{ej} \!\!\in\!\! \bigcup_{k \in \mathcal{D}, i\in \mathbbm{T}^k} \mathcal{K}_{e1}^{ki}\}$ forms a subtree of $\mathcal{T}$; and 

\textit{(\romannumeral2)} $\forall k \in \mathcal{D}, i\in \mathbbm{T}^k$, the induced graph by the node set $\mathcal{K}_{e3}^{ki}$ 
with substitution $\mathcal{C}_{ej} \to \mathcal{C}_{nj} $ forms a clique tree of $\mathcal{G}_n$.\\
Other maximal cliques can be arranged arbitrarily provided that $\mathcal{T}$ is tree-structured. Equality constraints only need to be defined for overlapping entries between maximal cliques within the induced graphs in properties \textit{(\romannumeral1)} and \textit{(\romannumeral2)}.

\begin{remark}\label{remark-asp-1}
    To decompose $\bm{Z} \!\succeq\! 0$, we mainly need to compute a chordal extension of $\mathcal{G}_n$, $\mathcal{K}_n$ and a clique tree of the extended $\mathcal{G}_n$.  In our work, a chordal extension of $\mathcal{G}_n$ is obtained using a fill-reducing Cholesky factorization of matrix $A_{adj} \!+\! \beta_{cf} I$. The Bron-Kerbosch algorithm \cite{4-510} is used to identify $\mathcal{K}_n$. The clique tree is obtained from a maximum-weight spanning tree of a graph with nodes corresponding to $\mathcal{C}_{nj}$ and edge weights between each node pair given by the number of shared buses in each clique pair. The maximum-weight spanning tree can be computed by the Kruskal's algorithm \cite{4-514}. Readers should refer to \cite{4-512, 4-498, 4-510, 4-514} for more details of these computations.
\end{remark}

Regarding the aggregate sparsity pattern of $\tilde{\bm{Z}}$, it is trivial to conclude that $\tilde{\mathcal{G}}$ is chordal with the set of maximal cliques given by
{
\small{
\begin{equation}\label{eq-sparsity-6}\nonumber
        \tilde{\mathcal{K}} \!=\! \bigcup_{k \in \mathcal{D}, i\in \mathbbm{T}^k} \tilde{\mathcal{K}}_i^k ~\text{with}~
\end{equation}
\begin{equation}\label{eq-sparsity-6-1}
\tilde{\mathcal{K}}_i^k \!\!=\!\! \{\mathcal{C}_{ej} | \mathcal{C}_{ej} \!\!=\!\! \tilde{\mathcal{I}}_i^k(\bm{\beta},r,\imath, j) \!\cup \tilde{\mathcal{I}}_i^k(\!\bm{\varsigma},r,\imath, j\!) \!\cup \tilde{\mathcal{I}}(-1), (\!r,\imath, j\!) \!\!\in\!\! \tilde{\mathcal{P}}\}
\end{equation}
}}
where $\tilde{\mathcal{I}}_i^k(\bm{\beta},r,\imath, j)$ represents the index of $\tilde{\bm{Z}}$ corresponding to the $j$th entries in $\bm{\beta}_{(r,\imath)i}^k$, with $(r,\imath, j) \!\in\! \tilde{\mathcal{P}} \!=\! \{0,...,n_c\} \!\times\! \mathcal{B} \!\times\! \{1,2,3\}$, and $\tilde{\mathcal{I}}_i^k(\bm{\varsigma}, r,\imath, j)$ the index of $\tilde{\bm{Z}}$ corresponding to the $j$th entries in $\bm{\varsigma}_{(r,\imath)i}^k$, with $(r,\imath, j) \!\in\! \tilde{\mathcal{P}} \!=\! \{0,...,n_c\} \!\times\! \mathcal{B} \!\times\! \{1,2,3\}$. No equality constraints need to be defined since overlapping entries are all $\mathcal{S}_{\tilde{\mathcal{I}}(-1)}(\tilde{\bm{Z}}) = 1$.

\subsection{Decomposition of the SDP Relaxation}

We further introduce symmetric matrix variables $\hat{\bm{Z}}_{\mathcal{C}_{ej}} \in \mathbb{R}^{|\mathcal{C}_{ej}| \times |\mathcal{C}_{ej}|}$, $\forall \mathcal{C}_{ej} \in \mathcal{K} \cup \tilde{\mathcal{K}}$. According to step \textit{(\romannumeral4)} of the clique decomposition approach, (P8) can be decomposed as
{
\footnotesize{

\begin{subequations}\label{eq-sparsity-7}
	\begin{align}
        &(\text{P}9)~{\min_{  \{ \hat{\bm{Z}}_{\mathcal{C}_{ej}} \} }} 
        \sum_{k \in \mathcal{D}, i \in \mathbb{T}^k} \sum_{ \mathcal{C}_{ej} \in \mathcal{K}_i^k} \Tr(\bm{P}_{\mathcal{C}_{ej}}  \hat{\bm{Z}}_{\mathcal{C}_{ej}} ) \label{eq-sparsity-7:1}\\
    &\mathrm{s.t.}_{\forall k \in \mathcal{D}, i \in \mathbb{T}^k} \nonumber\\
    ~&  \sum_{ \mathcal{C}_{ej} \in \mathcal{K}_i^k \cup \tilde{\mathcal{K}}_i^k} \Tr(\bm{P}^{\gamma}_{\mathcal{C}_{ej}}  \hat{\bm{Z}}_{\mathcal{C}_{ej}} ) \leq 0 ~~ \gamma = 1, 2,...,n_{\gamma}    \label{eq-sparsity-7:2} \\
    ~& \mathcal{S}_{I_j} (\hat{\bm{Z}}_{\mathcal{C}_{ej}}) \!-\! \mathcal{S}_{I'_j}  (\hat{\bm{Z}}_{\mathcal{C}'_{ej}}) \!\!=\! \bm{0} ~~ \forall \mathcal{C}'_{ej} \in \text{ch}( \mathcal{C}_{ej} ) \!\cap\! \mathcal{K}_{e3}^{ki}, \mathcal{C}_{ej} \!\in\! \mathcal{K}_{e3}^{ki} \label{eq-sparsity-7:3} \\
    ~&  \mathcal{S}_{I_j} \!(\!\hat{\bm{Z}}_{\mathcal{C}_{ej}}\!) \!-\! \mathcal{S}_{I'_j} (\hat{\bm{Z}}_{\mathcal{C}'_{ej}}) \!\!=\! \bm{0} ~~~~ \forall \mathcal{C}'_{ej} \!\!\in\!\! \text{ch}( \mathcal{C}_{ej} ) \!\!\cap\! \mathcal{K}^j \!,\! \mathcal{C}_{ej} \!\!\in\!\! \mathcal{K}_{e1}^{ki}, j \!\!\in\!\! \mathcal{N}_g     \label{eq-sparsity-7:4} \\
    ~&  \mathcal{S}_{I_\theta} (\hat{\bm{Z}}_{\mathcal{C}_{ej}}) \!\!-\!\! \mathcal{S}_{I'_\theta} (\hat{\bm{Z}}_{\mathcal{C}'_{ej}}) \!\!=\! \bm{0} ~~~~  \forall \mathcal{C}_{nj} \!\!\in\!\! \mathcal{K}_n,  \mathcal{C}_{ej} \!\!\in\!\! \mathcal{K}_{e3}^{ki},  \mathcal{C}'_{ej} \!\!\in\!\! \mathcal{K}_{e3}^{k(i-1)} \label{eq-sparsity-7:5} \\
    ~&  \mathcal{S}_{I_\omega} (\hat{\bm{Z}}_{\mathcal{C}_{ej}}) \!\!-\!\! \mathcal{S}_{I'_\omega} (\hat{\bm{Z}}_{\mathcal{C}'_{ej}}) \!\!=\!\! \bm{0} ~~~~ \forall j \!\in\! \mathcal{N}_g , \mathcal{C}_{ej} \!\!\in\! \mathcal{K}_{e1}^{ki}, \mathcal{C}'_{ej} \!\!\in\! \mathcal{K}_{e1}^{k(i-1)}  \label{eq-sparsity-7:6} \\
    ~& \hat{\bm{Z}}_{\mathcal{C}_{ej}} \succeq 0~~ \forall \mathcal{C}_{ej} \in \mathcal{K}_i^k \cup \tilde{\mathcal{K}}_i^k \label{eq-sparsity-7:7}
 	\end{align}
\end{subequations}
}}
Here in (\ref{eq-sparsity-7:4}), all intersections between cliques for any given $j \!\in\! \mathcal{N}_g$ correspond to the same principal submatrix of $\bm{Z}$ defined by index set $\{\mathcal{I}(M\mathbbm{1},j), \mathcal{I}(D\mathbbm{1},j), \mathcal{I}(-1)\}$; in (\ref{eq-sparsity-7:5}), $\mathcal{C}_{ej}$ and $\mathcal{C}'_{ej}$ are both corresponding to $\mathcal{C}_{nj}$; (\ref{eq-sparsity-7:2}) are equivalent to (\ref{eq-sdp-1:2}), (\ref{eq-sdp-1:4}), (\ref{eq-sdp-4:2}) to (\ref{eq-sdp-4:4}) and (\ref{eq-sdp-4:5-1}) to (\ref{eq-sdp-4:5-4}), respectively; (\ref{eq-sparsity-7:3}) and (\ref{eq-sparsity-7:4}) equate the overlapping entries; (\ref{eq-sparsity-7:5}) and (\ref{eq-sparsity-7:6}) are equivalent to the second equations in (\ref{eq-sdp-1:4}) and (\ref{eq-sdp-4:4}); and (\ref{eq-sparsity-7:7}) is equivalent to (\ref{eq-sdp-4p:4}). Thereby, large size PSD matrices in (P8), i.e., $\bm{Z}$ and $\tilde{\bm{Z}}$, are decomposed into multiple much smaller ones, i.e., $\hat{\bm{Z}}_{\mathcal{C}_{ej}}$ in (\ref{eq-sparsity-7:7}), and (P9) can be solved more efficiently than (P8).

\section{ADMM-Based Distributed Optimization for DID}\label{section-4}

The collocation method endows the DID formulation flexibility to handle various forms of disturbances and bound constraints, while an inevitable consequence is high dimension of (P2) and induced problems. Moreover, the SDP relaxation could be inexact. In this section, we propose a feasibility-embedded distributed approach for solving (P9) to address these two issues simultaneously under the framework of ADMM. For the high dimension, ADMM is utilized to separate the SDP with intractable size into a series of small-size SDPs which can be solved in parallel. To cope with inexactness of the SDP relaxation, we further embed solution feasibility into the solution process with the idea of the ADMM-based restricted low-rank approximation approach \cite{4-507}. In this way, near-globally optimal solutions of (P2) are promisingly obtained with moderate computational burdens.


\textit{Notaiton}. To simplify expressions, a special kind of variables and corresponding operations are first introduced. We call $\mathcal{X} \!\!=\!\! [\bm{X}_1, \bm{X}_2, ..., \bm{X}_n]^T$ where $\bm{X}_i$ are all matrices, a \textit{matrix vector}. If another matrix vector $\mathcal{Y} \!=\! [\bm{Y}_1, \bm{Y}_2, ..., \bm{Y}_n]^T$ satisfies that $\forall i \!\!=\!\! 1,2,...,n$, $\bm{X}_i$ and $\bm{Y}_i$ are with the same size, then $\mathcal{X}$ and $\mathcal{Y}$ are with the same size; $\mathcal{X} \!\pm\! \mathcal{Y} \!=\! [\bm{X}_1 \!\pm\! \bm{Y}_1,..., \bm{X}_n \!\pm\! \bm{Y}_n]^T$; $\mathcal{X}^{T^*} \!\circ \mathcal{Y} \!\!=\!\! [\bm{X}_1^T \bm{Y}_1, \bm{X}_2^T \bm{Y}_2,..., \bm{X}_n^T \bm{Y}_n]^T$; $\Tr(\mathcal{X}) \!\!=\!\! \sum_{i=1}^n \Tr(\bm{X}_i)$; and $\mathcal{X}^{T^*} \!\!=\!\!  [\bm{X}_1^T, \bm{X}_2^T, ..., \bm{X}_n^T]^T$. Frobenius norm of $\mathcal{X}$ is define as $\Vert \mathcal{X} \Vert_F \!\!=\!\! \sqrt{ \!\sum_{i=1}^n \!\Tr(\bm{X}_i^T \bm{X}_i) }$. A \textit{linear matrix vector function} is defined as $\mathcal{X} \!\!=\!\! [\bm{X}_1, \!\bm{X}_2, ..., \!\bm{X}_n]^T \!\!\to\!\! \mathcal{Z} \!=\! [\bm{Z}_1, \bm{Z}_2, ..., \bm{Z}_m]^T$, where $\mathcal{X}$ and $\mathcal{Z}$ are matrix vectors, and $\forall j\!=\! 1,2,...,m$, $\exists i \!=\! 1,2,...,n$ and matrix $\bm{P}_j$, let $\bm{Z}_j \!\!=\!\! \bm{P}_j \bm{X}_i \bm{P}_j^T$. $\rank(\mathcal{X}) \!\!=\!\! \max\{\rank(\bm{X}_i)| i\!=\!1,2,..,n\}$; $\vect(\mathcal{X}) \!=\! [\vect(\bm{X}_1)^T,$ $\vect(\bm{X}_2)^T,...., \vect(\bm{X}_n)^T]^T$ with $\vect(\bm{X}_i)$ being the vectorization of matrix $\bm{X}_i$; $\mathcal{X}$ is called a symmetric matrix vector or $\mathcal{X}$ is symmetric if $\forall i \!=\! 1,2,...,n$, $\bm{X}_i$ is a symmetric matrix; $\diag(\mathcal{X}) \!\!=\!\! [\diag(\bm{X}_1)^T, \diag(\bm{X}_2)^T,...,\diag(\bm{X_n})^T]^T$; $\text{lower}(\mathcal{X})\!\!=\!\! [\text{lower}(\bm{X}_1)^T\!, \text{lower}(\bm{X}_2)^T\!,...,\text{lower}(\bm{X}_n)^T]^T$ with $\text{lower}(\bm{X}_i)^T$ being the vectorization of all the entries under the main diagonal in matrix $\bm{X}_i$ following a column-major order; and $\text{upper}(\mathcal{X})$ is analogous to $\text{lower}(\mathcal{X})$ but for all the entries above the main diagonal and following a row-major order.

\subsection{Distributed Solution Approach}\label{sec-compute-dist}
We first present the distributed approach for solving (P9) without considering feasibility of solutions. Separation is conducted by decoupling portions corresponding to different sets of disturbances and time elements. Without loss of generality, it is assumed that all pairs of $(k,i)$ are divided into $N_S$ disjoint sets, i.e., $\Xi_s = \{...,(k,i),...\}$ with $s \in \mathbb{P}= \{1,...,N_S\}$ and $\forall s \in \mathbb{P}$, define the following sets:
\begin{subequations}\label{eq-6-20-1}
    \begin{align}
        & \overline{\mathcal{M}}_s \!\!=\!\! \{\!(k, i_k) | (k, i_k) \!\!\in\! \Xi_s \!\wedge\! i_k \!\!\in\!\! \text{max}^\circ \!\{i|(k, i) \!\!\in\!\! \Xi_s\!\} \!\backslash\! \{n_t^k\} \!\}\\
        & \underline{\mathcal{M}}_s \!\!=\!\! \{\!(k, i_k) | (k, i_k) \!\!\in\!\! \Xi_s \!\wedge\! i_k \!\!\in\!\! \text{min}^\circ \!\{i|(k, i) \!\in\! \Xi_s\} \!\backslash\! \{1\} \}\\
        & \tilde{\mathcal{M}}_s = \Im(\Xi_s)
    \end{align}
\end{subequations}
To make (P9) separable regarding variables corresponding to each $\Xi_s$, the following auxiliary matrix variables are created:
\begin{subequations}\label{eq-6-21}
    \begin{align}
        & \bm{Z}_{j}^{\text{md}} \in \mathbb{R}^3  & & \forall j \in \mathcal{N}_g \label{eq-6-21:1}\\
        & \bm{Z}_{\mathcal{C}_{ej}}^{\theta }  \in \mathbb{R}^{ |\mathcal{C}_{nj}| }  & &
        \forall (k,i) \in \underline{\mathcal{M}}, \mathcal{C}_{nj} \!\!\in\!\! \mathcal{K}_n, \mathcal{C}_{nj} \!\!\to\!\! \mathcal{C}_{ej} \!\!\in\!\! \mathcal{K}_{e3}^{ki} \\
        &  \bm{Z}_{\mathcal{C}_{ej}}^{\omega }  \in \mathbb{R}^{ 2 }  & & 
        \forall (k,i) \in \underline{\mathcal{M}}, j \in \mathcal{N}_g , \mathcal{C}_{ej} \in \mathcal{K}_{e1}^{ki} 
   \end{align}
\end{subequations} 
Then we divide constraints in (P9) into two groups, called inner constraints that involve only variables corresponding to cliques from the same $\mathcal{K}_i^k$, and coupling constraints that involve variables corresponding to cliques from different $\mathcal{K}_i^k$, respectively. Specifically, constraints (\ref{eq-sparsity-7:2}), (\ref{eq-sparsity-7:3}) and (\ref{eq-sparsity-7:7}) are inner constraints and the others are coupling constraints. Furthermore, (P9) can be rewritten as the following compact form:
\begin{subequations}\label{eq-6-23}
	\begin{align}
		(\text{P}10)~  {\min_{ \hat{\mathcal{Z}}_{\!s} \!\in \mathbb{Z}_{\!s}, \mathcal{Z}_{\text{a}}  \!\in \mathbb{R}_\text{a} }}  & \sum_{s \in \mathbb{P}}\! {\hat{J}_{\!s}(\! \hat{\mathcal{Z}}_{\!s} \!)} \!\!=\!\!\! \sum_{\!\!\!s \in \mathbb{P}} \! \sum_{\!(\!k,i\!) \!\in \Xi_{\!s} } \! \sum_{ \mathcal{C}_{\!e\!j} \!\in \mathcal{K}_i^k} \!\!\! \Tr(\!\bm{P}_{\mathcal{\!C}_{\!e\!j}} \! \hat{\bm{Z}}_{\!\mathcal{C}_{\!e\!j}} \!) \label{eq-6-23:1}\\ 
    \mathrm{s.t.}_{\forall s \in \mathbb{P}}
    ~&  \zeta_s^A(\hat{\mathcal{Z}}_s) -  \zeta_s^B( \mathcal{Z}_{\text{a}} )  = \mathcal{O}_s \label{eq-6-23:2}
	\end{align}
\end{subequations}
where $\zeta_s^A$ and $\zeta_s^B$ are proper linear matrix vector functions to make constraint (\ref{eq-6-23:2}) equivalent to the following ones:
{\footnotesize
\begin{subequations}\label{eq-6-25}
	\!\!\begin{align}
    ~&  \mathcal{S}_{I_j} (\hat{\bm{Z}}_{\mathcal{C}_{ej}}) \!-\! \bm{Z}_{j}^{\text{md}} \!=\! \bm{0} ~~ \forall j \!\in\! \mathcal{N}_g, \mathcal{C}_{ej} \!\!\in\! \mathcal{K}_{e1}^{ki}, (k,i) \!\!\in\! \tilde{\mathcal{M}}_s \label{eq-6-25:0}\\
    ~&  \mathcal{S}_{I_\theta} (\hat{\bm{Z}}_{\mathcal{C}_{ej}}) \!\!-\!\!  \bm{Z}_{\mathcal{C}_{ej}}^{\theta }  \!\!=\!\!  \bm{0} ~~  \forall \mathcal{C}_{nj} \!\!\in\!\! \mathcal{K}_n, \mathcal{C}_{ej} \!\!\in\!\! \mathcal{K}_{e3}^{ki}, (k,i) \!\in\! \underline{\mathcal{M}}_s \label{eq-6-25:1}\\
    ~&  \mathcal{S}_{I'_\theta} \!(\!\hat{\bm{Z}}_{\mathcal{C}'_{ej}}\!)\!  \!\!-\!\!  \bm{Z}_{\mathcal{C}_{ej}}^{\theta }  \!\!\!=\!  \bm{0} ~~\forall \mathcal{C}_{nj} \!\!\in\!\! \mathcal{K}_n,\! \mathcal{C}'_{\!ej} \!\!\in\!\! \mathcal{K}_{e3}^{k(\!i\!-\!1\!)}\!\!, \mathcal{C}_{ej} \!\!\in\!\! \mathcal{K}_{e3}^{ki}, \!(\!k,\!i\!\!-\!\!1) \!\!\in\!\! \overline{\mathcal{M}}_{\!s}  \label{eq-6-25:2}\\
    ~&  \mathcal{S}_{I_\omega} (\hat{\bm{Z}}_{\mathcal{C}_{ej}}) \!\!-\!\! \bm{Z}_{\mathcal{C}_{ej}}^{\omega } \!\!=\! \bm{0} ~~ \forall j \!\in\! \mathcal{N}_g , \mathcal{C}_{ej} \!\in\! \mathcal{K}_{e1}^{ki}, (k,i) \!\in\! \underline{\mathcal{M}}_s \\
    ~&  \mathcal{S}_{\!I'_\omega} \!(\!\hat{\bm{Z}}_{\mathcal{C}'_{\!ej}}\!)\! \!-\!\! \bm{Z}_{\mathcal{C}_{\!ej}}^{\omega } \!\!\!=\!\! \bm{0}
    ~~\forall j \!\!\in\!\ \mathcal{N}_g,\! \mathcal{C}'_{\!ej} \!\!\in\! \mathcal{K}_{e1}^{k\!(\!i\!-\!1\!)}\!\!, \mathcal{C}_{ej} \!\!\in\!\! \mathcal{K}_{e1}^{ki}, \!(\!k,\!i\!-\!1\!) \!\in\! \overline{\mathcal{M}}_s 
	\end{align} 
\end{subequations}
}
with $\mathcal{C}_{ej}$ and $\mathcal{C}'_{ej}$ corresponding to $\mathcal{C}_{nj}$ in (\ref{eq-6-25:1}) and (\ref{eq-6-25:2}).

Clearly, (P9) can be regarded as a consensus problem with both global and local variables. For any given $j \!\in\! \mathcal{N}_g$, all $\mathcal{S}_{I_j} (\hat{\bm{Z}}_{\mathcal{C}_{ej}})$ with $\mathcal{C}_{ej} \!\in\! \mathcal{K}_{e1}^{ki}, (k,i) \!\in\! \tilde{\mathcal{M}}_s$ and $s \!\in\! \mathbb{P}$, achieve consensus globally; for any given  $(k,i) \!\in\! \underline{\mathcal{M}}_s$, $ \mathcal{S}_{I_\theta} (\hat{\bm{Z}}_{\mathcal{C}_{ej}})$ and $\mathcal{S}_{I'_\theta} (\hat{\bm{Z}}_{\mathcal{C}'_{ej}})$, with $\mathcal{C}_{nj} \!\!\in\!\! \mathcal{K}_n, \mathcal{C}_{nj} \!\!\to\!\! \mathcal{C}_{ej} \!\!\in\!\! \mathcal{K}_{e3}^{ki}$ and $\mathcal{C}_{nj} \!\!\to\!\! \mathcal{C}'_{ej} \!\!\in\!\! \mathcal{K}_{e3}^{k(i-1)}$, achieve consensus locally; and for any given  $(k,i) \in \underline{\mathcal{M}}_s$ and $j \in \mathcal{N}_g$, $ \mathcal{S}_{I_\omega} (\hat{\bm{Z}}_{\mathcal{C}_{ej}})$ and $\mathcal{S}_{I'_\omega} (\hat{\bm{Z}}_{\mathcal{C}'_{ej}})$, with $ \mathcal{C}_{ej} \in \mathcal{K}_{e1}^{ki}$ and $\mathcal{C}'_{ej} \in \mathcal{K}_{e1}^{k(i-1)}$, achieve consensus locally. According to the consensus ADMM \cite{4-461}, iterations for solving (P10) are expressed as
{\small{
\begin{subequations}\label{eq-6-28}
    \begin{align}
        & \hat{\mathcal{Z}}_s^{(\kappa + 1)} \!:=\! \argmin_{ \hat{\mathcal{Z}}_s \in \mathbb{Z}_s} 
        \!\left(\!\!\!
        \begin{aligned}
            \hat{J}_s( \hat{\mathcal{Z}}_s ) + \Tr( \mathcal{A}_s^{(\kappa)T^*} \!\!\!\! \circ\! \zeta_s^A(\hat{\mathcal{Z}}_s) ) \\
            + \frac{\rho}{2}\Vert \zeta_s^A(\hat{\mathcal{Z}}_s) \!-\!  \zeta_s^B( \mathcal{Z}_{\text{a}}^{(\kappa)} ) \Vert_F^2
        \end{aligned}
        \!\right)\!~~ \forall s \!\in\! \mathbb{P} \label{eq-6-28:1}\\
        & \mathcal{Z}_{\text{a}}^{(\kappa + 1)} \!\!:= \nonumber \\
        & \begin{tikzpicture}
            \matrix(B)[matrix of math nodes, row sep={0.9cm,between origins}, inner sep=-0.2pt, nodes={node style ge}, left delimiter={[}, right delimiter={]}] at (0,0){
                \!\!\bm{Z}_{j}^{\text{md} (\kappa + 1) }   \!\!\\ 
                \!\!\bm{Z}_{\mathcal{C}_{ej}}^{\theta (\kappa + 1) }  \!\!\\
                \!\!\bm{Z}_{\mathcal{C}_{ej}}^{\omega (\kappa + 1) } \!\! \\
            };
            \matrix(C)[matrix of math nodes, nodes={node style ge}] at (1,-0.15){
                \!\!\!\!\!\!=\! \\
            };
            \matrix(D)[matrix of math nodes, row sep={0.9cm,between origins}, inner sep=-0.2pt, nodes={node style ge}, left delimiter={[}, right delimiter={]}] at (3.1,0){
                \frac{1}{|\mathbb{P}|} \sum_{ (k,i) \in \tilde{\mathcal{M}}  } \mathcal{S}_{I_j} (\hat{\bm{Z}}_{\mathcal{C}_{ej}}^{(\kappa + 1)}  )   \\
                \!\!\! \frac{1}{2} \!\!\left(\!\mathcal{S}_{I_\theta} \!(\!\hat{\bm{Z}}_{\mathcal{C}_{ej}}^{(\kappa + 1)} ) \!\!+\!\!   \mathcal{S}_{I'_\theta} (\hat{\bm{Z}}_{\mathcal{C}'_{ej}}^{(\kappa + 1)} ) \! \right) \! \\
                \!\! \frac{1}{2} \!\!\left(\!\mathcal{S}_{I_\omega} \!(\!\hat{\bm{Z}}_{\mathcal{C}_{ej}}^{(\kappa + 1)} ) \!\!+\!\!   \mathcal{S}_{I'_\omega} (\hat{\bm{Z}}_{\mathcal{C}'_{ej}}^{(\kappa + 1)} )  \!\right) \! \\
            };
            \matrix(E)[matrix of math nodes, row sep={0.9cm,between origins}, inner sep=0.2pt, nodes={node style ge}] at (6.1,0){
                \forall j \!\!\in\!\! \mathcal{N}_g  \\
                \begin{aligned}
                    \forall \mathcal{C}_{nj} \!\!\in\!\! \mathcal{K}_n, \\
                    (k,i) \!\!\in\!\! \underline{\mathcal{M}}
                \end{aligned}\\
                \begin{aligned}
                    \forall j\in \mathcal{N}_g, \\
                    (k,i) \!\!\in\!\! \underline{\mathcal{M}}
                \end{aligned} \\
            };  
            \draw [dashed] (-0.6,0.525) to (6.9,0.525);
            \draw [dashed] (-0.6,-0.383) to (6.9,-0.383);
        \end{tikzpicture} \label{eq-6-28:2}\\
        &  \mathcal{A}_s^{(\kappa + 1)} \!:=\! \mathcal{A}_s^{(\kappa)} \!\!+\! \rho\! \left(\! \zeta_s^A(\hat{\mathcal{Z}}_s^{(\kappa + 1)}) \!-\!  \zeta_s^B(\! \mathcal{Z}_{\text{a}}^{(\kappa + 1)} \!)  \!\right) ~ \forall s \!\in\! \mathbb{P} .\label{eq-6-28:3}
   \end{align}
\end{subequations}
}}
$\!\!\!$where $\mathcal{A}_s$ is the matrix vector of dual variables associated with constraints (\ref{eq-6-23:2}) and with the same size as $\zeta_s^A(\hat{\mathcal{Z}}_s)$; in (\ref{eq-6-28:2}), for the upper block, $\mathcal{C}_{ej} \in \mathcal{K}_{e1}^{ki}$, for the middle block, $\mathcal{C}_{nj} \!\!\to\!\! \mathcal{C}_{ej} \!\!\in\!\! \mathcal{K}_{e3}^{ki}$ and $\mathcal{C}_{nj} \!\!\to\!\! \mathcal{C}'_{ej} \!\!\in\!\! \mathcal{K}_{e3}^{k(i-1)}$, and for the lower block, $\mathcal{C}_{ej} \in \mathcal{K}_{e1}^{ki}$ and $\mathcal{C}'_{ej} \in \mathcal{K}_{e1}^{k(i-1)}$. The objective function in (\ref{eq-6-28:1}) is the augmented Lagrangian of (P10), and (\ref{eq-6-28:2}) and (\ref{eq-6-28:3}) update auxiliary variables and dual variables, respectively. We refer the readers to \cite{4-461} for detailed derivations of (\ref{eq-6-28}).

The Frobenius norm in (\ref{eq-6-28:1}) makes updates of primal variables not SDPs. However, by introducing slack variables, (\ref{eq-6-28:1}) is equivalent to
\begin{equation}\label{eq-6-28-1}
    \{ \hat{\mathcal{Z}}_s^{(\kappa + 1)}, \cdot \} \!:=\! 
        \argmin_{ \{ \hat{\mathcal{Z}}_s, \varphi_s \} \in \mathbb{Z}_s \cap \mathbb{Z}_s^{\phi} } 
           \hat{J}_s( \hat{\mathcal{Z}}_s ) + \frac{\rho}{2} \varphi_s  ~~~ \forall s \in \mathbb{P}
\end{equation}
with feasible region $\mathbb{Z}_s^{\phi}$ defined by
\begin{equation}\label{eq-6-28-2}
    \left[\!\!
    \begin{array}{cc}
        \varphi_s & \vect \left( \zeta_s^A(\hat{\mathcal{Z}}_s) \!\!-\!\! ( \zeta_s^B( \mathcal{Z}_{\text{a}}^{(\kappa)} ) \!\!-\!\! \frac{1}{\rho} \mathcal{A}_s^{(\kappa)} ) \right)^T        \\
        * & I \\   
    \end{array}
    \!\! \right]  \!\! \succeq 0.
\end{equation}
Here and hereafter, "$*$" is used to denote partial entries of symmetric matrices. Therefore updates of primal variables can still be conducted by solving SDPs.

\begin{remark}\label{remark-sdp-reduction}
    The size of SDP constraint (\ref{eq-6-28-2}) can be reduced by considering the symmetry of $\zeta_s^A(\hat{\mathcal{Z}}_s) \!-\! ( \zeta_s^B( \mathcal{Z}_{\mathrm{a}}^{(\kappa)} ) \!-\! \frac{1}{\rho} \mathcal{A}_s^{(\kappa)} )$ and introducing multiple slack variables. For simplicity, we use $\zeta$ to represent $\zeta_s^A(\hat{\mathcal{Z}}_s) \!-\! ( \zeta_s^B( \mathcal{Z}_{\mathrm{a}}^{(\kappa)} ) \!-\! \frac{1}{\rho} \mathcal{A}_s^{(\kappa)} )$ in this remark only. In constraint (\ref{eq-6-28-2}), $\zeta_s^A(\hat{\mathcal{Z}}_s)$ and $\zeta_s^B( \mathcal{Z}_{\mathrm{a}}^{(\kappa)})$ are symmetric for all $\kappa \geq 0$. Thus $\zeta$ is symmetric as long as $\mathcal{A}_s^{(\kappa)}$ is symmetric, which can be guaranteed by setting $\mathcal{A}_s^{(0)}$ to a symmetric matrix vector. Furthermore, $\mathrm{upper}(\zeta) = \mathrm{lower}(\zeta)$ with $\zeta$ being symmetric. Therefore, if $\mathcal{A}_s^{(0)}$ is symmetric, constraint (\ref{eq-6-28-2}) can be replaced by the following equivalent form:
    \begin{equation}\label{eq-remark-2-1}
        \left\{
        \begin{aligned}
            & \varphi_s = 2 \sum_{i=1}^{n_\mathrm{u}}  \varphi_{s,i}^{\mathrm{u}} + \sum_{i=1}^{n_\mathrm{d}} \varphi_{s,i}^{\mathrm{d}}\\
            & \left[\!\!\!\!
             \begin{array}{cc}
                \varphi_{s,i}^{\mathrm{d}} & \diag(\zeta)^T_i  \\
                * & I 
            \end{array}
            \!\!\!\! \right]  \!\! \succeq 0,  ~~~~ i = 1,2,...,n_\mathrm{d}\\
            & \left[\!\!\!\!
            \begin{array}{cc}
               \varphi_{s,i}^{\mathrm{u}} & \mathrm{upper}(\zeta)^T_i  \\
               * & I 
           \end{array}
           \!\!\!\! \right]  \!\! \succeq 0,  ~~~~ i = 1,2,...,n_\mathrm{u}
        \end{aligned}
        \right.
    \end{equation}
\end{remark} 


\subsection{Feasibility-Embedded Distributed Solution Approach}
Considering the SDP relaxation before being decomposed, i.e., (P8), when the relaxation is inexact, the optimal solution of (P8) is infeasible to (P7) and thus cannot be used for DID. This feasibility is determined by whether the solution satisfies rank constraints (\ref{eq-sdp-4p:5}). In the following, we further embed feasibility of solutions into the above solving approach by taking the rank constraints into account.

Corollary \ref{corollary-1} (see Appendix\ref{appendix-ppa}) gives the condition for principal submatrices in which the symmetric partially specified matrix has a completion that is not only positive semi-definite but also rank-1. With Corollary \ref{corollary-1} and the fact that $\hat{\bm{Z}}_{\mathcal{C}_{ej}}$ contains at least one non-zero entry, rank constraints (\ref{eq-sdp-4p:5}) are decomposed into
\begin{equation}\label{eq-sparsity-8}
    \text{rank}( \hat{\bm{Z}}_{\mathcal{C}_{ej}}  ) = 1 ~~~ \forall \mathcal{C}_{ej} \in \mathcal{K} \cup \tilde{\mathcal{K}}
\end{equation}

Now we consider (P10) with rank constraints. First $\forall s \in \mathbb{P}$, introduce an auxiliary matrix vector variable $\mathcal{Y}_s \!=\![\bm{Y}_{\mathcal{C}_{ej}}]^T$ with $\mathcal{C}_{ej} \!\in\! \mathcal{K}_s \!=\! \bigcup_{(k,i) \in \Xi_s} ( \mathcal{K}_i^k \!\cup\! \tilde{\mathcal{K}}_i^k )$, and $\mathcal{Y}_s$ and $\hat{\mathcal{Z}}_s$ are of the same size. Then (P10) with rank constraints can be formulated as
\begin{subequations}\label{eq-6-30}
	\begin{align}
		(\text{P}11)~  {\min_{ \hat{\mathcal{Z}}_s \in \mathbb{Z}_{s}, \mathcal{Z}_{\text{a}}  \in \mathbb{R}_\text{a}, \mathcal{Y}_s \in \mathbb{Y}_s }}  & \sum_{s \in \mathbb{P}} {\hat{J}_{s}( \hat{\mathcal{Z}}_s )} \label{eq-6-30:1}\\ 
    \mathrm{s.t.}_{\forall s \in \mathbb{P}}
    ~&  \zeta_s^A(\hat{\mathcal{Z}}_s) -  \zeta_s^B( \mathcal{Z}_{\text{a}} )  = \mathcal{O}_s \label{eq-6-30:2}\\
    ~&  \hat{\mathcal{Z}}_s -  \mathcal{Y}_s = \mathcal{O}'_s \label{eq-6-30:3}
	\end{align}
\end{subequations}   
The augmented (partial) Lagrangian of (\ref{eq-6-30}) is written as
\begin{equation}\label{eq-6-32}
        L = \sum_{s \in \mathbb{P}} L_s(\hat{\mathcal{Z}}_s, \mathcal{Z}_{\text{a}}, \mathcal{Y}_s, \mathcal{A}_s, \tilde{\mathcal{A}_s} )
\end{equation}
with
\begin{equation}\label{eq-6-32-1}
    \!\!\!\!\begin{aligned}
        & L_s(\hat{\mathcal{Z}}_s, \mathcal{Z}_{\text{a}}, \mathcal{Y}_s, \mathcal{A}_s, \tilde{\mathcal{A}_s} )  =  \hat{J}_{s}( \hat{\mathcal{Z}}_s )\\
        &  +\!  \Tr \!\left( \mathcal{A}_s^{T^*} \!\!\!\circ\! [\zeta_s^A(\hat{\mathcal{Z}}_s) \!-\!  \zeta_s^B( \mathcal{Z}_{\text{a}} ) ]\! \right) \!\!+\! \frac{\rho}{2} \Vert \zeta_s^A(\hat{\mathcal{Z}}_s) \!-\!  \zeta_s^B( \mathcal{Z}_{\text{a}} ) \Vert_F^2  \\
        &  +\! \Tr \!\left( \tilde{\mathcal{A}}_s^{T^*} \!\!\circ\! ( \hat{\mathcal{Z}}_s -  \mathcal{Y}_s ) \right)  +  \frac{\tilde{\rho}}{2} \Vert \hat{\mathcal{Z}}_s - \mathcal{Y}_s \Vert_F^2, 
    \end{aligned}
\end{equation}

Furthermore, iterations for solving (P11) are given by \textit{Step 1)} to \textit{Step 3)} as follows:

\noindent
\textit{Step 1) Update primal variables} 

The update of primal variables $\hat{\mathcal{Z}}_s$ is given as
\begin{equation}\label{eq-6-33-1}
    \hat{\mathcal{Z}}_s^{(\kappa + 1)} \!:=\! 
        \argmin_{ \hat{\mathcal{Z}}_s \in \mathbb{Z}_s}  
        \!L_s(\!\hat{\mathcal{Z}}_s, \!\mathcal{Z}_{\text{a}}^{(\kappa)}, \!\mathcal{Y}_s^{(\kappa)}\!,\! \mathcal{A}_s^{(\kappa)}\!,\! \tilde{\mathcal{A}}_s^{(\kappa)} ) ~~  \forall s \in \mathbb{P}
\end{equation}
Analogously to (\ref{eq-6-28-1}), by introducing slack variables for each subproblem, (\ref{eq-6-33-1}) is equivalent to the following SDP: 
\begin{equation}\label{eq-6-34}
    \begin{aligned}
        & \{ \hat{\mathcal{Z}}_s^{(\kappa + 1)}, \cdot, \cdot \} \!:=\! \\
        & \argmin_{ \{ \hat{\mathcal{Z}}_s, \varphi_s, \tilde{\varphi}_s \} \in \mathbb{Z}_s \cap \mathbb{Z}_s^{\phi} \cap \tilde{\mathbb{Z}}_s^{\phi}  } 
        \hat{J}_s( \hat{\mathcal{Z}}_s ) + \frac{\rho}{2} \varphi_s +  \frac{\tilde{\rho}}{2} \tilde{\varphi}_s ~~~ \forall s \in \mathbb{P}
    \end{aligned}
\end{equation}
with the feasible region $\mathbb{Z}_s^{\phi}\!$ defined by (\ref{eq-6-28-2}) and $\tilde{\mathbb{Z}}_s^{\phi}$ defined by
\begin{equation}\label{eq-6-35}
    \left[
    \begin{array}{cc}
        \tilde{\varphi}_s & \vect \left( \hat{\mathcal{Z}}_s \!-\! ( \mathcal{Y}_s^{(\kappa)} \!\!-\! \frac{1}{\tilde{\rho}} \tilde{\mathcal{A}}_s^{(\kappa)} ) \right)^{\!\!T} \\
        * & I \\   
    \end{array}
     \right]  \!\!\succeq \!0.
\end{equation}

\begin{remark}
    If $\tilde{\mathcal{A}}_s^{(0)}$ is symmetric, $\hat{\mathcal{Z}}_s - ( \mathcal{Y}_s^{(\kappa)} - \frac{1}{\tilde{\rho}} \tilde{\mathcal{A}}_s^{(\kappa)} )$ is also symmetric (see Remark \ref{remark-8} for the reason). Then analogously to Remark \ref{remark-sdp-reduction}, PSD constraints (\ref{eq-6-35}) can be reduced to multiple small PSD constraints.    
\end{remark}

\noindent
\textit{Step 2) Update auxiliary variables} 

The update of auxiliary variables $\mathcal{Z}_{\text{a}}$ and $\mathcal{Y}_s$ is given as
\begin{equation}\label{eq-6-33-2}
    \begin{aligned}
        & \{ \mathcal{Z}_{\text{a}}^{(\kappa + 1)},...,  \mathcal{Y}_s^{(\kappa + 1)},... \} \!:=\! \\
        & \argmin_{  \mathcal{Z}_{\text{a}}  \in \mathbb{R}_\text{a}, \mathcal{Y}_s \in \mathbb{Y}_s   }  
            \sum_{s \in \mathbb{P}}    L_s(\hat{\mathcal{Z}}_s^{(\kappa + 1)}, \mathcal{Z}_{\text{a}}, \mathcal{Y}_s, \mathcal{A}_s^{(\kappa)}, \tilde{\mathcal{A}}_s^{(\kappa)} ) 
    \end{aligned}
\end{equation}
where computing for $\mathcal{Z}_a^{(\kappa + 1)}$ and each $\mathcal{Y}_s^{(\kappa + 1)}$ can be conducted individually by separating (\ref{eq-6-33-2}) into
\begin{subequations}\label{eq-6-36}
    \begin{align}
        & \mathcal{Z}_{\text{a}}^{(\kappa + 1)} \!\!=\!\! \argmin_{  \mathcal{Z}_{\text{a}}  \in \mathbb{R}_\text{a} }  \sum_{s \in \mathbb{P}}  \!  L_s(\hat{\mathcal{Z}}_s^{(\kappa + 1)}\!,\! \mathcal{Z}_{\text{a}}, \mathcal{Y}_s^{(\kappa)}, \mathcal{A}_s^{(\kappa)}\!, \tilde{\mathcal{A}}_s^{(\kappa)} ) \label{eq-6-36:1}\\
        & \mathcal{Y}_s^{(\kappa + 1)} \!\!=\!\! \argmin_{ \mathcal{Y}_s \in \mathbb{Y}_s }  L_s(\hat{\mathcal{Z}}_s^{(\kappa + 1)}\!,\! \mathcal{Z}_{\text{a}}^{(\kappa)}\!,\! \mathcal{Y}_s,\! \mathcal{A}_s^{(\kappa)}\!,\! \tilde{\mathcal{A}}_s^{(\kappa)} \!) ~ \forall s \!\!\in\!\! \mathbb{P}  \label{eq-6-36:2}
    \end{align}
\end{subequations}
Here (\ref{eq-6-36:1}) is the same as the update of auxiliary variables in consensus ADMM,  which can be formulated as the simpler form given by (\ref{eq-6-28:2}). By Proposition \ref{pro-2-1-added} (see Appendix\ref{appendix-ppa}), matrices in $\mathcal{Y}_s$ can be updated in parallel. More importantly, updating of each matrix, i.e., $\!\bm{Y}_{\mathcal{C}_{ej}}$, is essentially a low rank approximation problem. This problem is non-convex due to rank constraints but an optimal solution can be given by the Eckart-Young-Mirsky Theorem \cite{4-508}. Accordingly, updates of $\mathcal{Y}_s$ can be conducted exactly and analytically as
\begin{equation}\label{eq-6-39}
    \mathcal{Y}_s^{(\kappa + 1)} = \left[ \sigma_{\mathcal{C}_{ej}}^1 u_{\mathcal{C}_{ej}}^1 v_{\mathcal{C}_{ej}}^{1~ T}  \right]^T  ~\text{with}~ \mathcal{C}_{ej} \in \mathcal{K}_s ~~~\forall s \in \mathbb{P}
\end{equation}

\begin{remark}\label{remark-8}
    For all $\kappa \geq 0$,  $\tilde{\mathcal{Z}}_s^{(\kappa + 1)}$ is symmetric, and the same for $\mathcal{Y}_s^{(\kappa + 1)}$ according to (\ref{eq-6-39}). Then by (\ref{eq-6-33-3}), as long as $\tilde{\mathcal{A}}_s^{(0)}$ is symmetric, $\tilde{\mathcal{A}}_s^{(\kappa)}$ is symmetric and thus the same for matrix $ \hat{\bm{Z}}_{\mathcal{C}_{ej}}^{(\kappa + 1)}  \!+\! \frac{1}{\tilde{\rho}}  \hat{\bm{\Lambda}}_{\mathcal{C}_{ej}}^{(\kappa)}$ for all $\kappa \geq 0$. Therefore, the singular value decomposition (SVD) of matrix $\hat{\bm{Z}}_{\mathcal{C}_{ej}}^{(\kappa + 1)}  + \frac{1}{\tilde{\rho}}  \hat{\bm{\Lambda}}_{\mathcal{C}_{ej}}^{(\kappa)}$ is degenerated into an eigenvalue decomposition if $\tilde{\mathcal{A}}_s^{(0)}$ is symmetric.
\end{remark}


\noindent
\textit{Step 3) Update dual variables}

The update of auxiliary variables $\mathcal{A}_s$ is the same as (\ref{eq-6-28:3}), and that of $\tilde{\mathcal{A}}_s$ is given by
\begin{equation}\label{eq-6-33-3}
    \tilde{\mathcal{A}}_s^{(\kappa + 1)} \!:=\! \tilde{\mathcal{A}}_s^{(\kappa)} \!\!+\!\! \tilde{\rho} ( \hat{\mathcal{Z}}_s^{(\kappa + 1)} -  \mathcal{Y}_s^{(\kappa + 1)}  )~~~  \forall s \!\!\in\!\! \mathbb{P} 
\end{equation}

 
\begin{algorithm}
    \caption{Feasibility-embedded distributed approach}
    \label{alg_1} 
    \begin{algorithmic}[1]
            \REQUIRE $N_S$, $\Xi_s$, $\rho$, $\tilde{\rho}$, $\epsilon^{\text{abs}}$ and $\epsilon^{\text{rel}}$
            \ENSURE $M, D$
            \STATE Initialize $\mathcal{Z}_a^{(0)}$, $\mathcal{Y}_s^{(0)}$, $\mathcal{A}_s^{(0)}$, $\tilde{\mathcal{A}}_s^{(0)}$ and ${\kappa} \leftarrow -1$
            \REPEAT
                \STATE ${\kappa} \leftarrow \kappa + 1$

                \LINEFOR{$s \leftarrow 1$ to $N_S$} {$\hat{\mathcal{Z}}_s^{(\kappa + 1)}$ $\leftarrow$ Eq. (\ref{eq-6-34})}%

                \STATE $\mathcal{Z}_{\text{a}}^{(\kappa + 1)}$ $\leftarrow$ Eq. (\ref{eq-6-28:2})
                
                \FOR{$s \leftarrow 1$ to $N_S$}                    
                        \FOR{\textbf{each} $\mathcal{C}_{ej}$ in $\mathcal{K}_s $}
                        \STATE $\{\sigma_{\mathcal{C}_{ej}}^1, u_{\mathcal{C}_{ej}}^1, v_{\mathcal{C}_{ej}}^{1~ T}\}$ $\leftarrow$ SVD for $\hat{\bm{Z}}_{\mathcal{C}_{ej}}^{(\kappa + 1)}  + \frac{1}{\tilde{\rho}}  \hat{\bm{\Lambda}}_{\mathcal{C}_{ej}}^{(\kappa)}$
                        \ENDFOR
                        \STATE $\mathcal{Y}_s^{(\kappa + 1)}$ $\leftarrow$ Eq. (\ref{eq-6-39})
                \ENDFOR

                \LINEFOR{$s \leftarrow 1$ to $N_S$} {$\{\mathcal{A}_s^{(\kappa + 1)}, \tilde{\mathcal{A}}_s^{(\kappa + 1)}\} \leftarrow$Eq. (\ref{eq-6-28:3}, \ref{eq-6-33-3})}

                \STATE Compute $r^{(\kappa + 1)}$, $s^{(\kappa+1)}$, $\epsilon^{\text{pri}(\kappa + 1)}$ and $\epsilon^{\text{dual}(\kappa + 1)}$.

            \UNTIL $\Vert r^{(\kappa + 1)} \Vert_2 < \epsilon^{\text{pri}(\kappa+1)} \wedge \Vert s^{(\kappa + 1)} \Vert_2 < \epsilon^{\text{dual}(\kappa+1)}$
            \STATE $m_j \!\!\leftarrow\!\! \bm{Z}_{j, (1,2)}^{\text{md}}$, $d_j \!\!\leftarrow\!\! \bm{Z}_{j,(1,3)}^{\text{md}}$, $\forall j \!\!\in\! \mathcal{N}_g$
    \end{algorithmic}
\end{algorithm}

Finally, Algorithm \ref{alg_1} shows the pseudocode of the proposed feasibility-embedded distributed approach. Here computations in line 4, accounting for almost entire computational efforts, can be conducted in parallel across at most $N_S$ processors.


\section{Case Study}\label{section-5}

The proposed numerical method for DID is tested on five systems, including the simplified 14-generator Australian (AU14Gen) \cite{4-118}, IEEE 14-bus, IEEE 39-bus, IEEE 118-bus and ACTIVSg200 systems \cite{4-517}. Six normal steady-state operating conditions of the AU14Gen system, named case 1 to case 6 following Table 1 in \cite{4-118}, are also used to demonstrate the necessity of DID. IPOPT interfaced by Pyomo, and MOSEK interfaced by CVXPY, are employed to sovle NLPs and SDPs, respectively. All computations are carried out on a Linux 64-Bit server with 2 Intel(R) Xeon(R) E5-2640 v4 @ 2.40GHz CPUs (a total of 40 processors provided) and 125GB RAM. Distributed computing across multiple processors for line 5 of Algorithm \ref{alg_1} is realized using Ray \cite{4-523}.

\subsection{Parameter Setting}

Two generator settings are considered to simulate different operating modes or composition of generators. For the AU14Gen system, all generators are modelled as inverters in $\mathcal{N}_{v_{dm}}$, where both virtual inertia and damping of all inverters are dispatchable. For other test power systems, generators are set in $\mathcal{N}_{v_o}$ (or $\mathcal{N}_g$), $\mathcal{N}_{v_d}$ (low inertia), $\mathcal{N}_{v_d}$ (high inertia), $\mathcal{N}_{v_m}\!$ and $\!\mathcal{N}_{v_{dm}}$ circularly. Parameters of each set of generators are given in Table \ref{tb_appendix_4_1} in Appendix\ref{appendix_5}. Transient reactances of synchronous generators are ignored for simplicity. The base MVA is 100 MVA. For each load, $d_{li} \!=\! 0.01~\text{p.u.} \cdot \text{s} / \text{rad}$. For each test system, $\mathcal{D}$ contains four disturbances with parameters given in Table \ref{tb_appendix_4_2} of Appendix\ref{appendix_5}. Matrices $W_1$ to $W_5$ are all set to identity matrices with proper dimension. 
For the setting of time horizon, $t_0 \!=\! 0~\text{s}$ and $t_f \!=\! 30~\text{s}$. Frequency bounds $\overline{\omega}^k(t)$ and $\underline{\omega}^k(t)$ are give in Table \ref{tb_appendix_4_3} in Appendix\ref{appendix_5}, referring to draft NEM mainland frequency operating standards of interconnected systems \cite{4-243}; and $\overline{\delta} = 3\pi/4$.

In the NLP formulation of DID, 3rd-order Radua collocation and $n_t^k \!=\! 20$ are employed. In Approximation \ref{approx-2}, $\theta_b \!=\! 0.580001$ to minimize the approximation error according to Appendix\ref{appendix_3}, where $\epsilon\!\!=\!\! 2.2155 \!\times\! 10^{-4}$. In the fill-reducing Cholesky factorization, $\beta_{cf} \!=\! 100$ can guarantee positive definiteness of $A_{adj} \!+\! \beta_{cf} I$ for all systems. In the feasibility-embedded distributed approach, $N_S\!=\!40$, $\Xi_s \!=\! \{(k,2s \!-\! 1), (k,2s)\}$ with $k \!\in\! \mathcal{D}$ and $s \in \mathbb{P}$, $\epsilon^{\text{abs}} \!=\! 10^{-5}$ and $\epsilon^{\text{rel}} \!=\! 10^{-3}$ \cite{4-461}.


\subsection{Numerical Results}


The proposed feasibility-embedded distributed approach (FEDA) is used to solve the DID problem, for all test systems. Fig. \ref{fig-cs-2} and Fig. \ref{fig-cs-3} show the progress of the primal and dual residual norms by iteration, for six cases of the AU14Gen power system and other four test systems, respectively. The dashed lines show the feasibility tolerances $\epsilon_{\text{pri}}$ and $\epsilon_{\text{dual}}$. The vertical dotted lines show when the feasibility tolerance is satisfied, and the rightmost vertical dotted line shows when the stopping criterion of Algorithm \ref{alg_1}, i.e., line 14, has been satisfied. We run the FEDA for 100 iterations to show the continued progress while the stopping criterion could be satisfied beforehand. According to Fig. \ref{fig-cs-2} and Fig. \ref{fig-cs-3}, it can be concluded that stopping criterion of the FEDA, with certain values of penalty parameters $\rho$ and $\tilde{\rho}$, is satisfied within 100 iterations for all test systems. For the AU14Gen power system under different operating conditions, the number of iterations for convergence is slight different but all within 65 iterations. The increase of the size of power systems only causes slow growth in the number of iterations for convergence.

\begin{figure}[t]
    \includegraphics{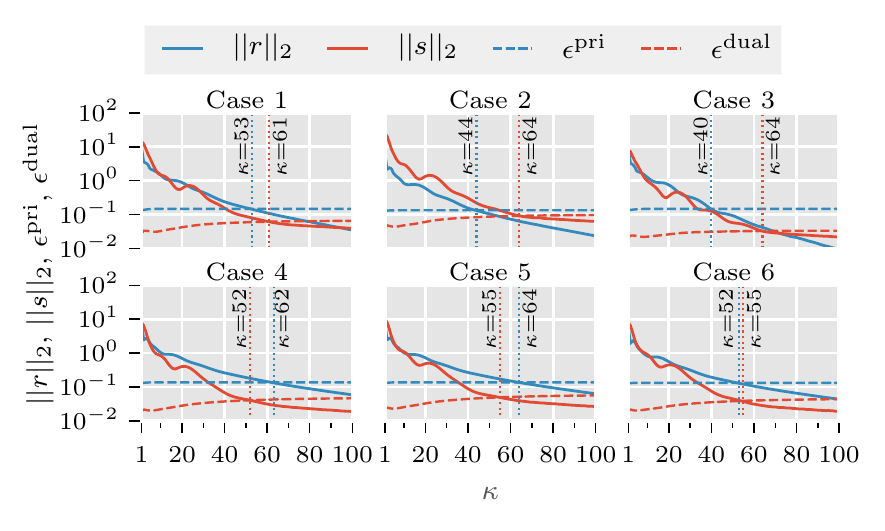}
    \caption{  
    Norms of primal residual and dual residual versus iteration, for the AU14Gen power system under operating condition case 1 to case 6. Penalty parameters $(\rho, \tilde{\rho})$ are $(1.0, 0.8)$, $(1.5, 1.0)$, $(0.8, 0.4)$, $(0.5, 0.5)$, $(0.8, 0.8)$ and $(0.5, 0.5)$, for case 1 to case 6, respectively. \hlr{Labels for the minor ticks (the shorter ones) of the abscissa are not shown while their values equal to the average of labels of the two adjacent major ticks.}}
    \label{fig-cs-2}
\end{figure}
\begin{figure}[t]
    \includegraphics{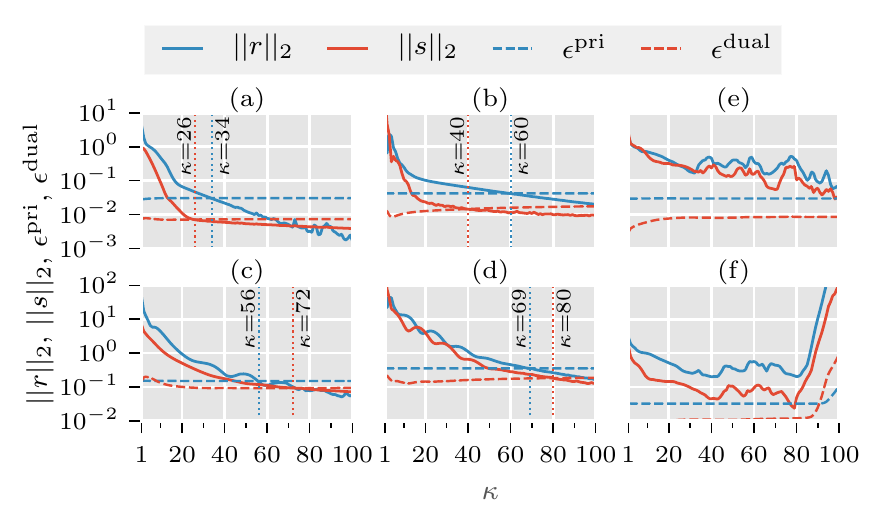}
    \caption{  
    Norms of primal residual and dual residual versus iteration, for the IEEE 14-bus (a, e, f), IEEE 39-bus (b), IEEE 118-bus (c) and ACTIVSg200 (d) power system. Penalty parameters $(\rho, \tilde{\rho})$ are $(1.0, 0.2)$, $(0.5, 1.5)$, $(2.0, 1.0)$, $(1.5, 2.0)$, $(1.0, 0.5)$ and $(0.2, 1.0)$, for (a) to (f). \hlr{Labels for the minor ticks of the abscissa are analogous to that in Fig. \ref{fig-cs-2}.}}
    \label{fig-cs-3}
\end{figure}

It can be seen that the convergence rate of $\Vert r \Vert_2$ and $\Vert s \Vert_2$ generally becomes increasingly slow as iteration progresses, which means that the FEDA can converge to modest accuracy within an acceptable number of iterations but be very slow to converge to high accuracy. This is determined by inherent convergence characteristics of ADMM \cite{4-461}. Nonetheless, modest accuracy is sufficient for the practical application of DID.

It should be noted that unlike solving convex problems by ADMM where convergence can be guaranteed under mild conditions and is immune to values of penalty parameters in the augmented Lagrangian, the FEDA is not necessarily convergent. It is found that improper values of penalty parameters can lead the FEDA to be divergent, which, however, only affect convergence time when using ADMM to solve convex problems \cite{4-461}. Taking the IEEE 14-bus system for example, with $(\rho, \tilde{\rho}) \!=\! (1.0, 0.5)$, the FEDA starts to drastically fluctuate after about 30 iterations as shown in Fig. \ref{fig-cs-3}(e), and with $(\rho, \tilde{\rho}) \!=\! (0.2, 1.0)$, the FEDA is divergent as shown in Fig. \ref{fig-cs-3}(f). For different test systems, we took different values of $(\rho, \tilde{\rho})$ to ensure convergence by trial and error. However, it seems to be far from easy to draw a general conclusion about tunning of penalty parameters. We leave the scheme of penalty parameters' tunning or adaptive adjustment for future work. 

\begin{table}[!t]
	\caption{Comparison of optimization results of NLP, SDP and the FEDA. }
    \label{tb-cs-1}
    {\footnotesize{
	\begin{tabular*}{\hsize}{@{}@{\extracolsep{\fill}}llllll@{}}
	\toprule
    \multirow{2}{*}{Test systems} & \multicolumn{3}{l}{Optimal objective values} & \multicolumn{2}{c}{\textcolor[rgb]{0,0,0.5}{$\rank(\hat{\mathcal{Z}})^*$} } \\
    \cline{2-4}  \cline{5-6}
    & NLP & SDP & FEDA & SDP & \textcolor[rgb]{0,0,0.5}{FEDA} \\
    \midrule
    Case 1           & 187.95   & 93.67  & 127.44  & 21 & \textcolor[rgb]{0,0,0.5}{1} \\
    Case 2           & 265.61   & 120.66 & 191.14  & 15 & \textcolor[rgb]{0,0,0.5}{1} \\
    Case 3           & 106.74   & 42.85  & 74.95  & 14  & \textcolor[rgb]{0,0,0.5}{1} \\
    Case 4           & 61.547   & 39.67  & 46.10  & 20  & \textcolor[rgb]{0,0,0.5}{1} \\
    Case 5           & 108.05   & 59.14  & 83.15  & 21  & \textcolor[rgb]{0,0,0.5}{1} \\
    Case 6           & 71.123   & 37.86  & 57.67  & 20  & \textcolor[rgb]{0,0,0.5}{1} \\
    IEEE 14-bus      & 116.86   & 32.55  & 54.98  & 12  & \textcolor[rgb]{0,0,0.5}{1} \\
    IEEE 39-bus      & 269.99   & 152.25 & 183.09  & 17 & \textcolor[rgb]{0,0,0.5}{3} \\
    IEEE 118-bus     & 100.21   & 63.93  & 88.72  & 12  & \textcolor[rgb]{0,0,0.5}{1} \\
    ACTIVSg200       & 115.47   & 70.40  & 81.98  & 7   & \textcolor[rgb]{0,0,0.5}{1} \\
    \bottomrule
    \end{tabular*}
    }}
    \begin{tablenotes}\footnotesize
        \item[*] $^*$ Threshold is set to $10^{-5}$ below which eigenvalues are considered zero.
    \end{tablenotes}
\end{table}

Furthermore, we compare the results of DID obtained by NLP (solving the NLP formulation (P2)), SDP (solving the decomposed SDP relaxation (P9) directly) and the FEDA. The optimal objective values and rank of solutions for them are listed in Table \ref{tb-cs-1}. Note that in computing $\rank(\hat{\mathcal{Z}})$ the threshold is set to $10^{-5}$ below which eigenvalues are considered zeros due to the fact that only modestly accurate solutions are expected while applying the FEDA. Regarding the optimal objective values in Table \ref{tb-cs-1}, we can see that for all test systems, the optimal objective value obtained by NLP is greatly larger than that obtained by SDP, and the FEDA produces a much lower optimal objective values than NLP. \textcolor[rgb]{0,0,0.5}{For rank of solutions in Table \ref{tb-cs-1}, $\rank(\hat{\mathcal{Z}})$ of the solution obtained by SDP is large than 1 for all test systems while for the solution obtained by the FEDA, $\rank(\hat{\mathcal{Z}}) \!=\! 1$ for all test systems except the IEEE 39-bus power system. But by increasing the threshold in computing $\rank(\hat{\mathcal{Z}})$ to $2 \!\times\! 10^{-5}$ or taking the solution after the 62th iteration, we still have $\rank(\hat{\mathcal{Z}}) \!=\! 1$ for the IEEE 39-bus system. Therefore, it can be concluded that for all test systems, the SDP relaxation is inexact and thus solving the decomposed SDP relaxation (P9) directly can only result in infeasible solutions to (P7). More importantly, the proposed FEDA can produce solutions which are not only with much smaller objective values than that found by NLP but also feasible to the original problem (P7) under modest tolerances. Ignoring the approximation errors between (P2) and (P7), the optimal objective value obtained by SDP gives a lower bound of objective function $\hat{J}$ in (P2) and solutions obtained by NLP and the FEDA are both a local optimum of (P2).} Clearly, the FEDA achieves a much smaller optimality gap for solutions of (P2) than NLP does for all test systems.

\hlr{
The effectiveness of the FEDA is also demonstrated by the time-domain results given by Fig. \ref{case14_1-1} to Fig. \ref{case14_4-1}. For clarity, we call the systems with the DID results obtained by NLP and the FEDA, the NLP system and the FEDA system, respectively. Here we focus on the IEEE 14-bus system for the sake of observability, and compare time-domain curves of the NLP system and FEDA system. By Fig. \ref{case14_1-1}, we can find that under the power-step disturbance, the curves of phase angle differences of branches of the NLP system and FEDA system are close, while the DID system outperforms the NLP system regarding the frequency nadir, steady-state frequency, frequency oscillation and the maximal rate of change of frequency (RoCoF). Regarding control efforts, power output changes of most generators and inverters in the FEDA system are overall larger than that in the NLP system within 0.5 s after the disturbance occurs, while in the steady state, the opposite is the case. Under the power-ramp disturbance, the DID system outperforms the NLP system regarding the steady-state frequency and the maximal RoCoF, as shown in Fig. \ref{case14_2-1}. By Fig. \ref{case14_3-1}, it is observed that under the power-fluctuation disturbance, fluctuations in $\dot{\omega}$ and especially $\omega$ of the FEDA system are smaller than that of the NLP system, while fluctuations in $\Delta p$ of the two system are close. Under the three-phase short circuit disturbance, out-performance of the FEDA system is more significant, as shown in Fig. \ref{case14_4-1}. Except for the out-performance regarding the frequency nadir, frequency oscillation and the maximal RoCoF, the FEDA system also has overall smaller oscillations in phase angle differences of branches and power output changes of generators and inverters.
}

\begin{figure}[t]
    \centering
    \includegraphics[scale=0.9]{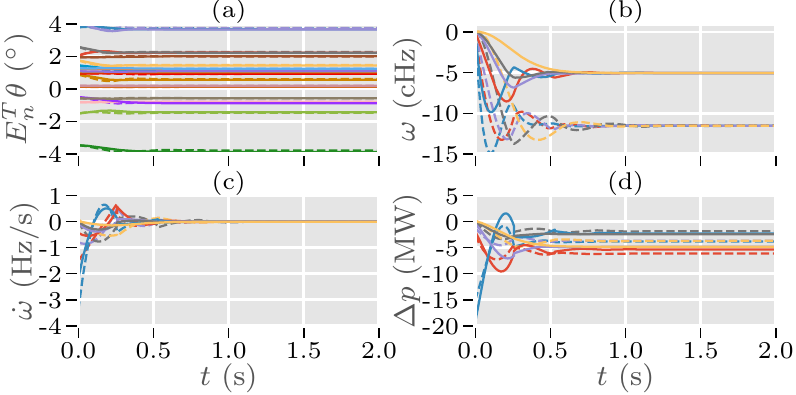}
    \caption{\hlr{Time-domain results of the IEEE 14-bus system under the power-step disturbance, where (a), (b), (c) and (d) are curves of the phase angle difference of branches, angular frequency of generator or inverter buses, rate of change of frequency of generator or inverter buses, and change of power output of generators or inverters, respectively. The solid and dashed lines are curves for the FEDA system and NLP system, respectively. Different colored lines correspond to different branches in (a), and correspond to different generators or inverters in (b), (c) and (d). In (d), $\Delta p = M \dot{\omega} + D \omega$.}}
    \label{case14_1-1}
\end{figure}

\begin{figure}[t]
    \centering
    \includegraphics[scale=0.9]{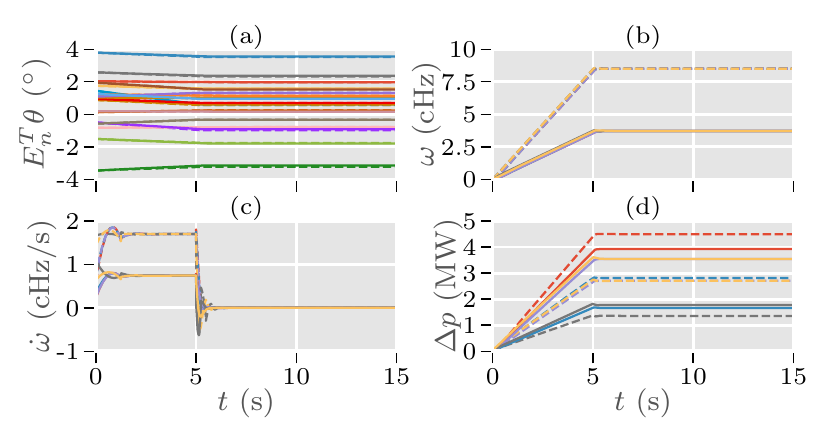}
    \caption{\hlr{Analgous to Fig. \ref{case14_1-1} but under the power-ramp disturbance.}}
    \label{case14_2-1}
\end{figure}

\begin{figure}[t]
    \centering
    \includegraphics[scale=0.9]{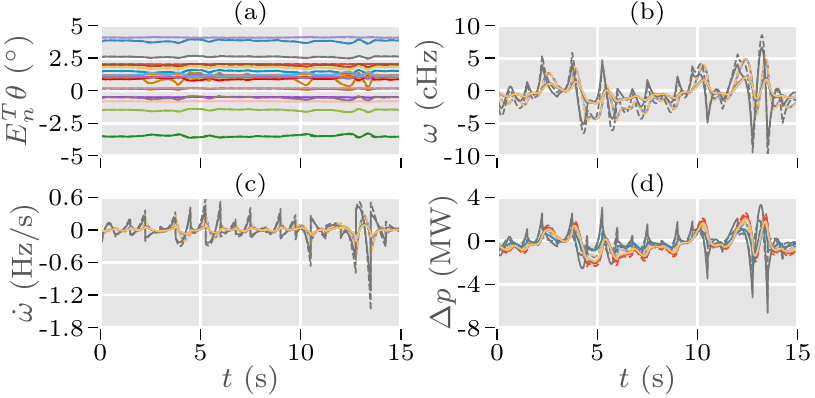}
    \caption{\hlr{Analgous to Fig. \ref{case14_1-1} but under the power-fluctuation disturbance.}}
    \label{case14_3-1}
\end{figure}

\begin{figure}[t]
    \centering
    \includegraphics[scale=0.9]{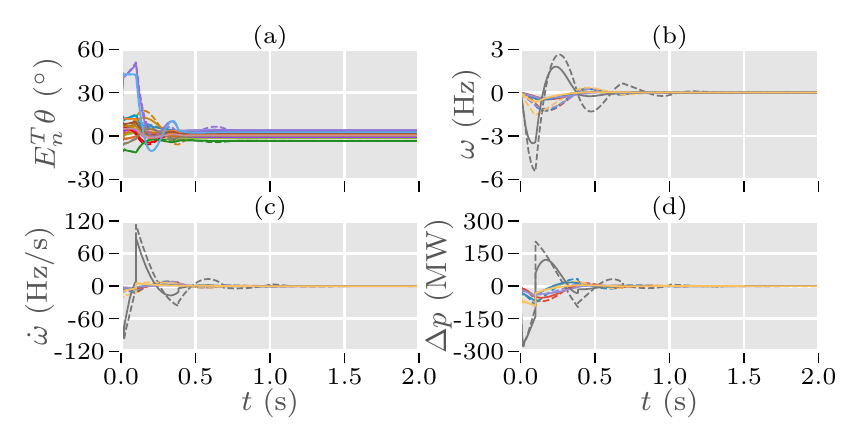}
    \caption{\hlr{Analgous to Fig. \ref{case14_1-1} but under the three-phase short circuit disturbance.}}
    \label{case14_4-1}
\end{figure}

\begin{figure}[t]
    \includegraphics[scale=0.98]{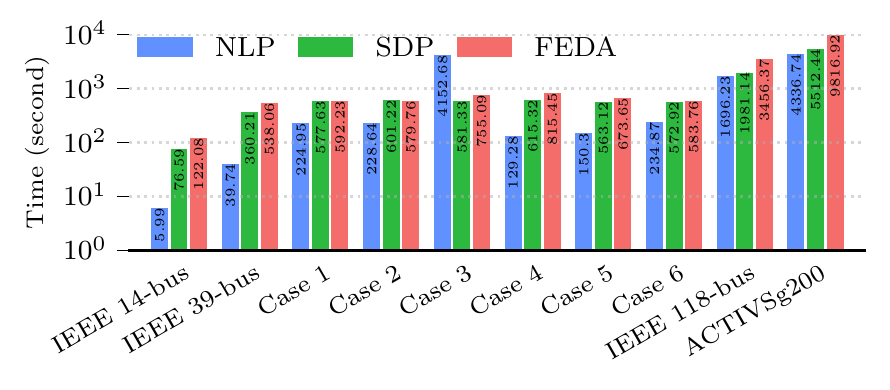}
    \caption{Comparison of computation time of NLP, SDP and the FEDA.}
    \label{fig-cs-4}
\end{figure}

\begin{figure}[t]
    \includegraphics[scale=0.95]{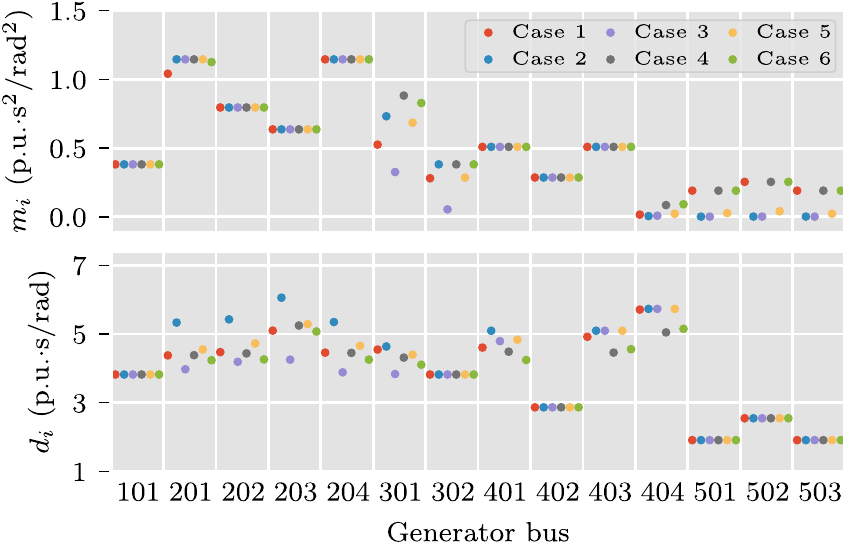}
    \caption{\hlr{Dispatch results of virtual inertia and damping for the AU14Gen system under operating condition case 1 to case 6, obtained by the FEDA.}}
    \label{fig-cs-1}
\end{figure}

Fig. \ref{fig-cs-4} compares computation time of NLP, SDP and the FEDA spent on solving DID of each test system. The use of open-source modeling languages, i.e., Pyomo and CVXPY, causes significant time spent on constructing and passing models, which, however, can be reduced to a negligible amount by employing C++ interface of IPOPT and Fusion interface of MOSEK. Thus computation time of NLP and SDP only includes the time spent by optimizers, and time spent on line 4 of Algorithm \ref{alg_1} equals to the maximal time spent by MOSEK in each processor. In Fig. \ref{fig-cs-4}, we can see that NLP has a computation time advantage among the three approaches, being undermined as the size of power systems increases. As the cost of achieving solutions with a smaller optimality gap, the FEDA is inevitably with the maximum computation time for most cases, which, however, is acceptable profiting from distributed parallel computing. For the six cases of AU14Gen system, computation time of the FEDA is even very close or less than that of SDP.

Fig. \ref{fig-cs-1} shows dispatch results of virtual inertia and damping for the AU14Gen system under different operating conditions, obtained by the FEDA. We can see that following variation in operating conditions, most generators need to significantly adjust their virtual inertia, damping or both of them to optimize the system performances and control efforts. This, to some extend, demonstrates the necessity of DID for operation of future power grids with high heterogeneity in operating conditions, to ensure a optimal tradeoff between synchronism performances, frequency performances and control efforts.

\section{Conclusion}\label{section-6}

This paper numerically addresses the DID problem for future inverter-dominant transmission networks. By the Radua collocation method, the DID problem is first formulated as a NLP with flexibility to handling time-varying performance constraints and various types of disturbances. Next, the highly non-convex NLP is relaxed into a convex SDP for which sparsity is exploited to improve computational efficiency. Finally, a feasibility-embedded distributed solution approach is proposed under the framework of ADMM. Numerical experiments on five test systems demonstrate that by tunning penalty parameters, the proposed solution approach can converge to modest accuracy within several tens of iterations as well as acceptable computation time benefiting from distributed parallel computing. The SDP relaxation of the NLP of DID is inexact while the feasibility-embedded distributed approach can produce solutions being not only feasible to the original problem but also with much smaller optimality gaps than that achieved by the local solution approach. Variations in dispatch results under different operating conditions for the AU14Gen system demonstrates the necessity of DID for power grids with increasingly high heterogeneity in operating conditions.

For the future direction, firstly, influence of penalty parameters on convergence of the feasibility-embedded distributed approach will be further investigated to develop tuning or adaptive adjustment schemes; secondly, impacts of fidelity of system dynamic models on results of DID will be evaluated for determining appropriate model fidelity that balances the computational complexity and accuracy for the DID problem, and model reduction for parts of system more remote from the fault can potentially deal with the case involving high-fidelity models; and thirdly, distributed DID independent of central control centers is worth pursuing, where learning-based approaches can promisingly tackle complications caused by possible divergence of distributed algorithms for non-convex problems and potential real-time execution of DID.



%

\section*{Acknowledgment}
The authors would like to thank Dr. Xiao Han, from Microsoft Research Asia, for the support of Linux servers.

\ifCLASSOPTIONcaptionsoff
  \newpage
\fi

\bibliographystyle{IEEEtran}
\bibliography{bare_jrnl_long}

\newpage

\appendices
\section*{Appendix}


\subsection{Numerical Analysis of Approximation 2}\label{appendix_3}

The essence of Approximation \ref{approx-2} is using quadratic function $\varsigma_1$, linear function $\varsigma_2$ and quadratic function $\varsigma_3$ to approximate $\sin \theta$ for $\theta \!\!\in\!\! [ - \theta_c, -\theta_b]$, $\theta \!\!\in\!\! [ - \theta_b, \theta_b]$ and $\theta \!\!\in\!\! [\theta_b, \theta_c]$, respectively. Clearly, $\theta_b$ can observably impact approximation errors and should be selected carefully. Define approximation error function $\epsilon\!:\! \theta_b \!\!\to\!\! \int_{-\theta_c}^{\theta_c} (\sin \theta \!-\! {\beta}^T \varsigma)^2 \text{d} \theta$ and function $\theta_b^*:\! \theta_c \!\! \to$ $\{\argmin_{\theta_b} \epsilon, \text{s.t.} \theta_b \!\!\in\!\! [0, \frac{\pi}{2}]\}$. Graphs of $\theta_b$-$\theta_c$-$\epsilon$, $\theta_c$-$\theta_b^*$ and $\theta_c$-$\min_{\theta_b}\! \epsilon\!$ are shown in Fig. \ref{fig-approx-1}. Numerically, we can find that $\forall \theta_c \!\in\! [\frac{\pi}{2}, \pi]$, $\epsilon$ is a convex function in domain $[0, \frac{\pi}{2}]$. Thus, given $\theta_c \!\!\in\!\! [\frac{\pi}{2}, \pi]$, by solving $\{\argmin_{\theta_b} \epsilon, \text{s.t.} \theta_b \!\!\in\!\! [0, \frac{\pi}{2}]\}$, the unique optimal value of $\theta_b \!\!=\!\! \theta_b^*$ to minimize the approximation error can be obtained. Additionally, we can also find that for  $\theta_c \!\!\in\!\! [\frac{\pi}{2}, \frac{9}{10}\pi]$, Approximation \ref{approx-2} can be with very high accuracy. 

\begin{figure}[h]
    \centering
    \subfigure[Function surface of $\theta_c$-$\theta_b$-$\epsilon$] {\label{fig-approx-1:a} \includegraphics[width=1.57in]{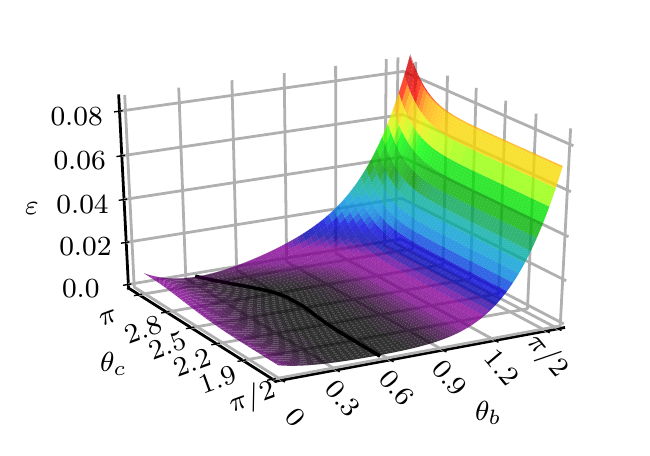}}
    \subfigure[Curves of $\theta_c$-$\theta_b^*$ and $\theta_c$-$\min_{\theta_b}\!\! \epsilon$] {\label{fig-approx-1:b} \includegraphics[width=1.75in]{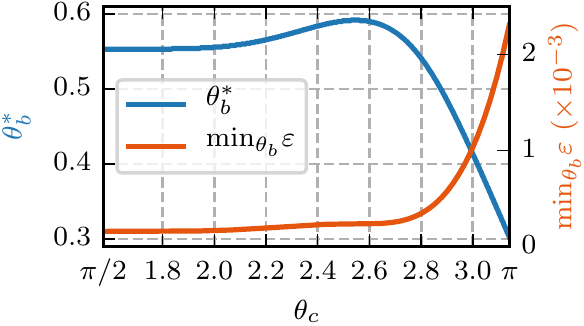}}
    \caption{Numerical analysis of Approximation 2.}
    \label{fig-approx-1}
\end{figure}

\subsection{\hlr{Propositions and Corollaries}}\label{appendix-ppa}

\begin{proposition}\label{prop-chordality-1}
    For a given graph $\mathcal{G}(\mathcal{V}, \mathcal{E})$, define a graph $\bar{\mathcal{G}}(\bar{\mathcal{V}}, \bar{\mathcal{E}})$ where each node in $\bar{\mathcal{V}}$ represents a set of nodes in $\mathcal{V}$; $\cup_{\bar{v} \in \bar{\mathcal{V}}} \bar{v} = \mathcal{V}$; $\forall \bar{v}, \bar{v}' \in  \bar{\mathcal{V}}, \bar{v} \neq \bar{v}'$, $\bar{v} \cap \bar{v}' = \emptyset$; and $(\bar{i}, \bar{j}) \in \bar{\mathcal{E}}$ satisfies $\exists (i,j) \in \bar{i} \times \bar{j} $, $(i,j) \in \mathcal{E}$. Let $\mathcal{K} = \{\mathcal{C}_1, ... , \mathcal{C}_{n_{mc}} \}$ and $\bar{\mathcal{K}} = \{\bar{\mathcal{C}}_1, ... , \bar{\mathcal{C}}_{\bar{n}_{mc}} \}$ be sets of all maximal cliques of $\mathcal{G}$ and $\bar{\mathcal{G}}$, respectively. Then the following statements hold:

(\romannumeral1) $\mathcal{G}(\mathcal{V}, \mathcal{E})$ is chordal iff $\bar{\mathcal{G}}(\bar{\mathcal{V}}, \bar{\mathcal{E}})$ is chordal and $\forall i \in \{1,...,\bar{n}_{mc}\}$, the induced subgraph of $\mathcal{G}(\mathcal{V}, \mathcal{E})$ by the node set $\bar{\mathcal{C}}_i$ is chordal; and 

(\romannumeral2) denote by $\mathcal{K}_i$ the set of all maximal cliques of the induced subgraph of $\mathcal{G}(\mathcal{V}, \mathcal{E})$ by the node set $\bar{\mathcal{C}}_i$, then $\mathcal{K} = \cup_{i\in \{1,...,\bar{n}_{mc}\}} \mathcal{K}_i $.
\end{proposition}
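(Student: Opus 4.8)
The plan is to prove both parts through the standard certificates of chordality — perfect elimination orderings and, more usefully here, clique trees (tree decompositions whose bags are the maximal cliques and which satisfy the running-intersection property). Throughout I would exploit the one structural feature that the block partition inherits from the matrix $\bm{Z}$: each block $\bar{v}$ induces a complete subgraph of $\mathcal{G}$, because the diagonal blocks of $\bm{Z}$ are fully specified. The easy half of (i) is the forward direction: if $\mathcal{G}$ is chordal then every induced subgraph is chordal, so each $\mathcal{G}[\bar{\mathcal{C}}_i]$ is immediately chordal.

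The remaining part of the forward direction — that $\bar{\mathcal{G}}$ is chordal — I would prove by lifting. Suppose $\bar{\mathcal{G}}$ contained a chordless cycle $\bar{v}_1\bar{v}_2\cdots\bar{v}_k$ with $k\ge 4$. For each block edge $\bar{v}_t\bar{v}_{t+1}$ choose a witnessing edge of $\mathcal{E}$ with endpoints in the two blocks, and inside each block — which is a clique — join the incoming endpoint to the outgoing one. Concatenating yields a cycle $C$ of $\mathcal{G}$ that traverses $\bar{v}_1,\dots,\bar{v}_k$ in order, each block contributing at most two (necessarily adjacent) vertices; take $C$ of minimum length. Since $\mathcal{G}$ is chordal, $C$ has a chord. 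A within-block chord is impossible because the two vertices of a block are consecutive on $C$; a chord joining consecutive blocks would give a shorter lift, contradicting minimality; and a chord joining two non-consecutive blocks $\bar{v}_s,\bar{v}_t$ forces $(\bar{v}_s,\bar{v}_t)\in\bar{\mathcal{E}}$, a chord of the block cycle — a contradiction in every case. Hence $\bar{\mathcal{G}}$ is chordal.

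For the converse in (i) and for (ii) I would amalgamate clique trees. Take a clique tree $\bar{\mathcal{T}}$ of $\bar{\mathcal{G}}$ on the nodes $\bar{\mathcal{C}}_1,\dots,\bar{\mathcal{C}}_{\bar{n}_{mc}}$, and for each $i$ a clique tree $\mathcal{T}_i$ of $\mathcal{G}[\bar{\mathcal{C}}_i]$, whose bags are exactly the members of $\mathcal{K}_i$. Glue the $\mathcal{T}_i$ along the edges of $\bar{\mathcal{T}}$, attaching $\mathcal{T}_i$ to $\mathcal{T}_j$ through bags containing the shared vertices $\bigcup_{\bar{v}\in\bar{\mathcal{C}}_i\cap\bar{\mathcal{C}}_j}\bar{v}$. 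The resulting tree has bag set $\bigcup_i\mathcal{K}_i$; showing it is a clique tree of $\mathcal{G}$ proves simultaneously that $\mathcal{G}$ is chordal and, since the bags of a clique tree are precisely the maximal cliques, that $\mathcal{K}=\bigcup_i\mathcal{K}_i$. The containment $\mathcal{K}\subseteq\bigcup_i\mathcal{K}_i$ is direct: the blocks met by any maximal clique $K$ of $\mathcal{G}$ form a clique of $\bar{\mathcal{G}}$, hence lie in some $\bar{\mathcal{C}}_i$, and $K$ is then already maximal inside $\mathcal{G}[\bar{\mathcal{C}}_i]$.

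The main obstacle is verifying the running-intersection property of the amalgamated tree, which is also exactly what underlies the reverse containment $\bigcup_i\mathcal{K}_i\subseteq\mathcal{K}$. Connectivity within each $\mathcal{T}_i$ is free, and connectivity across pieces follows because every vertex sits in a unique block and the cliques $\bar{\mathcal{C}}_i$ containing that block index a subtree of $\bar{\mathcal{T}}$; the delicate step is that the attaching bags must jointly contain the separator $\bigcup_{\bar{v}\in\bar{\mathcal{C}}_i\cap\bar{\mathcal{C}}_j}\bar{v}$, which requires that separator to be a clique of $\mathcal{G}$ rather than merely a union of blocks. I would therefore spend the bulk of the argument establishing that the block overlaps produced by the hierarchical decomposition are themselves cliques — a property that holds for the partitions arising from $\bm{Z}$ and is the crux on which both the converse of (i) and the equality in (ii) ultimately rest.
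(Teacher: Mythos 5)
The paper offers no actual argument for this proposition (its ``proof'' is a one-line appeal to triviality), so your proposal has to stand on its own, and it does not. The first genuine gap is that you quietly add the hypothesis that every block $\bar{v}$ induces a complete subgraph of $\mathcal{G}$ (``the diagonal blocks of $\bm{Z}$ are fully specified''). That hypothesis is not in the statement, which is asserted for an arbitrary partition of $\mathcal{V}$, and it is false for the matrices $\bm{Z}$ in this paper: the diagonal block indexed by $\bm{\theta}_i^k$ carries the sparsity of the power network itself (which is exactly why $\mathcal{K}_{e3}^{ki}$ is built from the maximal cliques $\mathcal{C}_{nj}$ of $\mathcal{G}_n$), and the $\bm{\omega}_i^k$, $\bm{l}_{\text{m}i}^k$, $\bm{l}_{\text{d}i}^k$ blocks are per-generator. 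Your lifting argument for ``$\mathcal{G}$ chordal $\Rightarrow \bar{\mathcal{G}}$ chordal'' needs the intra-block edge joining the incoming and outgoing endpoints to close the lifted cycle; without block-completeness the implication itself can fail (take $\mathcal{G}$ a perfect matching on eight vertices whose quotient over four two-element blocks is a chordless four-cycle), so this half cannot be carried out from the stated hypotheses at all.

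The second gap is that the converse of (i) and part (ii) are never actually established. You correctly identify the crux of the clique-tree amalgamation --- that the expanded separators $\bigcup_{\bar{v}\in\bar{\mathcal{C}}_i\cap\bar{\mathcal{C}}_j}\bar{v}$ must be cliques of $\mathcal{G}$ --- but you only promise to prove it, and the property is in fact false for the partition used here: the overlap of two block-level cliques $\mathcal{C}_i^k$, $\mathcal{C}_{i'}^{k'}$ is $\{\mathcal{I}(M\mathbbm{1},\cdot),\mathcal{I}(D\mathbbm{1},\cdot),\mathcal{I}(-1)\}$, whose element-level pattern is block-diagonal per generator, not complete. Moreover, the inclusion $\bigcup_i\mathcal{K}_i\subseteq\mathcal{K}$ in (ii), which you treat as automatic once a clique tree is exhibited, requires that a maximal clique of $\mathcal{G}[\bar{\mathcal{C}}_i]$ be maximal in $\mathcal{G}$, and this can fail: take the triangle $\{a,b,c\}$ with pendant edges $c\!-\!d$ and $b\!-\!d'$ and blocks $\{a\},\{b\},\{c\},\{d,d'\}$; then $\{b,c\}$ is maximal in the subgraph induced by the quotient clique $\{B,C,D\}$ but not in $\mathcal{G}$. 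These examples show that, without additional hypotheses tying $\mathcal{E}$ to the partition, both (i) and (ii) as literally stated admit counterexamples; your strategy therefore proves (at best) a strengthened variant of the proposition whose extra assumptions do not hold for the application in this paper, and even that variant is left incomplete at the separator step and the reverse inclusion in (ii).
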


\begin{proof}
    The proof is trivial with basic properties of graphs and the definition of maximal cliques.
\end{proof}

\begin{proposition}\label{proposition-9}
    Denote by $\mathcal{G}_n (\mathcal{N}, \mathcal{B})$ the underlying graph of the power network, then we have

    (\romannumeral1) graph $\mathcal{G}$ is chordal if $\mathcal{G}_n$ is chordal;

    (\romannumeral2) assume that $\mathcal{G}_n$ is chordal with its set of maximal cliques denoted by $\mathcal{K}_{n} \!\!=\!\! \{\mathcal{C}_{nj}\}$ since we can always find a chordal extension for $\mathcal{G}_n$.
    Maximal cliques of $\mathcal{G}$ are given by
    {\small{
    $
            \mathcal{K} = \bigcup_{k \in \mathcal{D}, i\in \mathbbm{T}^k} \mathcal{K}_i^k ~\text{where}~ \mathcal{K}_i^k = \mathcal{K}_{e1}^{ki} \cup \mathcal{K}_{e3}^{ki} \cup \mathcal{K}_{e3}^{ki} ~\text{with}~
    $
    \begin{subequations}\label{eq-2-1-ap-1}
        \begin{align}
            &\!\!\!\mathcal{K}_{e1}^{ki} \!\!=\!\! \{\!\mathcal{C}_{ej}| \mathcal{C}_{ej} \!\!=\!\! \mathcal{I}(\!M\!\mathbbm{1},\!j)\!\cup\! \mathcal{I}(\!D\mathbbm{1},\!j) \!\cup\! \mathcal{I}(\bm{\omega}_i^k\!,j) \!\cup\! \mathcal{I}(-1)\!, j \!\!\in\! \mathcal{N}_g\! \} \label{eq-2-1-ap-1:1}\\
            &\!\!\!\mathcal{K}_{e2}^{ki} = \{\mathcal{C}_{ej}| \mathcal{C}_{ej} = \mathcal{I}(\bm{l}_{\text{m}i}^k,j)\cup\mathcal{I}(\bm{l}_{\text{d}i}^k,j) \cup \mathcal{I}(-1), j \in \mathcal{N}_g \} \label{eq-2-1-ap-1:2}\\
            &\!\!\! \mathcal{K}_{e3}^{ki} = \{\mathcal{C}_{ej}| \mathcal{C}_{ej} = \mathcal{I}(\bm{\theta}_i^k, \mathcal{C}_{nj}) \cup \mathcal{I}(-1), \mathcal{C}_{nj} \in \mathcal{K}_n \}  \label{eq-2-1-ap-1:3}
        \end{align}
    \end{subequations}
    }}
\end{proposition}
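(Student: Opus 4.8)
The plan is to compute the aggregate sparsity pattern $\mathcal{E}$ of $\bm{Z}$ explicitly and then reduce the chordality question to that of $\mathcal{G}_n$ by invoking Proposition~\ref{prop-chordality-1} hierarchically. First I would scan the objective (\ref{eq-sdp-4p:1}) and every constraint of (P8), recording which off-diagonal entries $[\bm{Z}]_{ij}$ actually occur, and I expect the resulting edges to fall into four essentially disjoint groups: \emph{angle edges}, from the $\bm{W}_{1i}^k$ block of $\bm{P}_{0i}^k$ and from the sine/power-flow terms, coupling $\mathcal{I}(\bm{\theta}_i^k,j_1)$ with $\mathcal{I}(\bm{\theta}_i^k,j_2)$ across all collocation points whenever buses $j_1,j_2$ coincide or are adjacent in $\mathcal{G}_n$; \emph{frequency edges}, from $\bm{W}_{2i}^k$ and from the liftings $\bm{l}_{\text{m}i}^k=\check{M}\bm{\omega}_i^k$, $\bm{l}_{\text{d}i}^k=\check{D}\bm{\omega}_i^k$, coupling $\mathcal{I}(\bm{\omega}_i^k,j)$ internally and with $\mathcal{I}(M\mathbbm{1},j)$, $\mathcal{I}(D\mathbbm{1},j)$ for the \emph{same} generator $j$; \emph{lifting edges}, from the $\bm{W}_{3i}^k,\bm{W}_{4i}^k,\bm{W}_{5i}^k$ blocks, coupling $\mathcal{I}(\bm{l}_{\text{m}i}^k,j)$ and $\mathcal{I}(\bm{l}_{\text{d}i}^k,j)$ among themselves for the same $j$; and edges to $\mathcal{I}(-1)$, which is adjacent to every variable appearing linearly, hence to all of them. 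The crucial \emph{negative} facts to certify are that no $\bm{\theta}$--$\bm{\omega}$ product occurs, that $\bm{\omega}$ (and $\bm{l}$) never couple across distinct generators, and that distinct $(k,i)$ couple only through the shared blocks $M\mathbbm{1},D\mathbbm{1}$ and through $\mathcal{I}(-1)$.

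For part (i) I would first note that $\mathcal{I}(-1)$ is a universal vertex, so $\mathcal{G}$ is chordal iff $\mathcal{G}\setminus\mathcal{I}(-1)$ is, and then apply Proposition~\ref{prop-chordality-1} with a block graph $\bar{\mathcal{G}}$ whose super-nodes are the per-bus sets $\mathcal{I}(\bm{\theta}_i^k,j)$, the per-generator sets $\mathcal{I}(\bm{\omega}_i^k,j)$, the per-generator sets $\mathcal{I}(\bm{l}_{\text{m}i}^k,j)\cup\mathcal{I}(\bm{l}_{\text{d}i}^k,j)$, and one shared super-node $\mathcal{I}(M\mathbbm{1},j)\cup\mathcal{I}(D\mathbbm{1},j)$ per generator. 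Each super-node induces a complete (hence chordal) subgraph, so by Proposition~\ref{prop-chordality-1}(i) it remains to show $\bar{\mathcal{G}}$ is chordal; here the angle super-nodes of each fixed $(k,i)$ form a copy of $\mathcal{G}_n$ (since $S_1$ is dense, adjacent buses are fully connected across collocation points), which is chordal by hypothesis, while the frequency and lifting super-nodes attach to the shared $\{M,D\}$ super-node in a star-like fashion that introduces no new long cycle. Chordality of $\mathcal{G}$ then follows.

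For part (ii) I would read the maximal cliques off the decomposition using Proposition~\ref{prop-chordality-1}(ii): the angle part contributes, for each network clique $\mathcal{C}_{nj}\in\mathcal{K}_n$, the blown-up clique $\mathcal{I}(\bm{\theta}_i^k,\mathcal{C}_{nj})\cup\mathcal{I}(-1)$, i.e.\ exactly $\mathcal{K}_{e3}^{ki}$; the shared-and-frequency part contributes, for each $j\in\mathcal{N}_g$, the set $\mathcal{I}(M\mathbbm{1},j)\cup\mathcal{I}(D\mathbbm{1},j)\cup\mathcal{I}(\bm{\omega}_i^k,j)\cup\mathcal{I}(-1)=\mathcal{K}_{e1}^{ki}$; and the lifting part contributes $\mathcal{K}_{e2}^{ki}$. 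Taking the union over $k\in\mathcal{D}$, $i\in\mathbb{T}^k$ yields the claimed $\mathcal{K}$, and maximality follows from the negative facts of the first step, which forbid enlarging any of these cliques across groups.

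The main obstacle I anticipate is the treatment of the globally shared blocks $M\mathbbm{1},D\mathbbm{1}$, which belong to every $(k,i)$. With only the raw pattern $\mathcal{E}$ these blocks create chordless four-cycles $m_j$--$\omega_j^{k_1 i_1}$--$d_j$--$\omega_j^{k_2 i_2}$ linking different disturbances/time elements (the chords $m_j$--$d_j$ and $\omega_j^{k_1 i_1}$--$\omega_j^{k_2 i_2}$ are both absent), so the literal statement is correct only after a minimal chordal completion, which is precisely the step that adds the $m_j$--$d_j$ edges and makes $\mathcal{K}_{e1}^{ki}$ a genuine maximal clique; grouping $\{m_j,d_j\}$ into one shared super-node, as above, builds this completion in automatically. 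The remaining difficulty is purely bookkeeping: honestly pinning down $\mathcal{E}$ from the somewhat intricate coefficient matrices of Table~\ref{tb-appendix-2} and \emph{certifying the absence} of the cross-group edges, since the clique list is valid only if $\bm{\theta}$ and $\bm{\omega}$ never multiply, generators never couple through $\bm{\omega}$ or $\bm{l}$, and time elements interact only through the shared blocks and $\mathcal{I}(-1)$.
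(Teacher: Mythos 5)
Your proposal is correct and follows essentially the same route as the paper's proof: a hierarchical application of Proposition~\ref{prop-chordality-1}, first at the block level (per disturbance/time element with the shared $M\mathbbm{1}$, $D\mathbbm{1}$ and $\mathcal{I}(-1)$ blocks), then at the element level, where the $\bm{\theta}$-block inherits the topology of $\mathcal{G}_n$ and yields $\mathcal{K}_{e3}^{ki}$ while the per-generator $\bm{\omega}$/$M$/$D$ and lifting blocks yield $\mathcal{K}_{e1}^{ki}$ and $\mathcal{K}_{e2}^{ki}$. The obstacle you flag --- that the raw pattern lacks the $m_j$--$d_j$ (and cross-$(k,i)$) chords, so the statement really concerns a slightly extended pattern --- is exactly what the paper absorbs via the ``dashed edges \dots extra added'' in its block-level sparsity figure, so your super-node completion of $\{m_j,d_j\}$ reproduces the paper's treatment.
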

\begin{proof}
    Fig. \ref{fig-sparity-1:a} shows the aggregate sparsity pattern of matrix $\!\bm{Z}\!$ at the block level with $\bm{Z}$ broken into blocks corresponding to each pair of disturbances and time elements. We can find that the sparsity graph is chordal and its set of maximal cliques is given by
    {\small
\begin{equation}\label{eq-sparsity-1}
        \!\!\! \mathcal{K}_{b} \!\!=\!\! \left\{\!\mathcal{C}_i^k| k \!\!\in\!\! \mathcal{D}, i \!\!\in\!\! \mathbbm{T}^k\!,\! \mathcal{C}_i^k \!=\!\! \{\mathcal{I}(M\mathbbm{1},\!\cdot),\! \mathcal{I}(D\mathbbm{1},\cdot),\! \mathcal{I}(k,i),\! \mathcal{I}(-1) \} \!\right\}  \!\!\!
\end{equation}
    }
Note that dashed edges in Fig. \ref{fig-sparity-1:a} are extra added to reduced the number of equality constraints for overlapping entries. Otherwise, $\forall k \!\!\in\! \mathcal{D}, i \!\!\in\! \mathbbm{T}^k$, equality constraints for some entries in $\mathcal{I}(k,i)$ have to be introduced. Furthermore, Fig. \ref{fig-sparity-1:b} shows the aggregate sparsity pattern of $\bm{Z}$ within each clique of $\mathcal{K}_b$, still at the block level. The sparsity graph is also chordal and its set of maximal cliques is given by
{\small
\begin{equation}\label{eq-sparsity-2}
    \begin{aligned}
        \!\!\! \mathcal{K}_b^{ki} \!\!=\!\! & \left\{\!\mathcal{C}_{bj}|j\!\!=\!\!1,2,3, \mathcal{C}_{b1}\!\!=\!\! \{\mathcal{I}(M\mathbbm{1},\!\cdot),\! \mathcal{I}(D\mathbbm{1},\!\cdot),\! \mathcal{I}(\bm{\omega}_i^k,\!\cdot),\! \mathcal{I}(-1)\}, \right. \!\!\!\! \\
        \!\!\!~&\left. \mathcal{C}_{b2} \!\!=\!\!  \{\!\mathcal{I}(\bm{l}_{\text{m}i}^k,\!\cdot), \mathcal{I}(\bm{l}_{\text{d}i}^k,\!\cdot),\! \mathcal{I}(-1)\},\! \mathcal{C}_{b3} \!\!=\!\! \{\mathcal{I}(\bm{\theta}_i^k, \cdot), \mathcal{I}(-1) \} \right\} \!\!\!\!
    \end{aligned}
\end{equation}
}

Then we investigate aggregate sparsity patterns at the element level for each clique in $\mathcal{K}_b^{ki}$ individually. For cliques $\mathcal{C}_{b1}$ and $\mathcal{C}_{b2}$, they have an analogous aggregate sparsity pattern at the element level. Taking a power grid with 3 generators and $n_c\!\!=\!\!1$ as an example, Fig. \ref{fig-sparity-2:a} shows the aggregate sparsity pattern within $\mathcal{C}_{b1}$ or $\mathcal{C}_{b2}$. Nodes with the same color are associated with indices corresponding to the same generator. It can be found that the sparsity graph is chordal, and each set of nodes corresponding to the same generator and node $\mathcal{I}(-1)$ consist of a maximal clique. Thus sets of maximal cliques for $\mathcal{C}_{b1}$ and $\mathcal{C}_{b2}$, i.e., $\mathcal{K}_{e1}^{ki}$ and $\mathcal{K}_{e2}^{ki}$, are given by (\ref{eq-2-1-ap-1:1}) and (\ref{eq-2-1-ap-1:2}), respectively.

For clique $\mathcal{C}_{b3}$, Fig. \ref{fig-sparity-2:b} gives an example of its aggregate sparsity pattern, where the power grid consists of 4 buses and $n_c \!\!=\!\! 2$. Nodes with the same color are associated with indices corresponding to the same collocation point. The induced subgraph for any set of nodes with the same color is the same as the underlying graph of the power grid. With numbers in colored nodes representing bus numbers of power grid, the underlying graph of power grid contains two maximal cliques, i.e., $\{\!1,2,3\!\}$ and $\{\!3,4\!\}$. Node $\mathcal{I}(-1)$ and all nodes that correspond to each maximal clique of the underlying graph of power grid, form a maximal clique of the sparsity graph of $\mathcal{C}_{b3}$. These two maximal cliques are shown in Fig. \ref{fig-sparity-2:b} as the two groups of nodes linked by blue edges and red edges, respectively. We can find that the aggregate sparsity pattern within $\mathcal{C}_{b3}$ is fully determined by the topology of power grids, and the sparsity graph of $\mathcal{C}_{b3}$ is chordal if and only if $\mathcal{G}_n$ is chordal. With the assumption that $\mathcal{G}_n$ is chordal, we can always find a chordal extension for $\mathcal{G}_n$. Thus the set of maximal cliques for $\mathcal{C}_{b3}$, i.e., $\mathcal{K}_{e3}^{ki}$, is given by (\ref{eq-2-1-ap-1:3}).
\begin{figure}[t]
    \centering
    \subfigure[]{\label{fig-sparity-1:a} \includegraphics[width=4cm]{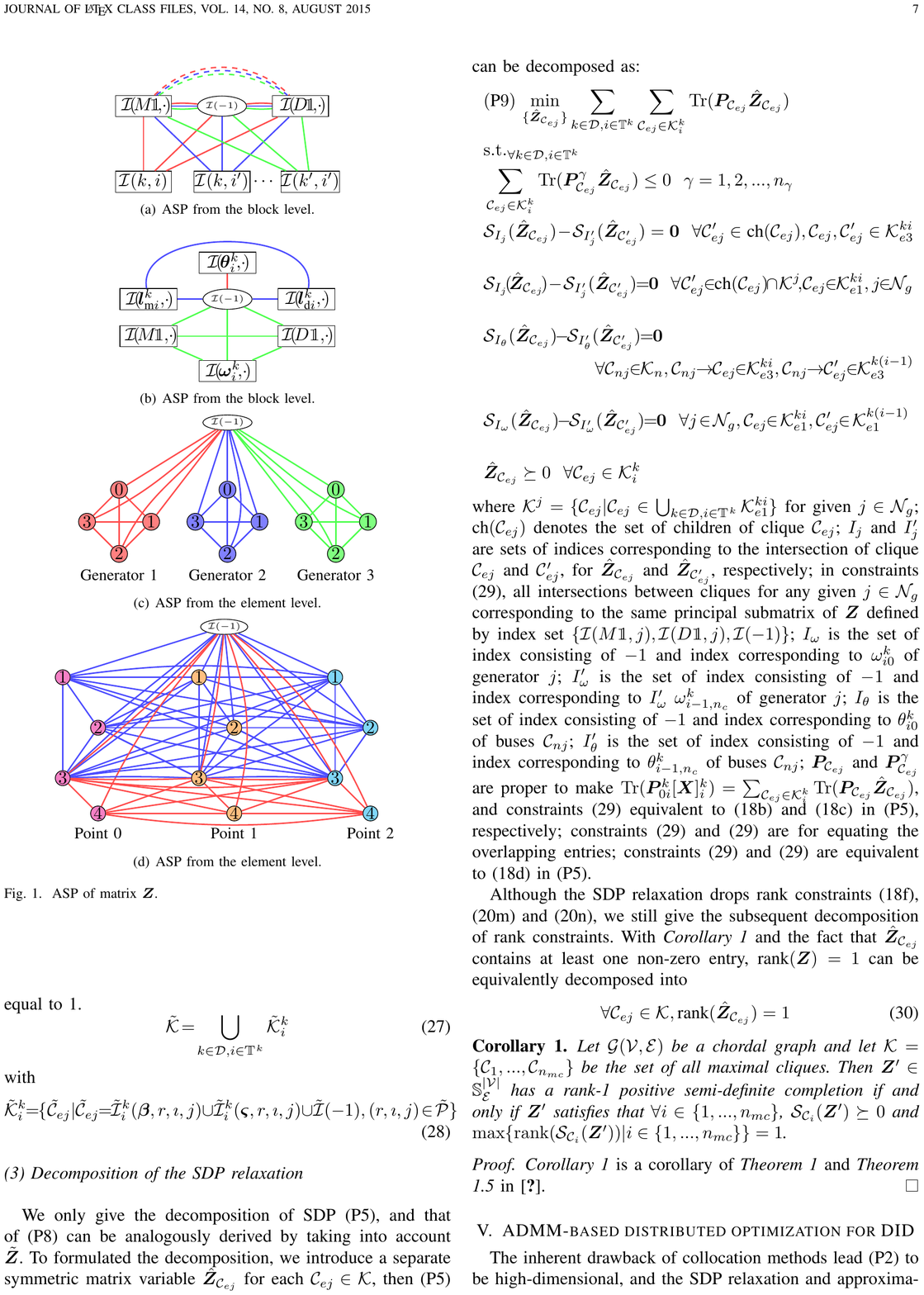}}
    \subfigure[]{\label{fig-sparity-1:b} \includegraphics[width=3.5cm]{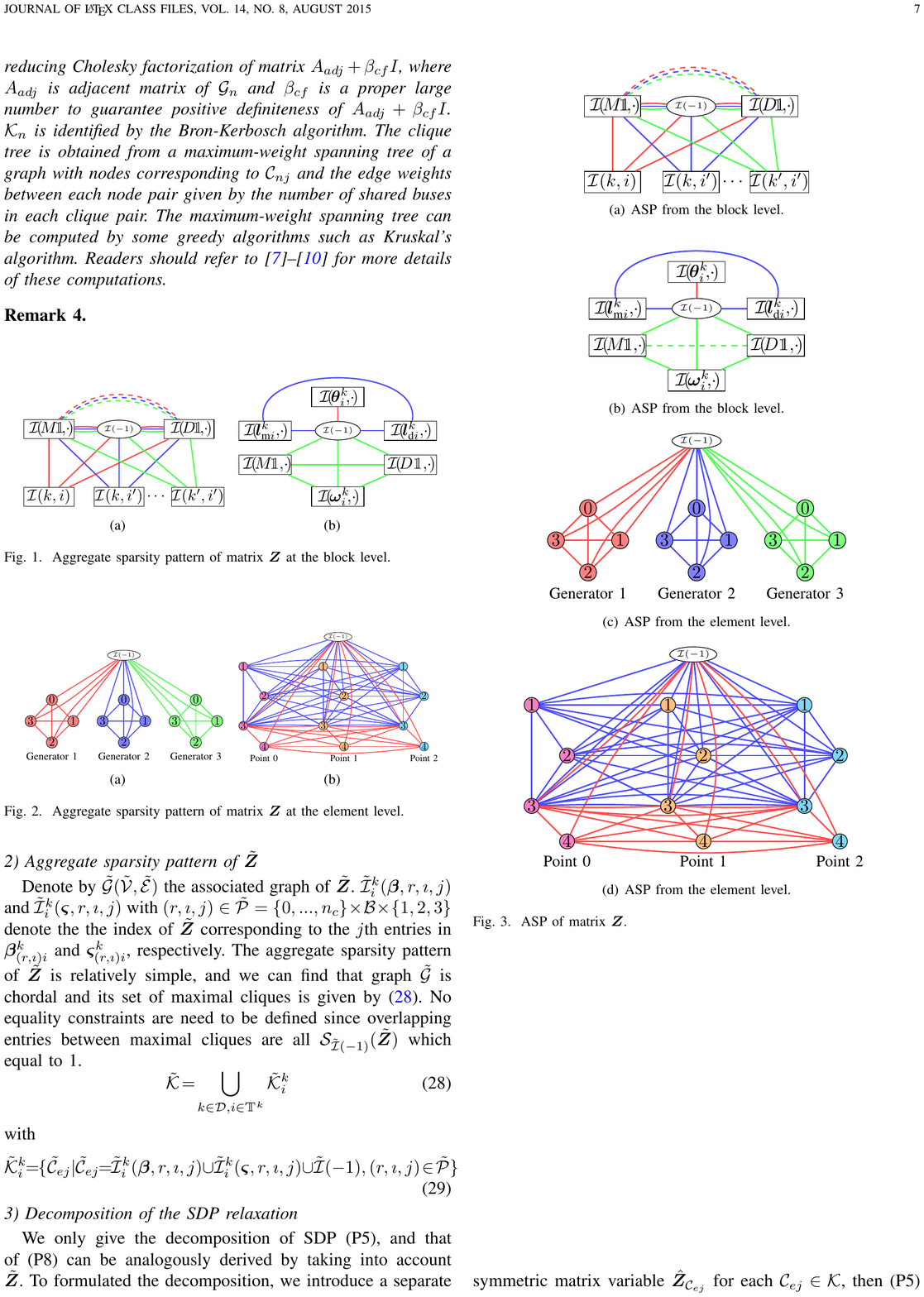}}
    \caption{Aggregate sparsity pattern of matrix $\bm{Z}$ at the block level.}
    \label{fig-sparity-1}
\end{figure}
\begin{figure}[t]
    \centering
    \subfigure[]{\label{fig-sparity-2:a} \includegraphics[width=4cm]{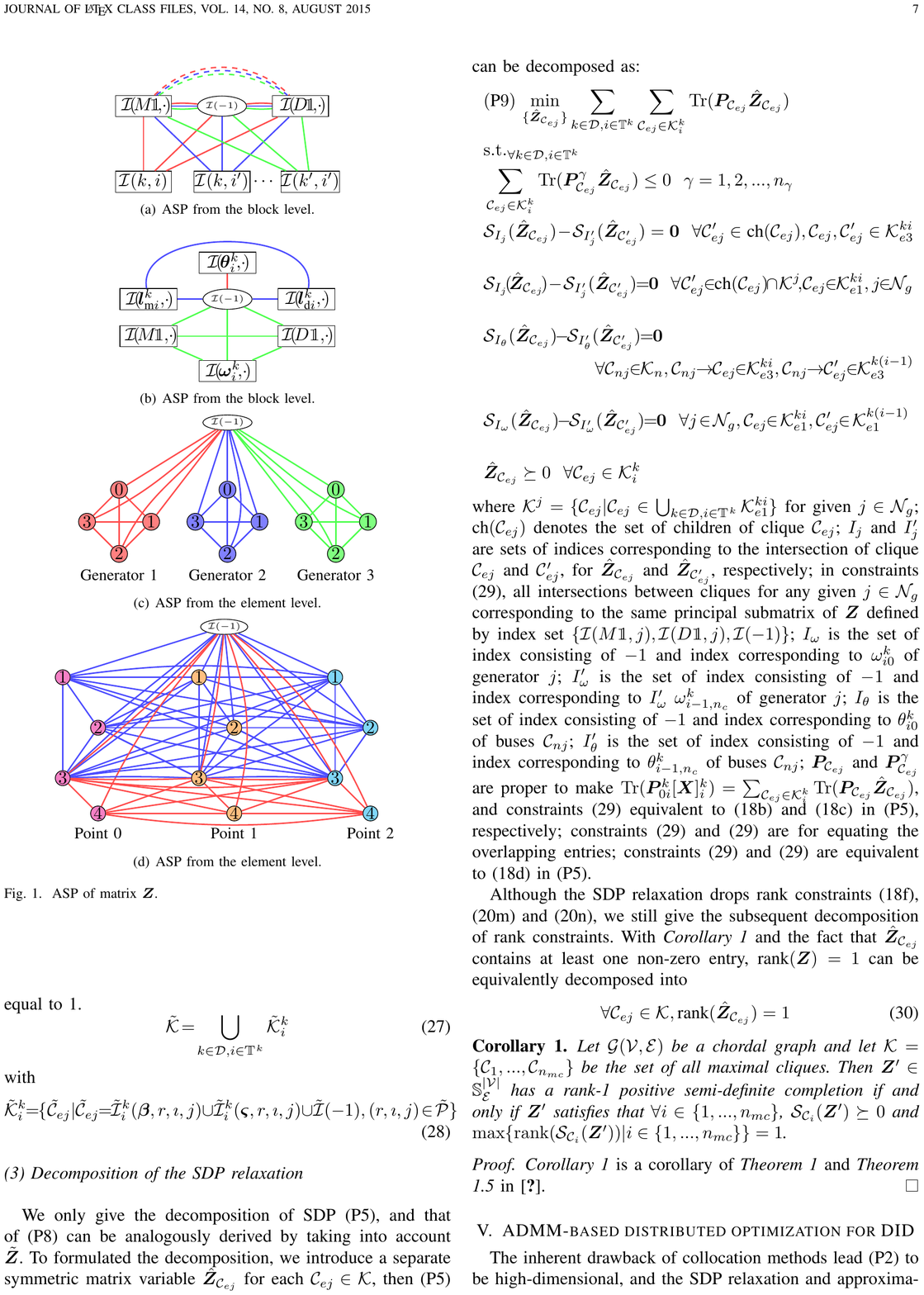}}
    \subfigure[]{\label{fig-sparity-2:b} \includegraphics[width=3.5cm]{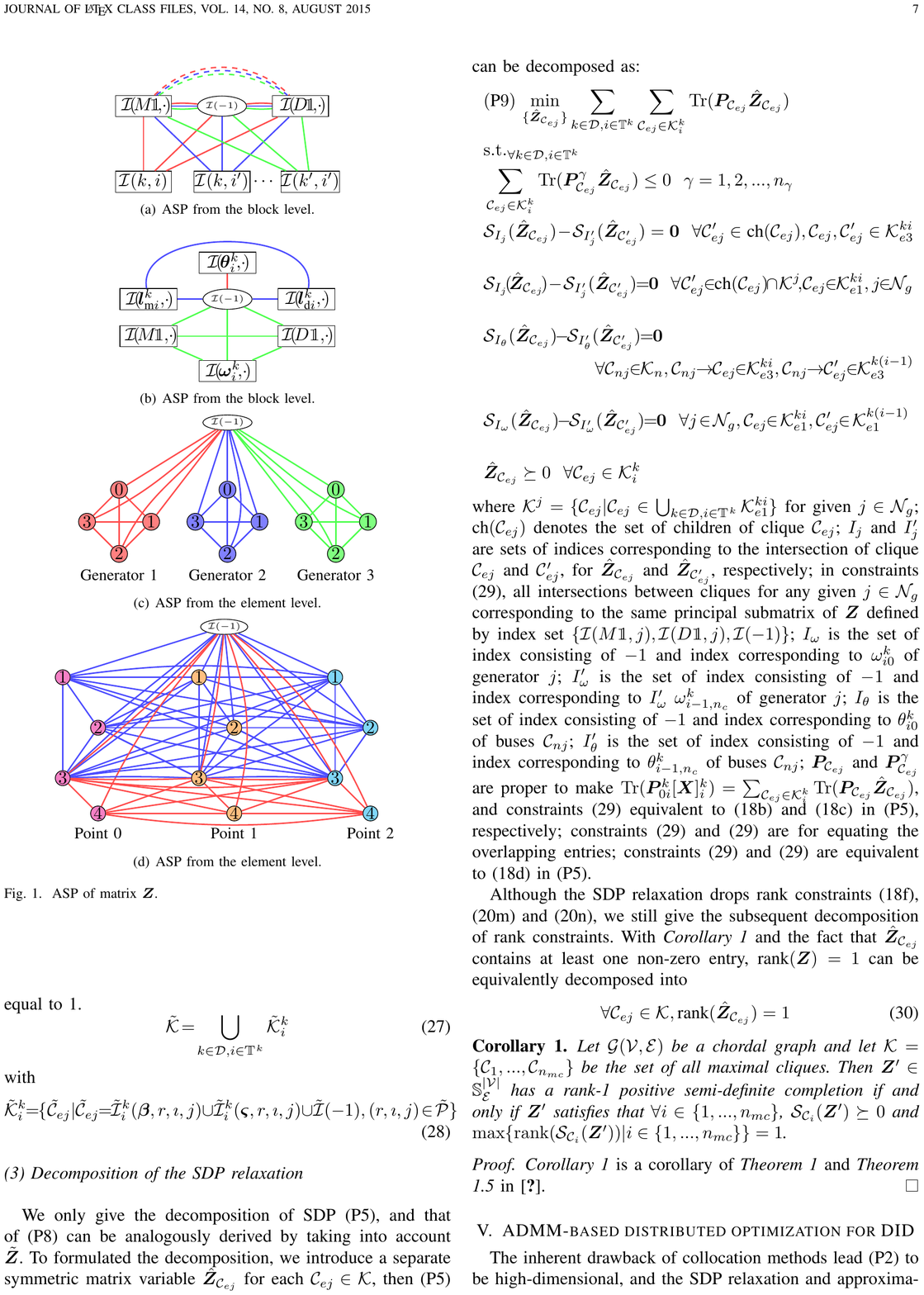}}
    \caption{Aggregate sparsity pattern of matrix $\bm{Z}$ at the element level.}
    \label{fig-sparity-2}
\end{figure}
\end{proof}

\begin{corollary}\label{corollary-1}
    Let $\mathcal{G}(\mathcal{V}, \mathcal{E})$ be a chordal graph and let $\mathcal{K}= \{\mathcal{C}_1,...,\mathcal{C}_{n_{mc}}\}$ be the set of all maximal cliques. Then $\bm{Z}' \in \mathbb{S}^{|\mathcal{V}|}_{\mathcal{E}}$ has a rank-1 positive semi-definite completion if and only if $\bm{Z}'$ satisfies that $\forall i \in \{1,...,n_{mc}\}$, $\mathcal{S}_{\mathcal{C}_i}(\bm{Z}') \succeq 0$ and $\max\{ \rank(\mathcal{S}_{\mathcal{C}_i}(\bm{Z}'))| i \in \{1,...,n_{mc}\} \} = 1$.
\end{corollary}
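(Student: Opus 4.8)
The plan is to prove the two implications separately: necessity is essentially immediate from the fact that clique entries are always specified, while sufficiency requires a constructive gluing of the rank-one clique blocks along a clique tree, with signs tracked carefully.

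For necessity, suppose $\bm{Z}'$ admits a rank-1 PSD completion $\bm{Z} = \bm{z}\bm{z}^T$ with $\bm{z} \neq \bm{0}$. Since every pair of distinct nodes inside a maximal clique $\mathcal{C}_i$ is an edge of $\mathcal{G}$, all entries of $\mathcal{S}_{\mathcal{C}_i}(\bm{Z}')$ are specified and hence coincide with $\mathcal{S}_{\mathcal{C}_i}(\bm{Z}) = (\bm{z}|_{\mathcal{C}_i})(\bm{z}|_{\mathcal{C}_i})^T$. As a principal submatrix of a rank-1 PSD matrix this is PSD with rank at most one, and because $\bm{z} \neq \bm{0}$ some diagonal entry is nonzero, so the block containing the corresponding node is nonzero; thus the maximum of the clique ranks equals one. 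This direction needs no further machinery.

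The substance lies in sufficiency. First I would write each block as $\mathcal{S}_{\mathcal{C}_i}(\bm{Z}') = \bm{v}_i \bm{v}_i^T$ with $\bm{v}_i \in \mathbb{R}^{|\mathcal{C}_i|}$, which is possible precisely because the block is PSD of rank at most one; note that $\bm{v}_i$ is unique up to a global sign and vanishes exactly when the block vanishes. Since $\mathcal{G}$ is chordal I would fix a clique tree and order the maximal cliques $\mathcal{C}_1,\dots,\mathcal{C}_{n_{mc}}$ so that the running intersection property holds: for each $j \geq 2$ there is a parent $p(j) < j$ with $\mathcal{C}_j \cap (\mathcal{C}_1 \cup \dots \cup \mathcal{C}_{j-1}) \subseteq \mathcal{C}_{p(j)}$, and I write the separator as $S_j = \mathcal{C}_j \cap \mathcal{C}_{p(j)}$. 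The global vector $\bm{z}$ is then built incrementally: set $\bm{z}|_{\mathcal{C}_1} = \bm{v}_1$, and at step $j$ pick a sign $\epsilon_j \in \{+1,-1\}$ with $\epsilon_j \bm{v}_j|_{S_j} = \bm{z}|_{S_j}$ and assign $\bm{z}|_{\mathcal{C}_j \setminus S_j} = \epsilon_j \bm{v}_j|_{\mathcal{C}_j \setminus S_j}$.

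The crux is that the sign $\epsilon_j$ can always be chosen consistently, and this is where running intersection does the work. By the inductive hypothesis $(\bm{z}|_{S_j})(\bm{z}|_{S_j})^T = \mathcal{S}_{S_j}(\bm{Z}')$, since $S_j \subseteq \mathcal{C}_{p(j)}$ and the parent block is already correctly represented, while also $(\bm{v}_j|_{S_j})(\bm{v}_j|_{S_j})^T = \mathcal{S}_{S_j}(\bm{Z}')$; two rank-one or zero square roots of the same PSD matrix agree up to sign, so the needed $\epsilon_j$ exists. Running intersection guarantees that the whole overlap of $\mathcal{C}_j$ with previously covered vertices sits inside the single parent clique, so there is only one sign to reconcile and no conflict between different branches. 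A short induction then gives $(\bm{z}|_{\mathcal{C}_i})(\bm{z}|_{\mathcal{C}_i})^T = \mathcal{S}_{\mathcal{C}_i}(\bm{Z}')$ for every $i$, and since each edge and each diagonal index lies in some maximal clique, $\bm{z}\bm{z}^T$ matches $\bm{Z}'$ on all of $\mathcal{E}$ and on the diagonal; it is PSD of rank one, the desired completion. I expect the main obstacle to be the degenerate case of a vanishing separator block, where the sign is underdetermined: one must check that an arbitrary choice still yields a consistent block on $\mathcal{C}_j$, which it does because $(\epsilon_j \bm{v}_j)(\epsilon_j \bm{v}_j)^T$ is sign-invariant.
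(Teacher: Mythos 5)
Your proof is correct, but it follows a different route from the paper. The paper does not argue the corollary directly at all: its ``proof'' is a citation, deducing the statement from two imported results (the classical chordal PSD-completion theorem and a minimum-rank completion theorem asserting that, for a chordal specification pattern, the minimum rank over all PSD completions equals the maximum rank of the fully specified maximal-clique blocks), of which the corollary is the rank-one specialization. You instead give a self-contained constructive argument: factor each clique block as $\bm{v}_i\bm{v}_i^T$, order the maximal cliques along a clique tree satisfying the running intersection property, and glue the $\bm{v}_i$ into a single vector $\bm{z}$ by resolving one sign per separator, using the facts that two rank-at-most-one square roots of the same matrix agree up to sign and that a vanishing separator leaves the sign free but harmless; the running intersection property is exactly what ensures there is only one sign to reconcile at each step, and your treatment of the degenerate zero-separator case is sound. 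This is essentially the mechanism underlying the cited theorems, specialized to rank one, so your approach buys self-containedness and an explicit recovery procedure for the completing vector (which could even be exploited algorithmically when extracting solutions from the decomposed SDP), while the paper's citation route is shorter and leverages results valid for arbitrary rank. One small step you should make explicit: in the sufficiency direction, the hypothesis $\max_i \rank(\mathcal{S}_{\mathcal{C}_i}(\bm{Z}'))=1$ (as opposed to $\le 1$) is what forces some $\bm{v}_i\neq 0$, hence $\bm{z}\neq 0$ and $\rank(\bm{z}\bm{z}^T)=1$ exactly; you assert the completion has rank one without spelling out this last inference, though the fix is one line.
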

\begin{proof}
    Corollary \ref{corollary-1} is a corollary of \cite[Therome 2.5]{4-503} and \cite[Theorem 1.5]{4-504}.
\end{proof}

\begin{proposition}\label{pro-2-1-added}
    Equation (\ref{eq-6-36:2}) is equivalent to
    \begin{equation}\label{eq-6-38}
        \begin{aligned}
             \mathcal{Y}_s^{(\kappa + 1)} \!\!= &  \left[\!
            \argmin_{ \rank( \bm{Y}_{\mathcal{C}_{ej}} \!) \leq 1 } \!\!\Vert  \bm{Y}_{\mathcal{C}_{ej}} \!\!-\!\! (\! \hat{\bm{Z}}_{\mathcal{C}_{ej}}^{(\kappa + 1)}  \!\!+\!\! \frac{1}{\tilde{\rho}}  \hat{\bm{\Lambda}}_{\mathcal{C}_{ej}}^{(\kappa)}  )  \! \Vert_F^2
            \!\right]^T \\
            & ~ \!\text{with}~ \mathcal{C}_{ej} \in \mathcal{K}_s  ~~~\forall s \in \mathbb{P}
        \end{aligned}
    \end{equation}
\end{proposition}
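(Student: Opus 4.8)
The plan is to reduce the matrix-vector minimization in (\ref{eq-6-36:2}) to a family of independent per-clique problems through three manipulations: discarding the parts of $L_s$ that are constant in $\mathcal{Y}_s$, completing the square in what remains, and then exploiting the product structure shared by the objective and the feasible set $\mathbb{Y}_s$.

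First I would note that, with $\hat{\mathcal{Z}}_s^{(\kappa+1)}$, $\mathcal{Z}_{\text{a}}^{(\kappa)}$, $\mathcal{A}_s^{(\kappa)}$ and $\tilde{\mathcal{A}}_s^{(\kappa)}$ all frozen at their current iterates, the only $\mathcal{Y}_s$-dependent summands of the Lagrangian (\ref{eq-6-32-1}) are $\Tr(\tilde{\mathcal{A}}_s^{T^*}\circ(\hat{\mathcal{Z}}_s - \mathcal{Y}_s))$ and $\frac{\tilde{\rho}}{2}\Vert\hat{\mathcal{Z}}_s - \mathcal{Y}_s\Vert_F^2$; everything else is an additive constant that leaves the $\argmin$ unchanged. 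Invoking the definitions of $\mathcal{X}^{T^*}\circ\mathcal{Y}$, of $\Tr(\cdot)$ and of $\Vert\cdot\Vert_F$ for matrix vectors, both surviving terms split as sums over $\mathcal{C}_{ej}\in\mathcal{K}_s$, giving $\sum_{\mathcal{C}_{ej}}[\Tr(\hat{\bm{\Lambda}}_{\mathcal{C}_{ej}}^{(\kappa)T}(\hat{\bm{Z}}_{\mathcal{C}_{ej}}^{(\kappa+1)}-\bm{Y}_{\mathcal{C}_{ej}})) + \frac{\tilde{\rho}}{2}\Vert\hat{\bm{Z}}_{\mathcal{C}_{ej}}^{(\kappa+1)}-\bm{Y}_{\mathcal{C}_{ej}}\Vert_F^2]$, where $\hat{\bm{\Lambda}}_{\mathcal{C}_{ej}}^{(\kappa)}$ is the block of $\tilde{\mathcal{A}}_s^{(\kappa)}$ associated with clique $\mathcal{C}_{ej}$.

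Next I would complete the square block by block. Writing the Frobenius inner product as $\Tr(\bm{A}^T\bm{B})$ and grouping, the linear-plus-quadratic term for each $\mathcal{C}_{ej}$ becomes $\frac{\tilde{\rho}}{2}\Vert\bm{Y}_{\mathcal{C}_{ej}} - (\hat{\bm{Z}}_{\mathcal{C}_{ej}}^{(\kappa+1)}+\frac{1}{\tilde{\rho}}\hat{\bm{\Lambda}}_{\mathcal{C}_{ej}}^{(\kappa)})\Vert_F^2$ plus a term $-\frac{1}{2\tilde{\rho}}\Vert\hat{\bm{\Lambda}}_{\mathcal{C}_{ej}}^{(\kappa)}\Vert_F^2$ that is independent of $\bm{Y}_{\mathcal{C}_{ej}}$. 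Discarding that constant and the positive scalar $\frac{\tilde{\rho}}{2}$, neither of which affects the minimizer, reproduces exactly the Frobenius-distance objective in (\ref{eq-6-38}).

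The final step is the separation argument. Because $\Vert\cdot\Vert_F^2$ of a matrix vector equals the sum of its block squares, the total objective is block-separable, and because $\mathbb{Y}_s=\{\mathcal{Y}_s\,|\,\rank(\mathcal{Y}_s)\le1\}$ with $\rank(\mathcal{Y}_s)=\max_{\mathcal{C}_{ej}}\rank(\bm{Y}_{\mathcal{C}_{ej}})$, the feasible set is the Cartesian product of the per-block rank-$1$ sets $\{\bm{Y}_{\mathcal{C}_{ej}}\,|\,\rank(\bm{Y}_{\mathcal{C}_{ej}})\le1\}$. For a sum of terms each depending on a single block, minimized over such a product set, the joint minimizer is attained block by block; crucially this uses only the product structure and so survives the non-convexity of the rank constraint. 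Reassembling the per-block minimizers into a matrix vector yields (\ref{eq-6-38}) and closes the equivalence. I expect the main obstacle to be purely notational: carefully verifying the matrix-vector identities, namely that $\Tr(\tilde{\mathcal{A}}_s^{T^*}\circ\mathcal{B})=\sum_{\mathcal{C}_{ej}}\Tr(\hat{\bm{\Lambda}}_{\mathcal{C}_{ej}}^{(\kappa)T}\bm{B}_{\mathcal{C}_{ej}})$ and that both the squared Frobenius norm and the rank bound decouple across blocks, so that the square completion and the block-wise separation are simultaneously legitimate.
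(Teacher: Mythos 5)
Your proposal is correct and follows essentially the same route as the paper's proof: isolate the two $\mathcal{Y}_s$-dependent terms of $L_s$, complete the square to obtain $\frac{\tilde{\rho}}{2}\Vert \mathcal{Y}_s - \hat{\mathcal{Z}}_s^{(\kappa+1)} - \frac{1}{\tilde{\rho}}\tilde{\mathcal{A}}_s^{(\kappa)}\Vert_F^2$ plus a constant, and drop the constant and the factor $\tilde{\rho}/2$. Your explicit verification that both the squared Frobenius norm and the constraint set $\mathbb{Y}_s$ decouple clique-by-clique (via $\rank(\mathcal{Y}_s)=\max_{\mathcal{C}_{ej}}\rank(\bm{Y}_{\mathcal{C}_{ej}})$) is left implicit in the paper but is a correct and welcome addition.
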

\begin{proof}
In (\ref{eq-6-36:2}), the objective function is given by
\begin{equation}\label{eq-6-37}\nonumber
    \begin{aligned}
        & L_s(\hat{\mathcal{Z}}_s^{(\kappa + 1)}\!,\! \mathcal{Z}_{\text{a}}^{(\kappa)}\!,\! \mathcal{Y}_s,\! \mathcal{A}_s^{(\kappa)}\!,\! \tilde{\mathcal{A}}_s^{(\kappa)} ) \!=\! \underbrace{ \frac{\tilde{\rho}}{2} \Vert \mathcal{Y}_s \!-\! \hat{\mathcal{Z}}_s^{(\kappa + 1)} \!-\! \frac{1}{\tilde{\rho}} \tilde{\mathcal{A}}_s^{(\kappa)}  \Vert_F^2}_{\mathcal{Y}_s~\text{involved}}  \\
        &\!\!-\! \frac{1}{2 \tilde{\rho}}  \Vert\!  \tilde{\mathcal{A}}_s^{(\kappa)}  \!\Vert 
        \!\!+\!\!  \Tr\!\!\left(\! \tilde{\mathcal{A}}_s^{(\kappa)T^*} \!\!\!\!\circ\! \hat{\mathcal{Z}}_s^{(\kappa + 1)}  \!\!\right)  \!\!+\!\! \frac{\rho}{2} \Vert \!\zeta_s^A\!(\!\hat{\mathcal{Z}}_s^{(\kappa + 1)}\!) \!\!-\!  \zeta_s^B\!(\! \mathcal{Z}_{\text{a}}^{(\kappa)} \!) \Vert_F^2 \\
        &  \!+\!   \Tr\left( \mathcal{A}_s^{(\kappa)T^*} \circ \left(\zeta_s^A(\hat{\mathcal{Z}}_s^{(\kappa + 1)}  ) \!-\!  \zeta_s^B( \mathcal{Z}_{\text{a}}^{(\kappa)} ) \right) \right) + \hat{J}_{s}( \hat{\mathcal{Z}}_s^{(\kappa + 1)} ) 
    \end{aligned}
\end{equation}
and $\mathcal{Y}_s$ is involved only in the first term. Dropping other terms and $\tilde{\rho}/{2}$ in the first term results in (\ref{eq-6-38}).
\end{proof}

\subsection{Parameter Settings}\label{appendix_5}

\begin{table}[h]
	\caption{Parameters of generators.}
    \label{tb_appendix_4_1}
    {\footnotesize{
	\begin{tabular*}{\hsize}{@{}@{\extracolsep{\fill}}lllllll@{}}
	\toprule
     Generator & $\underline{m}_i$ & $\overline{m}_i$ & $\overline{d}_i$ & $\underline{d}_i$  \\
	\midrule
	$\mathcal{N}_{v_o}$ and $\mathcal{N}_g$  & $0.5 \widetilde{m}_i$ &  $0.5 \widetilde{m}_i$ & $0.5 \widetilde{d}_i$ &  $0.5 \widetilde{d}_i$  \\
	$\mathcal{N}_{v_d}$ (low inertia)        & $0.01 \widetilde{m}_i $ &  $0.01 \overline{m}_i$ & $0.01 \widetilde{d}_i$ &  $ \widetilde{d}_i$   \\
    $\mathcal{N}_{v_d}$ (high inertia)       & $0.5 \widetilde{m}_i$ &  $0.5 \widetilde{m}_i$ &  $0.01 \widetilde{d}_i$ &  $ \widetilde{d}_i$   \\
    $\mathcal{N}_{v_m}\!$                    & $0.01 \widetilde{m}_i$ &  $ \widetilde{m}_i$  &  $0.5 \widetilde{d}_i$ &  $ 0.5 \widetilde{d}_i$  \\
    $\!\mathcal{N}_{v_{dm}}$                 & $0.01 \widetilde{m}_i$ &  $ \widetilde{m}_i$  &  $0.01 \widetilde{d}_i$ &  $ \widetilde{d}_i$  \\
	\bottomrule
    \end{tabular*}
    }}
    \small{ Note: $ \widetilde{m}_i \!=\! \frac{ 10 p_{i,max}}{\omega_{syn}}$ with $p_{i,max}$ being maximal steady-state active power output of generator $i$ and $\omega_{syn}$ being the synchronous angular speed. $ \widetilde{d}_i \!=\! \frac{ 2 p_{i,max}}{2 \pi }$. $\overline{p}_g \!=\! -\underline{p}_g = 3 p_{i,max}$ for each generator.}
\end{table}

\begin{table}[h]
    \caption{Frequency bounds.}
    \vspace{-2pt}
    \label{tb_appendix_4_2}
    {\footnotesize{
	\begin{tabular*}{\hsize}{@{}@{\extracolsep{\fill}}llll@{}}
	\toprule
	 Disturbance & $\!\!\!\!$Time interval& $\!\!\!\!\underline{\omega}^k(t)$ & $\!\!\!\!\overline{\omega}^k(t)$  \\
	\midrule
	$\mathcal{D}_1$ and $\mathcal{D}_2$ & $\!\!\!\!\{\![0s,\! 15s),\! [15s, \!30s]\!\}$     & $\!\!\!\!\{\!49.5,49.85\}$   & $\!\!\!\!\{\!50.5, 50.15\!\}$   \\
	$\mathcal{D}_3$        & $\!\!\!\![0s, 30s]$     & $\!\!\!\!49.85$  & $\!\!\!\!50.15$  \\
	$\mathcal{D}_4$    & $\!\!\!\!\{\![0s, 15s),\! [15s,\! 30s]\!\}$     & $\!\!\!\!\{\!49, 49.5\}$      & $\!\!\!\!\{\!51, 50.5\!\}$ \\
	\bottomrule
    \end{tabular*}
    }}
\end{table}

\begin{table}[t]
	\caption{Parameters of disturbances.}
    \label{tb_appendix_4_3}
    {\footnotesize{
	\begin{tabular*}{\hsize}{@{}@{\extracolsep{\fill}}lllll@{}}
	\toprule
	 Test system &  $\mathcal{D}_1$ &  $\mathcal{D}_2$ &  $\mathcal{D}_3$ &  $\mathcal{D}_4$\\
	\midrule
    AU14Gen         & 203   & 508 & 404 & (212, 217) \\
    IEEE 14-bus     & 2   & 9   & 6  & (9, 14)    \\
    IEEE 39-bus     & 32  & 8   & 39 & (17, 27)   \\
    IEEE 118-bus    & 25  & 54  & 89 & (43, 44)   \\
    ACTIVSg200      & 127 & 100 & 155 & (177, 58) \\
	\bottomrule
    \end{tabular*}
    }}
    \small{ Note: The above table shows location of disturbances, with numbers denoting bus number. Test systems are at the equilibrium point at $t = t_0^{-}$, each disturbance occurs at $t=t_0$, and $P_0$ denotes the initial load power or generation power. For disturbances in $\mathcal{D}_1$, step amplitude is set to $-50\%P_0$, where $P_0$ denotes the initial load power for load buses or generation power for generator buses. For disturbances in $\mathcal{D}_2$, height of ramp and duration of ramp are set to  $-50\%P_0$ and 5 s, respectively. Disturbances in $\mathcal{D}_3$ are emulated by a random power disturbance which changes its value randomly at a equal interval being 0.5 s according to a uniform distribution with the interval being $[-20\%P_0, +20\%P_0]$. Disturbances in $\mathcal{D}_4$ are assumed occurring at the middle of the branch, with short circuit resistance being 0, and being cleared by disconnecting the two sides breakers of the branch after 0.1 s. $\rho_k$ is set to $0.15$, $0.15$, $0.6$ and $0.1$ for disturbances in $\mathcal{D}_1$, $\mathcal{D}_2$, $\mathcal{D}_3$ and $\mathcal{D}_4$, respectively.}
\end{table}

\end{document}